\documentclass[11pt]{article}

% The following packages will be automatically loaded:
% amsmath, amssymb, natbib, graphicx, url, algorithm2e

%\usepackage{rotating}% for sideways figures and tables
\usepackage{longtable}% for long tables
\usepackage{microtype}
\usepackage[load-configurations=version-1]{siunitx} % newer version
\usepackage{siunitx}
\usepackage{hyperref}
\usepackage{cite}
\hypersetup{colorlinks,
            linkcolor=black,
            citecolor=blue,
            urlcolor=magenta,
            linktocpage,
            plainpages=false}
\usepackage[margin=1.5in]{geometry}

\usepackage{microtype}
\usepackage{graphicx}
\usepackage{subfigure}
\usepackage{multirow, makecell}
\usepackage{booktabs}

\usepackage{lmodern}
\usepackage[T1]{fontenc}
\usepackage{amsfonts}
\usepackage[utf8]{inputenc}
\usepackage[english]{babel}
\usepackage{amsthm}
\usepackage{algorithm}
\usepackage{algorithmic}
\usepackage{bbold}
\usepackage{microtype}
\usepackage{graphicx}
\usepackage{subfigure}
\usepackage{wrapfig}

\usepackage{amsmath}
\usepackage{bbm}

\DeclareMathOperator*{\argmax}{arg\,max}

\newtheorem{Theorem}{Theorem}

\newtheorem{Corollary}{Corollary}
\newtheorem{Definition}{Definition}
\newtheorem{Proposition}{Proposition}

\newcommand{\R}{\mathbb{R}} %real numbers
\renewcommand{\H}{\mathcal{H}} %RKHS
 %data space x
 %data space y
\newcommand{\Sp}{\mathcal{S}} %sphere S
 %bold math
 %bold symbol
 %Linear map space
 %ball
 %bold text
\renewcommand{\P}{\mathbb{P}} %distribution
\newcommand{\E}{\mathbb{E}} %expectation
\renewcommand{\d}{\mathrm{d}} %dx
 %point in manifold
\newcommand{\M}{\mathcal{M}} %riemannian manifold
 %open subset of riemannian manifold
 %stein feature

 %Stein operator
\newcommand{\A}{\mathcal{A}} %Stein operator

\begin{document}

\title{Interpretable Stein Goodness-of-fit Tests on Riemannian Manifolds
}
\author{Wenkai Xu$^1$ \and Takeru Matsuda$^{2}$}
\date{%
    $^1$Gatsby Computational Neuroscience Unit\\%
    $^2$
    Statistical Mathematics Unit, RIKEN Center for Brain Science\\[2ex]%
}
\maketitle

\begin{abstract}
% Following the "future" work planned from dKSD, here is a first step to our project to study various statistical properties of Stein method on Riemmanian manifold.
In many applications, we encounter data on Riemannian manifolds such as torus and rotation groups.
Standard statistical procedures for multivariate data are not applicable to such data. %; and functional-valued data with intrinsic Riemannian structure.
%Statistical inference and model criticisms on data with complex topology is not yet well-developed. 
In this study, we develop goodness-of-fit testing and interpretable model criticism methods for general distributions on Riemannian manifolds, including those with an intractable normalization constant.
The proposed methods are based on extensions of kernel Stein discrepancy, which are derived from Stein operators on Riemannian manifolds.
% Comparison of several types of Stein operators reveals that the first order Stein operator attains larger power than the maximum mean discrepancy. 
%Based on them, we propose kernel Stein discrepancy between distributions on Riemannian manifolds, which includes the maximum-mean-discrepancy, and compare their testing powers. 
%and criticize which model better describes the data. 
%Experimental results demonstrate that the proposed methods control type-I error well and have larger power than existing tests, including the test based on the maximum mean discrepancy.
%Simulation studies are reported and the empirical results on real data applications are shown.  
%We provide the connections between the proposed tests with existing tests and provide a systematic approach to apply the test.
%We analyse the asymptotic Bahadur efficiency of the proposed tests to theoretically show that under the scaling-shift alternative, the first order test always have greater relative efficiency than the second order test, regardless of parameter choices.
We discuss the connections between the proposed tests with existing ones and provide a theoretical analysis of their asymptotic Bahadur efficiency.
Simulation results and real data applications show the validity of the proposed methods.
\end{abstract}

\section{Introduction}
% \wk{In this work, we aim to extend dKSD to more general Riemannian manifold case and discuss relevant properties on goodness-of-fit tests. As \cite{barp2018riemannian,le2020diffusion} has derived a general second order Stein operator on Riemannian manifold, our contribution of this work will not be focusing on derivation of such operator. Instead, we would like to address the following questions. }

In many scientific and machine learning applications, data appear in the domains described by Riemannian manifolds.
For example, structures of proteins and molecules are described by a pair of angular variables, which is identified with a point on the torus \cite{singh02}.
In computer vision, the orientation of a camera is represented by a $3 \times 3$ rotation matrix, which gives rise to data on the rotation group \cite{song2009hilbert}.
Other examples include the orbit of a comet \cite{jupp1979maximum} and the vectorcardiogram data \cite{downs1972orientation}.
In addition, shape analysis \cite{ dryden2016statistical} and compositional data analysis \cite{pawlowsky2011compositional} also deal with complex data defined on Riemannian manifolds.
{Recently, \cite{klein2020torus} developed a graphical model on torus to analyze {phase coupling} between neuronal activities.}
%moreover, functional variables can also be embedded in Riemannian structure for functional data analysis (FDA) with applications on functional-PCA or functional regression \cite{lin2019intrinsic}. 
% some machine learning application?
Since the usual statistical procedures for Euclidean data are not applicable, many studies have developed statistical models and methods tailored for data on Riemannian manifolds \cite{chikuse2012statistics,mardia99,ley2017modern}.
%For instance, the matrix-Bingham distribution family \cite{chikuse2003concentrated} are powerful to model densities on Riemannian manifold, which is analogous to the exponential family type of distribution for Euclidean data. 

Statistical models on Riemannian manifolds are often given in the form of unnormalized densities with a computationally intractable normalization constant.
For example, the Fisher distribution on the rotation group \cite{chikuse2012statistics,sei2013properties} is defined by
\begin{equation}\label{eq:fisher}
    p(X \mid \Theta) \propto \exp(\mathrm{tr}(\Theta^{\top} X)), 
\end{equation}
and its normalization constant is not given in closed form.
%where $\Theta \in \mathbb{R}^{3 \times 3}$ and $\etr(\cdot)=\exp(\mathrm{tr}(\cdot))$.
% For example, on the Stiefel manifold $V_{k,m}$, the matrix-Bingham distribution \cite{chikuse2003concentrated} can be defined by the following unnormalized density
% \begin{equation}\label{eq:bingham}
%     p(X \mid A, B) \propto \etr(B X^{\top} A X) , \quad X \in V_{k,m}, 
% \end{equation}
% where $A$ is symmetric matrix and $B$ is diagonal matrix.
%As the normalization constant for matrix-Langevin distribution is computationally intractable in general, statistical inference for Riemannian manifold data suffers from high computational cost and low accuracy. 
Statistical inference with such models can become computationally intensive due to the intractable normalization constant.
Thus, statistical methods on Riemannian manifolds that do not require computation of the normalization constant have been developed for several tasks such as parameter estimation \cite{mardia2016score} and sampling \cite{girolami2009riemannian,ma2015complete}.
However, goodness-of-fit testing or model criticism procedures for general distributions on Riemannian manifolds is not established, to the best of our knowledge.
%Compared to the above problems, goodness-of-fit test for general distributions on Riemannian manifolds is not well established, to the best of our knowledge.
%For the uniform distribution, several methods have been proposed such as the Sobolev test \cite{chikuse2004test,gine1975invariant,jupp2008data}.
%For the special case of the uniform distribution, testing procedures were developed by \cite{chikuse2004test,chikuse2012statistics} and they have been extended  the Sobolev test for uniformity \cite{gine1975invariant} based on estimating the model parameters \cite{jupp2005sobolev, jupp2008data}. 
%Methods for specific distributions such as Fisher and Bingham have been also developed as reviewed in Section 6 of \cite{mardia99}. 
%However, these methods are not applicable to general cases. 
%To overcome this, the Sobolev test has been extended to general cases \cite{jupp2005sobolev} and transformation based test via sliced cumulative distribution function has been developed \cite{jupp2018measures}, 
%Although several methods of goodness-of-fit testing for general distributions have been proposed \cite{jupp2005sobolev,jupp2018measures}, the existing methods are not computationally efficient for unnormalized models because they require to compute the normalization constant.
%Furthermore, they suffer from parameter estimation error or complicated procedure to determine transformation functions.

Kernel Stein discrepancy (KSD) \cite{gorham2015measuring, ley2017stein} is a discrepancy measure between distributions 
% that is 
based on Stein's method \cite{barbour2005introduction,chen2010} and reproducing kernel Hilbert space (RKHS) theory \cite{RKHSbook}. 
KSD provides a general procedure for 
% model fitting \cite{barp2019minimum} and 
goodness-of-fit testing that does not require computation of the normalization constant, and it has shown state-of-the-art performance in various scenarios including Euclidean data \cite{chwialkowski2016kernel,liu2016kernelized}, discrete data \cite{yang2018goodness}, point processes \cite{yang2019stein}, censored data \cite{tamara2020kernelized} and directional data \cite{xu2020stein}. 
In addition, by using the technique of optimizing test power \cite{gretton2012optimal,sutherland2016generative}, KSD-based testing procedures also enable extraction of distributional features to perform model criticism \cite{jitkrittum2017linear,jitkrittum2018informative, kanagawa2019kernel, jitkrittum2020testing}. 
%This method is also applicable to model comparison \cite{jitkrittum2018informative,jitkrittum2017linear,kanagawa2019kernel}.
%Recently, it has been extended to discrete distributions \cite{yang2018goodness} and point processes \cite{yang2019stein}.
We note that Stein's method has recently been extended to Riemannian manifolds and applied to numerical integration \cite{barp2018riemannian} and Bayesian inference \cite{liu2018riemannian}.

In this paper, we develop goodness-of-fit testing and interpretable model criticism methods for general distributions on Riemannian manifolds.
After briefly reviewing background topics, we first introduce several types of Stein operators on Riemannian manifolds by using Stokes' theorem. 
%the first order operator generalized from \cite{xu2020stein} and the second order operator adapted from \cite{barp2018riemannian, le2020diffusion}. 
%By using them, we define several types of kernel Stein discrepancy on Riemannian manifolds, which includes the maximum mean discrepancy \cite{gretton2007kernel} as a special case.
Then, we define manifold kernel Stein discrepancies (mKSD) based on them and propose goodness-of-fit testing procedures, which do not require computation of the normalization constant. 
{We also develop mKSD-based interpretable model criticism procedures.}
%{Despite these Stein operators may be similarly defined previously, our key contribution lines into studying the connections and comparisons of these operators in the goodness-of-fit testing context.}
Theoretical comparisons of test performance in terms of Bahadur efficiency are provided, {and simulation results validate the claims}. 
% Interpretable model criticism method are proposed to identify locations where the model fails to fit the data. 
Finally, we provide real data applications to demonstrate the usefulness of the proposed methods. 
% \iffalse
% Simulation results and real data applications show the validity and usefulness of the proposed methods.
% \fi
% This paper is organized as follows.
% We begin our presentation by giving a short reivew on Riemannain manifold and KSD in Section \ref{sec:background}. 
% In Section \ref{sec:stein_operator}, we derive the Stein operators for Riemannian manifold and discuss their differences. In Section \ref{sec:gof}, we propose goodness-of-fit testing procedures based on manifold-KSD(mKSD) and compare their relative test efficiencies in Section \ref{sec:efficiency}. In Section \ref{sec:exp}, we show test performances on synthetic data as well as real data applications. 
% Contribution of this paper.
% \begin{itemize}
%     \item[1] Proposed the test statistics based on 2nd order operator and discuss its necessity. Study the conditions where 1st order operator may apply.
%     \item[2] Study the effect of different Stein operators in goodness-of-fit test problem 
%     \item[3] Discuss the test efficiency w.r.t. the choice of coordinate systems. 
%     \item[4] (may merge with 2) Study the choice of RKHS functions and its effect on relevant KSD. 
%     \item[5] We first relate MMD and KSD to study relevant properties of tests.
% \end{itemize}

\section{Background}\label{sec:background}
\subsection{Distributions on Riemannian Manifolds}
% A Riemannian manifold describes a topological manifold with additional geometric structure called Riemannian metric, a smoothly varying inner product on tangent space on manifold. 
% A manifold is said to be compact if its underlying topological space is compact.

% \paragraph{Notations} 
In this paper, we focus on distributions on a smooth Riemannian manifold $(\M,g)$, where $g$ is a {Riemannian metric} on $\M$\footnote{In this paper, $\M$ may have non-empty boundary $\partial M$.}.
%where the {Riemannian metric} $g$ is a covariant 2-tensor field on $\M$ whose value at each point $\p \in \M$ defines a positive definite inner product on the tangent space $T_{\p}\M$ at $\p$. 
See \cite{lee2018introduction} for details on Riemannian geometry.
%\emph{Riemannian metric} $g$ is a covariant 2-tensor field $g \in \mathrm{T}^2(\M)$ whose value at each point $\p \in \M$ is P.S.D. inner product on tangent space $T_{\p}\M$. 
%Recall that a \textbf{Riemannian manifold} is a pair ($\M$, $g$) where $\M$ is the smooth manifold and $g$ is the specific choice of Riemannian metric on $\M$. 
%We know from \cite{lee2018introduction} that every smooth manifold admits a Riemannian metric. 
%In this paper, we use manifold to refer a compact Riemannian manifold when no ambiguity arises.
Here, we give several examples that will be used in experiments.
Note that we define the probability density of each distribution by its Radon--Nikodym derivative with respect to the volume element of $(\M,g)$.

\paragraph{Torus}
%The torus $S_1 \times S_1$ is the direct product of two circles $S_1$ and bivariate circular data $x_1,x_2 \in [0,2\pi)$ can be viewed as data on the torus $S_1 \times S_1$, which is the direct product of two circles $S_1$.
Bivariate circular data $(x_1,x_2) \in [0,2\pi)^2$ can be viewed as data on the torus $\Sp_1 \times \Sp_1$, where we identify $(\cos x,\sin x) \in \Sp_1$ with $x \in [0,2\pi)$.
To describe dependence between circular variables, \cite{singh02} proposed the bivariate von-Mises distribution:
\begin{align}
{p}(x_1,x_2 \mid \xi) \propto & \exp (\kappa_1 \cos(x_1-\mu_1) + \kappa_2 \cos(x_2-\mu_2) \nonumber \\
&+ \lambda_{12} \sin(x_1-\mu_1) \sin(x_2-\mu_2) ), \label{bvM}
\end{align}
where $\xi=(\kappa_1,\kappa_2,\mu_1,\mu_2,\lambda_{12})$, $\kappa_1 \geq 0$, $\kappa_2 \geq 0$, $0 \leq \mu_1 <2\pi$ and $0 \leq \mu_2 <2\pi$.
Its normalization constant is not represented in closed form.
We will apply this model to wind direction data in Section \ref{sec:exp}.

% \paragraph{Stiefel manifold}
% The Stiefel manifold $V_{k,m}$ with $k \leq m$ is defined as
% \[
%     V_{k,m} = \{ X \in \mathbb{R}^{m \times k} \mid X^{\top} X = I_k \},
% \]
% where $I_k$ is the $k$-dimensional identity matrix.
% Note that $V_{1,m}$ and $V_{m,m}$ coincide with the hypersphere $S_{m-1}$ and the {\bf rotation group} $SO(m)$, respectively.

%Distributions can be define on manifold incorporating. 
%The commonly studied distribution on (compact) manifold is the uniform distribution:
% The uniform distribution on $\M$ is given by the volume element:
% \[
% %p(X) \propto 1, \quad X \in \M.
% p(X) \propto (\det g)^{1/2}.
% \]
% For the Stiefel manifold, it is given by.

% The Gaussian type of distribution with mean and variance notion are discussed in Stiefel or Grassman manifold \cite{chakraborty2019statistics} 
% \[
% p(X) \propto \exp{(-\frac{d^2(X, U)}{2\sigma^2})}, \quad X \in \M.
% \]
% where $d(\cdot,\cdot)$ denotes the geodesic distance embedded in the manifold, $U\in \M$ is the Fr\'echet mean and $\sigma$ is the scaling parameter. 

\paragraph{Rotation group}
The rotation group 
$\operatorname{SO(m)}$ 
is defined as
\[
   \operatorname{ SO(m)} = \{ X \in \mathbb{R}^{m \times m} \mid X^{\top} X = I_m, \det X = 1 \},
\]
where $I_m$ is the $m$-dimensional identity matrix.
The Fisher distribution \cite{chikuse2012statistics,sei2013properties} on $\operatorname{SO(m)}$ is defined as
$$
    p(X \mid \Theta) \propto \exp(\mathrm{tr}(\Theta^{\top}X)),
$$
for which the normalization constant is not given in closed form.
%A more general form of exponential-trace type of distribution is further discussed for matrix-valued manifold \cite{chikuse2003concentrated, hoff2009simulation}. 
We will apply this model to vectorcardiogram data in Section \ref{sec:exp}.

% The matrix-Bingham distribution can take the for in Eq.\eqref{eq:bingham}. 
% The more general form combining first and second order statistics, the Bingham-von-Mises-Fisher (BMF) distribution:
% $$
%     p(X \mid A, B, F) \propto \etr(F^{\top}X + B X^{\top} A X) , \quad X \in V_{k,m}, 
% $$
% are also studied \cite{hoff2009simulation}.

%\subsection{Statistics on Riemannian Manifolds}
%Since the distributions on Riemannian manifolds often involve a computationally intractable normalization constant, several statistical methods have been developed that do not require to compute the normalization constant.
% For parameter estimation, score matching \cite{hyvarinen2005estimation} has been extended to Riemannian manifolds \cite{mardia2016score,Mardia2018}. 
% For sampling, MCMC techniques on Riemannian manifolds have been developed \cite{girolami2009riemannian,ma2015complete,hoff2019package}.

The goodness-of-fit testing for general distributions on Riemannian manifolds is not established, to the best of our knowledge.
%For the special case of the uniform distribution, testing procedures were developed by \cite{chikuse2004test,chikuse2012statistics} and they have been extended  the Sobolev test for uniformity \cite{gine1975invariant} based on estimating the model parameters \cite{jupp2005sobolev, jupp2008data}. 
%Methods for specific distributions such as Fisher and Bingham have been also developed as reviewed in Section 6 of \cite{mardia99}. 
%However, these methods are not applicable to general cases. 
%To overcome this, the Sobolev test has been extended to general cases \cite{jupp2005sobolev} and transformation based test via sliced cumulative distribution function has been developed \cite{jupp2018measures}, 
For tests of uniformity, several methods have been proposed such as the Sobolev test \cite{chikuse2004test,gine1975invariant,jupp2008data}.
However, they are not readily applicable to general disributions.
Although there are a few methods applicable to general distributions \cite{jupp2005sobolev,jupp2018measures}, they require computation of the normalization constant, which is often computationally intensive.
%Furthermore, they suffer from parameter estimation error or complicated procedure to determine transformation functions.
In addition, existing testing procedures cannot be applied to perform interpretable model criticism \cite{jitkrittum2016interpretable,kim2016examples,lloyd2015statistical}, which would provide an intuitive clarification of the discrepancy between the model and data.

\subsection{Kernel Stein Discrepancy on $\mathbb{R}^d$}

Here, we briefly review the goodness-of-fit testing with kernel Stein discrepancy on 
$\mathbb{R}^{d}$.
% \cite{chwialkowski2016kernel,liu2016kernelized}, which is inspired from \cite{gorham2015measuring}.
%The main idea is to combine Stein method and reproducing kernel Hilbert space (RKHS) theory.
See \cite{chwialkowski2016kernel,liu2016kernelized} for more detail.

Let $q$ be a smooth probability density on $\mathbb{R}^d$.
For a smooth function $\mathbf{f}=(f_1,\dots,f_d):\mathbb{R}^d \to \mathbb{R}^d$, the Stein operator $\mathcal{T}_q$ is defined by 
\begin{align}
%\mathcal{T}_q f(x) &= f(x)^{\top} \nabla_x \log q(x)+\nabla_x \cdot f(x) \nonumber \\
\mathcal{T}_q \mathbf{f}(x)&=\sum_{i=1}^d \left( f_i(x) \frac{\partial}{\partial x^i} \log q(x) + \frac{\partial}{\partial x^i} f_i(x) \right).
\label{eq:steinRd}
\end{align}
From integration by parts on $\mathbb{R}^d$, we obtain the equality, i.e. the Stein's identity
${\E}_q [\mathcal{T}_q \mathbf{f}]=0,$
under mild regularity conditions.
%Conversely, if the above equality holds for sufficiently large function class, then we have $q=p$.
%This is the basis of the kernel Stein test.
%A real-valued operator $\mathcal{A}_p$ defined on a class of function $f\in \mathcal{F}$ is said to be a Stein operator if it satisfes the Stein identity:
% \begin{equation}
% \mathbb{E}_{p}[\mathcal{A}_p f]=0\label{eq:stein identity}
% \end{equation}
%and we say the class of function $\mathcal{F}$ is the Stein class of density $p$. 
%Such operator is well-defined if the domain of $f$ is in $\R^d$ \cite{gorham2015measuring}, 
Since Stein operator $\mathcal{T}_q$ depends on the density $q$ only through the derivatives of $\log q$, it does not involve the normalization constant of $q$, which is 
a useful property 
for dealing with unnormalized models \cite{hyvarinen2005estimation}. 

%Using the Stein operator and imposing $\mathcal{F}$ to be the unit ball of some Reproducing Kernel Hilbert Space (RKHS), 
Let $\mathcal{H}$ be a reproducing kernel Hilbert space (RKHS) on $\mathbb{R}^d$ and $\mathcal{H}^d$ be its product.
By using Stein operator, kernel Stein discrepancy (KSD) \cite{gorham2015measuring, ley2017stein} between two densities $p$ and $q$ is defined as 
%\begin{equation}
$$\textnormal{KSD}(p\|q) =\sup_{\|\mathbf{f} \|_{\mathcal{H}^d} \leq 1} \E_{p}[\mathcal{T}_q \mathbf{f}].$$ %-\mathbb{E}_{p}[\mathcal{A}_p f]|
%\label{eq:ksd}
%\end{equation}

%to perform the goodness-of-fit test for data on Euclidean space. 
%However, defining the Stein operator when the domain of $f$ is in $\Sp^{d-1}$ is not yet clear. And direct application of existing test on directional data failed badly, by rejecting the null all the time, as the test procedure compares the data on a hollow unit radius sphere to distribution on a (possibly infinite radius) solid ball.  
% Consider RKHS with kernel.
% Then, kernel Stein discrepancy is defined as
% \[
% \mathrm{KSD}(p\|q) = \sup {\rm E}_q [\mathcal{T}_p f].
% \]
% Since, goodness-of-fit testing is equivalent to testing $KSD(p\|q)=0$.
It is shown that $\mathrm{KSD}(p\|q) \geq 0$ and $\mathrm{KSD}(p\|q) = 0$ if and only if $p=q$ under mild regularity conditions \cite{chwialkowski2016kernel}.
Thus, KSD is a proper discrepancy measure between densities.
After some calculation, $\mathrm{KSD}(p\|q)$ is rewritten as
\begin{align}
\mathrm{KSD}^2(p\|q) = {\E}_{x,\tilde{x} \sim p} [h_q(x,\tilde{x})], \label{eq:KSDequiv}
\end{align}
where $h_q$ does not involve $p$.

% Now, suppose we have
Given samples  $x_1,\dots,x_n$ from \emph{unknown} density $p$ on $\mathbb{R}^d$,
% Then,
an empirical estimate of $\mathrm{KSD}^2(p\|q)$ can be obtained by using Eq.\eqref{eq:KSDequiv} in the form of U-statistics, and this estimate is used to test the hypothesis $H_0: p=q$, where the critical value is determined by bootstrap.
In this way, a general method of non-parametric goodness-of-fit test on $\mathbb{R}^d$ is obtained, which does not require computation of the normalization constant.

\section{Stein Operators on $\M$}\label{sec:stein_operator}
In this section, we introduce several types of Stein operators for distributions on Riemannian manifolds by using Stokes' theorem. 
The operators are categorized via the order of differentials of the input functions\footnote{Note that this should be distinguished from the differentials of the (unnormalized) density functions.}. 

\subsection{Differential Forms and Stokes' Theorem}\label{subsec:stokes}
To derive Stein operators on Riemannian manifolds, we need to use differential forms and Stokes' theorem.
Here, we briefly introduce these concepts. 
%{with notations introduced in \cite{xu2020stein} when discussing directional distributions on hypersphere\footnote{Hyperspheres can be seen as Riemannian manifolds with empty boundaries.}.}
For more detailed and rigorous treatments, see \cite{flanders, lee2018introduction,spivak2018calculus}.

Let $\M$ be a smooth $d$-dimensional Riemannian manifold and take its local coordinate system $x^1,\dots,x^d$.
We introduce symbols ${\rm d}x^1,\dots,{\rm d}x^d$ and an associative and anti-symmetric operation $\wedge$ between them called the wedge product: ${\rm d}x^i \wedge {\rm d}x^j = -{\rm d}x^j \wedge {\rm d}x^i$.
Note that ${\rm d}x^i \wedge {\rm d}x^i = 0$.
Then, a $p$-form $\omega$ on $M$ ($0 \leq p \leq d$) is defined as
\[
    \omega = \sum_{i_1 \cdots i_p} f_{i_1 \cdots i_p} {\rm d} x^{i_1} \wedge \dots \wedge {\rm d} x^{i_p},
\]
where the sum is taken over all $p$-tuples $\{i_1, \cdots, i_p \} \subset \{1,\dots,d \}$ and each $f_{i_1 \cdots i_p}$ is a smooth function on $\M$.
%For a $p$-form $\omega$ above and a $q$-form $\eta=\sum_{j_1 \cdots j_q} g_{j_1 \cdots j_q} {\rm d} x^{j_1} \wedge \dots \wedge {\rm d} x^{j_q}$ with $p+q \leq d$, their wedge product $\omega \wedge \eta$ is defined as the $(p+q)$-form given by 
% \[
%      \sum_{i_1 \cdots i_p} \sum_{j_q \cdots j_q} f_{i_1 \cdots i_p} g_{j_1 \cdots j_q} {\rm d} x^{i_1} \wedge \cdots \wedge {\rm d} x^{i_p} \wedge {\rm d} x^{j_1} \wedge \cdots \wedge {\rm d} x^{j_q}.
% \]
The exterior derivative ${\rm d} \omega$ of $\omega$ is defined as the $(p+1)$-form given by
\[
    {\rm d} \omega = \sum_{i_1 \cdots i_p} \sum_{i=1}^d \frac{\partial f_{i_1 \cdots i_p}}{\partial x^i} {\rm d} x^i  \wedge {\rm d} x^{i_1} \wedge \dots \wedge {\rm d} x^{i_p}.
\]
% where ${\rm d} f$ for a function $f$ is the 1-form defined by
% \[
%     {\rm d} f = \sum_{i=1}^d \frac{\partial f}{\partial x^i} {\rm d} x^i.
% \]
For another coordinate system $y^1,\dots,y^d$ on $\M$, the differential form is transformed by
% \begin{align*}%\label{eq:coord_transform}
$${\rm d} y^j = \sum_{i=1}^d \frac{\partial y^j}{\partial x^i} {\rm d} x^i.$$
% \end{align*}
% For example, t
The volume element is defined as the $d$-form given by
\begin{align*}
(\det g)^{1/2} {\rm d} x^1 \wedge \dots \wedge {\rm d} x^d,
\end{align*}
where $g=g(x^1,\dots,x^d)$ is the $d \times d$ matrix of the Riemannian metric with respect to $x^1,\dots,x^d$.

The integration of a $d$-form on a $d$-dimensional manifold is naturally defined like the usual integration on $\mathbb{R}^d$ and invariant with respect to the coordinate selection.
Correspondingly, the integration by parts formula on $\mathbb{R}^d$ is generalized in the form of Stokes' theorem.

\begin{Proposition}[Stokes' theorem]\label{thm:stoke's}
Let $\partial \M$ be the boundary of $\M$ and $\omega$ be a $(d-1)$-form on $\M$.
Then,
\begin{align*}%\label{eq:stokes}
\int_\M \d\omega=\int_{\partial \M}\omega.
\end{align*}
\end{Proposition}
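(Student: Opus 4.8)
The plan is to reduce the global statement to a local computation via a partition of unity and then invoke the one-dimensional fundamental theorem of calculus. First I would cover $\M$ by coordinate charts, each modeled either on an open subset of $\R^d$ (an \emph{interior} chart) or on an open subset of the closed half-space $\mathbb{H}^d = \{x \in \R^d : x^d \ge 0\}$ (a \emph{boundary} chart, in which $\partial\M$ corresponds to $\{x^d = 0\}$). Choosing a smooth partition of unity $\{\rho_\alpha\}$ subordinate to this cover and writing $\omega = \sum_\alpha \rho_\alpha \omega$, both sides of the claimed identity are linear in $\omega$ and the sum is locally finite (here the compact-support / decay regularity on $\omega$ is what lets us discard any contribution ``from infinity''), so it suffices to prove the identity when $\omega$ is supported inside a single chart.

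Next, in a single chart with coordinates $x^1,\dots,x^d$, I would write the $(d-1)$-form as $\omega = \sum_{j=1}^d f_j \, \d x^1 \wedge \cdots \wedge \widehat{\d x^j} \wedge \cdots \wedge \d x^d$, where the hat denotes omission. Applying the definition of the exterior derivative from Section~\ref{subsec:stokes} together with $\d x^i \wedge \d x^i = 0$ and anti-symmetry of $\wedge$, only the $i=j$ term survives in each summand, so $\d\omega = \big( \sum_{j=1}^d (-1)^{j-1} \tfrac{\partial f_j}{\partial x^j} \big)\, \d x^1 \wedge \cdots \wedge \d x^d$. Integrating over the chart and using Fubini, the inner $x^j$-integral of $\partial f_j/\partial x^j$ is evaluated by the fundamental theorem of calculus.

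For an interior chart, each coordinate ranges over all of $\R$ on $\mathrm{supp}(\omega)$, so every such one-dimensional integral vanishes; hence $\int_\M \d\omega = 0$, and $\int_{\partial\M}\omega = 0$ as well since $\mathrm{supp}(\omega) \cap \partial\M = \emptyset$. For a boundary chart, the coordinates $x^1,\dots,x^{d-1}$ still range over $\R$, killing those terms, while the $x^d$-integral runs over $[0,\infty)$ and the fundamental theorem of calculus leaves a single boundary term $\pm\int_{\{x^d=0\}} f_d(x^1,\dots,x^{d-1},0)\, \d x^1 \cdots \d x^{d-1}$. On the other side, pulling $\omega$ back to $\partial\M$ annihilates every summand containing $\d x^d$, leaving exactly $f_d(x^1,\dots,x^{d-1},0)\,\d x^1 \wedge \cdots \wedge \d x^{d-1}$, so the two boundary integrals agree once $\partial\M$ carries the induced (``outward-normal-first'') orientation. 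Summing over $\alpha$ recovers the global identity.

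The main obstacle is purely bookkeeping: fixing the orientation convention on $\partial\M$ so that the sign emerging from the fundamental theorem of calculus in the $j=d$ term matches the sign produced by the induced orientation, and verifying carefully that the pullback of $\omega$ to $\partial\M$ isolates precisely the $f_d$-component. Once the orientation conventions are pinned down, every remaining step is a routine change of variables or application of the one-dimensional fundamental theorem of calculus, and the coordinate-invariance of the integral of a top-form (noted in Section~\ref{subsec:stokes}) guarantees that the chartwise computations glue together consistently.
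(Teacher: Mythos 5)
The paper does not prove this proposition: it is stated as classical background (Section~\ref{subsec:stokes}) and delegated to the cited references on differential forms, so there is no in-paper argument to compare against. Your proposal is the standard textbook proof of Stokes' theorem --- partition of unity subordinate to interior and half-space charts, the coordinate computation $\d\omega = \sum_j (-1)^{j-1}\,\partial f_j/\partial x^j\,\d x^1\wedge\cdots\wedge\d x^d$, Fubini plus the one-dimensional fundamental theorem of calculus, and the induced-orientation sign check on the $j=d$ term --- and it is correct as outlined. Two small points worth making explicit: the theorem as stated needs either $\M$ compact or $\omega$ compactly supported (you note this parenthetically; the paper leaves it implicit, which is harmless since its manifolds of interest, such as the torus and $\operatorname{SO(m)}$, are compact), and orientability of $\M$ must be assumed for $\int_\M$ of a $d$-form to be defined at all, which your appeal to coordinate-invariance of the integral quietly relies on.
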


\begin{Corollary}\label{cor:stokes}
%Let $\omega$ be a $(d-1)$-form on $\M$.% and satisfy Assumption \ref{assum:boundary}.
% Let $f_1,\dots,f_d$ be smooth functions on $\M$ and define a $(d-1)$-form $\omega$ on $\M$ by
% \begin{equation}\label{eq:differential_form}
% \omega = \sum_{i=1}^{{d}} f_i {\rm d} {x}^{(-i)},
% \end{equation}
% where $\d x^{(-i)} =\d x^{i+1} \wedge \cdots \wedge \d x^{d} \wedge \d x^1 \cdots\ \d x^{i-1}$. 
If $\partial \M$ is empty, then
$
\int_{\M} \d\omega=0
$
for any $(d-1)$-form $\omega$ on $\M$.
\end{Corollary}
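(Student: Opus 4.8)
The plan is to obtain this as an immediate consequence of Stokes' theorem (Proposition \ref{thm:stoke's}). First I would note that since $\omega$ is a $(d-1)$-form on the $d$-dimensional manifold $\M$, its exterior derivative $\d\omega$ is a $d$-form, so the left-hand side $\int_\M \d\omega$ is the integration of a $d$-form over a $d$-dimensional manifold and is therefore well-defined. Proposition \ref{thm:stoke's} then gives the identity $\int_\M \d\omega = \int_{\partial \M} \omega$.

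Next I would invoke the hypothesis $\partial \M = \emptyset$. The right-hand side $\int_{\partial \M} \omega$ is an integral over the empty set, which by convention equals $0$. Chaining the two facts yields $\int_\M \d\omega = 0$, as claimed.

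There is essentially no obstacle here: the only point worth stating explicitly is the convention that integration over the empty manifold vanishes, together with the observation that the degree bookkeeping is consistent (a $(d-1)$-form has a $d$-form as its exterior derivative). The argument is a one-line specialization of Proposition \ref{thm:stoke's} and requires no further computation.
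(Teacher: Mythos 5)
Your argument is correct and is exactly the route the paper intends: the corollary is an immediate specialization of Proposition~\ref{thm:stoke's}, with $\int_{\partial\M}\omega = 0$ because the boundary is empty. Nothing further is needed.
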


%\begin{Assumption}\label{assum:boundary}
% Let $f_i: \M \to \R, \forall i$ be scalar functions and let $\omega$ be a $(d-1)$-form on $\M$ with the form:
% \begin{equation}\label{eq:differential_form}
% \omega = \sum_{i=1}^{{d}} f_i {\rm d} {x}^{(-i)},
% \end{equation}
% where $\d x^{(-i)} =\d x^{i+1} \wedge \cdots \wedge \d x^{d} \wedge \d x^1 \cdots\ \d x^{i-1}$. 
% We assume, for the boundary $\partial \M$,
% \[
%$\int_{\partial \M}\omega = 0.$
% \]
%\end{Assumption}

%If the boundary is empty, the assumption satisfied automatically. 
% If the boundary is non-empty, the condition can be satisfied by imposing appropriate RKHS test function, e.g. $f_i(\partial \M)=0$\footnote{The $0$ scalar function is enough for the condition to satisfy, regardless of its direction ${\rm d} {x}^{(-i)}.$}, $\forall i$. 
% A relevant discussion can be found in studies for density estimation on truncated domain \cite{liu2019estimating}.
% Note that this assumption serves the similar purpose as Assumption 4 in \cite{barp2018riemannian}.
%From Theorem \ref{thm:stoke's}, we have the following property, which is a key to derive Stein operators on $\M$.

%With the differential forms and Corollary \ref{cor:stokes}, we introduce the following Stein operator on $\M$.

%\subsubsection*{Almost Everywhere Coordinate System}
% Sometimes, it is demanding acquire $d \times d$ matrix of $\Delta$ at every single point in $\M$ and compute the inverse. However, 
\paragraph{Coordinate choice}
%Differential forms and their integrals are defined to be invariant with respect to the choice of coordinate systems.
%Thus, we take a coordinate system that would facilitate the derivation as well as computation of Stein operators.
In the following, to facilitate the derivation as well as computation of Stein operators, we assume that there exists a coordinate system $\theta^1,\dots,\theta^d$ on $\M$ that covers $\M$ almost everywhere.
%Namely, all points of $\M$ are uniquely specified by $\theta^1,\dots,\theta^d$ except for a subset of measure zero.
%In many Riemannian manifold cases, removing a set of points of measure $0$ may yield coordinate system that is easy to compute. 
%For instance, in hypersphere, \cite{xu2020stein} uses spherical coordinate system, which can be seen as  ``Euclidean like" by removing a point such as north pole (single point is of measure zero for smooth densities).  
%This is to say there is a submanifold of $\M$ whose difference to $\M$ is a set of measure zero and is topologically equivalent to Euclidean space.
%Many Riemannian manifolds satisfy this assumption.
For example, spherical coordinates for the hyperspheres and torus, generalized Euler angles \cite[Section 2.5.1]{chikuse2012statistics} for the rotation groups, and Givens rotations \cite{pourzanjani2017general} for the Stiefel manifolds satisfy this assumption.

\subsection{First Order Stein Operator}\label{subsec:first_order}

For a smooth probability density $q$ on $\M$ and a smooth function $\mathbf{f} = (f^1,\dots,f^{d}):\M \to \mathbb{R}^d$,
%consider a coordinate system $(\theta^1, \dots, \theta^d)$. %and assume conditions in Assumption \ref{assum:boundary} hold. 
define a function ${\mathcal{A}}^{(1)}_{q}\mathbf{f}:\M \to \mathbb{R}$ by
\begin{equation}\label{eq:stein1_ae_coord}
{\mathcal{A}}^{(1)}_{q} \mathbf f = \sum_{i=1}^{d} \left( \frac{\partial f^i}{\partial {\theta}^i} + f^i \frac{\partial}{\partial {\theta}^i} \log (qJ) \right),
\end{equation}
where $J = (\det g)^{1/2}$ is the volume element. %Then,
We refer to ${\mathcal{A}}^{(1)}_{q}$ as the first order Stein operator.
Note that \cite{xu2020stein} utilized this operator for goodness-of-fit testing on hyperspheres.

\begin{Theorem}\label{thm:stein1_global_coord}
If $\partial \M$ is empty or $f^1,\dots,f^d$ vanish on $\partial M$, then
\begin{align*}
{\E}_q [{\mathcal{A}}^{(1)}_{q} \mathbf f] = 0.
\end{align*}
\end{Theorem}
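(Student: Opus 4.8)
The plan is to rewrite the expectation $\E_q[{\mathcal A}^{(1)}_q \mathbf f]$ as the integral over $\M$ of an exact $d$-form, and then invoke Stokes' theorem (Proposition~\ref{thm:stoke's}) together with Corollary~\ref{cor:stokes}. I would work throughout in the coordinate system $\theta^1,\dots,\theta^d$ that covers $\M$ up to a set of measure zero. Since $q$ is by definition the density with respect to the volume element $J\,\d\theta^1\wedge\cdots\wedge\d\theta^d$ with $J=(\det g)^{1/2}$, we have
\begin{align*}
\E_q[{\mathcal A}^{(1)}_q \mathbf f] = \int_\M \left( \sum_{i=1}^d \Big( \frac{\partial f^i}{\partial\theta^i} + f^i \frac{\partial}{\partial\theta^i}\log(qJ) \Big) \right) qJ \, \d\theta^1\wedge\cdots\wedge\d\theta^d .
\end{align*}
The key algebraic point is the log-derivative identity $\big(\partial_{\theta^i}\log(qJ)\big)\,qJ = \partial_{\theta^i}(qJ)$, which collapses the integrand into a coordinate divergence:
\begin{align*}
\sum_{i=1}^d \Big( \frac{\partial f^i}{\partial\theta^i}\, qJ + f^i \frac{\partial(qJ)}{\partial\theta^i} \Big) = \sum_{i=1}^d \frac{\partial}{\partial\theta^i}\big( f^i q J \big).
\end{align*}

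Next I would introduce the $(d-1)$-form
\begin{align*}
\omega = \sum_{i=1}^d (-1)^{i-1}\, f^i q J \, \d\theta^1\wedge\cdots\wedge\widehat{\d\theta^i}\wedge\cdots\wedge\d\theta^d,
\end{align*}
where $\widehat{\d\theta^i}$ indicates that the $i$-th factor is omitted. By the definition of the exterior derivative given in Section~\ref{subsec:stokes}, the wedge products with repeated factors vanish and one is left with $\d\omega = \big(\sum_{i=1}^d \partial_{\theta^i}(f^i q J)\big)\, \d\theta^1\wedge\cdots\wedge\d\theta^d$, so that $\E_q[{\mathcal A}^{(1)}_q \mathbf f] = \int_\M \d\omega$. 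Applying Stokes' theorem gives $\int_\M \d\omega = \int_{\partial\M}\omega$; when $\partial\M$ is empty this is zero by Corollary~\ref{cor:stokes}, and when each $f^i$ vanishes on $\partial\M$ the form $\omega$ vanishes there, so the boundary integral is again zero. This also reduces to the usual integration-by-parts derivation of Stein's identity on $\mathbb{R}^d$ when $g$ is Euclidean and $J\equiv 1$.

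The main obstacle is that $\omega$ as written is defined only on the almost-everywhere coordinate chart, whereas Proposition~\ref{thm:stoke's} is a statement about a globally defined $(d-1)$-form on $\M$, so one must check that the missing measure-zero ``seams'' of the chart contribute nothing. For the concrete manifolds used in the paper (tori and hyperspheres in spherical coordinates, rotation groups via generalized Euler angles, Stiefel manifolds via Givens rotations) this is handled either by the periodicity of the angular coordinates, which makes the fluxes of $f^i q J$ across the identified seams cancel, or by the vanishing of $J=(\det g)^{1/2}$ along the coordinate singularities; in general one exhausts the chart by compact subdomains, applies Stokes on each, and passes to the limit, using the integrability of $\partial_{\theta^i}(f^i q J)$ that is implicit in the smoothness and regularity hypotheses. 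I would package this as a mild regularity condition and verify it for the examples, rather than dwell on it in the main argument.
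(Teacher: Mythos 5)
Your proposal is correct and follows essentially the same route as the paper: both rewrite $qJ\,\mathcal{A}^{(1)}_q\mathbf f$ as the coordinate divergence $\sum_i \partial_{\theta^i}(f^i qJ)$, realize it as the exterior derivative of the $(d-1)$-form $\sum_i f^i qJ$ wedged against the omitted coordinate differentials (the paper's cyclically ordered $\d\theta^{(-i)}$ playing the role of your signed $\widehat{\d\theta^i}$ terms), and conclude via Stokes' theorem and Corollary~\ref{cor:stokes}. Your closing remarks on the measure-zero seams of the almost-everywhere chart address a point the paper leaves implicit, but they do not change the argument.
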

% The proof can be adapted from the later stage of the proof of Theorem \ref{thm:stein1} and with the differential forms w.r.t. almost everywhere coordinate $\d \theta$, i.e. $\int_{\M}g q J \d \theta = \int_{\widetilde{\M}}g q J \d \theta$ for regular functions $g$.
% \begin{proof}
% Let
% \[
% \omega = \sum_{i=1}^d f^i \d \theta^{(-i)},
% \]
% where $\d \theta^{(-i)} =\d \theta^{i+1} \wedge \cdots \wedge \d \theta^{d} \wedge \d \theta^1 \cdots\ \wedge \d \theta^{i-1}$ for $i=1,\dots,d$.
% Then, 
% \begin{align*}
% \d(qJ\omega ) &= \sum_{i=1}^d \left( \frac{\partial f^i}{\partial {\theta}^i} + f^i \frac{\partial}{\partial {\theta}^i} \log (qJ) \right)\d \theta^{1} \wedge \cdots \wedge \d \theta^{d} \\
% &=(qJ\mathcal{A}^{(1)}_{q} \mathbf f ) \d \theta^{1} \wedge \cdots \wedge \d \theta^{d}.
% \end{align*}
% Therefore, from Theorem~\ref{thm:stoke's} and Corollary~\ref{cor:stokes},
% $$
% \E_q[\mathcal{A}^{(1)}_{q} \mathbf f ] = \int_{\M}d(qJ\omega )=0.
% $$
% \end{proof}
If $\M$ is a closed manifold such as torus and rotation group, it does not have boundary by definition and thus the assumption of Theorem~\ref{thm:stein1_global_coord} holds.
If the boundary of $\M$ is non-empty, a discussion relevant to the assumption of Theorem~\ref{thm:stein1_global_coord} can be found in \cite{liu2019estimating}, which studies density estimation on truncated domains.
Note that the assumption of Theorem~\ref{thm:stein1_global_coord}  is similar to Assumption 4 in \cite{barp2018riemannian}.
%If the boundary is non-empty, the condition can be satisfied by imposing appropriate RKHS test function, e.g. $f_i(\partial \M)=0$\footnote{The $0$ scalar function is enough for the condition to satisfy, regardless of its direction ${\rm d} {x}^{(-i)}.$}, $\forall i$. 

% \wk{In this case, I guess the \textbf{Stereographical projection} may also have a first order Stein operator, by removing the north pole.}
% \matsuda{Note that the polar coordinate we used in the previous paper is also a.e., because it cannot cover the date-line.}\wk{I see.}

% \matsuda{I guess we can unify Theorem 2 and 3?}
% \wk{Sure. Thm 3 was just saying the conditions to derive 1st order operator when I m thinking in comparison with 2nd order operator.}

\subsection{Second Order Stein Operator}
In the context of numerical integration on Riemannian manifolds, \cite{barp2018riemannian} introduced a different type of Stein operator ${\mathcal{A}}^{(2)}_{q}$, which we call the second order Stein operator.
% \begin{align}\label{eq:stein2}
% \widetilde{\mathcal{A}}_q \tilde{f}&= \langle \nabla \tilde{f}, \nabla \log q \rangle +\Delta \tilde{f}.
% \end{align}
% where $\partial x^i$ denotes the basis vector on Tangent space of $x$, $\rm T_x \M$; $\nabla \tilde{f} = \sum_{i,j} [G^{-1}]_{i,j} \frac{\partial \tilde{f}}{\partial x^j} \partial x^i$ is the (Riemannian) gradient operator; $$\nabla \cdot \mathbf{s} = \sum_i \frac{\partial s_i}{\partial x^i} + s_i \frac{\partial}{\partial x^i}\log\sqrt{\operatorname{det}(G)}$$ for $\mathbf{s}=s_1\partial x^1,\dots,s_d\partial x^d$ is the divergence operator and $\Delta \tilde{f}= \nabla \cdot \nabla \tilde{f}$ is the corresponding Laplace-Bertrami operator. 
%The Stein operator depends on the metric tensor matrix $G \in \R^{d\times d}$, w.r.t. its relevant local coordinates, where $G_{ij}$ denotes the Riemannian metric between $\d x^i$ and $\d x^j$. In our notation, we expand the gradient operator and Laplace-Beltrami operator explicitly and re-introduce the Stein operator as follows:
Specifically, for a smooth probability density $q$ on $\M$ and a smooth function $\tilde{f}: \M \to \R$, define ${\mathcal{A}}^{(2)}_{q}\tilde{f}:\M \to \mathbb{R}$ by
\begin{align}\label{eq:stein2}
\mathcal{A}^{(2)}_q \tilde{f} = \sum_{ij} \left( g^{ij}\frac{\partial^2 \tilde{f}}{\partial \theta^i \partial \theta^j}+ g^{ij}\frac{\partial \tilde{f}}{\partial \theta^j}\frac{\partial \log qJ}{\partial \theta^i} 
% +g^{ij}\frac{\partial \tilde{f}}{\partial \theta^j} \frac{\partial \log q}{\partial \theta^i} 
\right)
\end{align}
% \begin{align}\label{eq:stein2}
% &\mathcal{A}^{(2)}_q \tilde{f} =\nonumber \\
% & \sum_{ij} \left( g^{ij}\frac{\partial^2 \tilde{f}}{\partial \theta^i \partial \theta^j}+ g^{ij}\frac{\partial \tilde{f}}{\partial \theta^j}\frac{\partial \log J}{\partial \theta^i} +g^{ij}\frac{\partial \tilde{f}}{\partial \theta^j} \frac{\partial \log q}{\partial \theta^i} \right)
% \end{align}
where we denote the inverse matrix of $(g_{ij})$ by $(g^{ij})$ following the convention of Riemmanian geometry.
%where $g_{ij} = [G^{-1}]_{ij}$ and $\tilde{J} =  \sqrt{\operatorname{det}(G)}$.
%Note that in Euclidean manifold, $G = I$, the identity matrix, w.r.t. basis of the tangent space. Moreover, 
% the stack of $f\in \H$ will lie on the tangent space of any point (which is still Euclidean). This fact validate the operator in Eq.\eqref{eq:ksd} for $\R^d$.

% \matsuda{I agree to write without nabla, to unify the notation with 3.1. Your rewriting looks correct, but let me check more closely. I agree that the function class is bigger for 1st order than 2nd order.}\wk{fixed a indexing error. you can check now.}

\begin{Proposition}
[Proposition 1 of \cite{barp2018riemannian}]\label{thm:stein2}
% Let $f_i$ takes the form in Eq.\eqref{eq:derivative_form} and 
% $
% \omega = \sum_{i=1}^{{d}} f_i {\rm d} {x}^{(-i)}. 
% $
% Assume the conditions in Assumption \ref{assum:boundary} hold. Then, for $\tilde{f}: \M \to \R$ and $\mathcal{A}^{(2)}_{q} \tilde{f}$ in Eq.\eqref{eq:stein2},
If $\partial \M$ is empty or $\tilde{f}$ vanishes on $\partial M$, then
\begin{align*}
{\E}_q [\mathcal{A}^{(2)}_{q} \tilde{f}] = 0.
\end{align*}
\end{Proposition}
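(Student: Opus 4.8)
The plan is to run the same argument as for Theorem~\ref{thm:stein1_global_coord}: exhibit the integrand of $\E_q[\mathcal{A}^{(2)}_q \tilde f]$ as an exact $d$-form and then apply Stokes' theorem. Since $q$ is the density with respect to the Riemannian volume element, writing $J=(\det g)^{1/2}$ and using the coordinate system $\theta^1,\dots,\theta^d$ fixed in Section~\ref{subsec:stokes}, we have
\[
\E_q\big[\mathcal{A}^{(2)}_q \tilde f\big] \;=\; \int_\M \big(\mathcal{A}^{(2)}_q \tilde f\big)\, q \, J\, \d\theta^1\wedge\cdots\wedge\d\theta^d .
\]
The first step is to rewrite the operator \eqref{eq:stein2} in divergence form: a direct computation with the product rule shows that $\mathcal{A}^{(2)}_q\tilde f = \tfrac{1}{qJ}\sum_{i}\partial_i\!\big(qJ\sum_j g^{ij}\,\partial_j\tilde f\big)$, equivalently $\mathcal{A}^{(2)}_q\tilde f = q^{-1}\,\mathrm{div}_g\!\big(q\,\mathrm{grad}_g \tilde f\big)$, i.e. the generator of the $q$-reversible Langevin diffusion on $(\M,g)$. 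Hence the integrand above equals $\sum_{i=1}^d \partial_i W^i$ with $W^i := qJ \sum_{j} g^{ij}\,\partial_j\tilde f$.

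The second step packages this as $\d\omega$ for a $(d-1)$-form. Take
\[
\omega \;=\; \sum_{i=1}^d (-1)^{i-1}\, W^i\; \d\theta^1\wedge\cdots\wedge\widehat{\d\theta^i}\wedge\cdots\wedge\d\theta^d ,
\]
which is the contraction of the volume element with the vector field $q\,\mathrm{grad}_g\tilde f$. By the definition of the exterior derivative, and since $\d\theta^k\wedge\d\theta^1\wedge\cdots\wedge\widehat{\d\theta^i}\wedge\cdots\wedge\d\theta^d$ vanishes unless $k=i$, one gets $\d\omega = \big(\sum_i \partial_i W^i\big)\,\d\theta^1\wedge\cdots\wedge\d\theta^d = \big(\mathcal{A}^{(2)}_q\tilde f\big)\,q\,J\,\d\theta^1\wedge\cdots\wedge\d\theta^d$. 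Then Stokes' theorem (Proposition~\ref{thm:stoke's}) gives
\[
\E_q\big[\mathcal{A}^{(2)}_q\tilde f\big] \;=\; \int_\M \d\omega \;=\; \int_{\partial\M}\omega .
\]
If $\partial\M$ is empty, the right-hand side is $0$ by Corollary~\ref{cor:stokes}. If instead $\tilde f$ vanishes on $\partial\M$, the boundary integral vanishes under the stated hypothesis, exactly paralleling the boundary case of Theorem~\ref{thm:stein1_global_coord}. This is Proposition~1 of \cite{barp2018riemannian}.

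The step I expect to be the main obstacle is the first one: verifying carefully that the coordinate expression \eqref{eq:stein2} genuinely coincides with the coordinate divergence $\tfrac{1}{qJ}\sum_i\partial_i W^i$, i.e. that the lower-order terms produced by differentiating $J$ and the metric components $g^{ij}$ recombine precisely into the single factor $\partial_i\log(qJ)$ in \eqref{eq:stein2}; this is exactly where the choice of a chart covering $\M$ almost everywhere (Section~\ref{subsec:stokes}) enters. A secondary delicate point is the boundary case: since $\omega$ is built from $\mathrm{grad}_g\tilde f$ rather than from $\tilde f$ alone, one must check that $\omega$ restricts to zero on $\partial\M$ under the hypothesis that $\tilde f$ vanishes there (as opposed to needing, say, control of $\mathrm{grad}_g\tilde f$ or of $q$ near $\partial\M$), which again mirrors the corresponding subtlety in Theorem~\ref{thm:stein1_global_coord}.
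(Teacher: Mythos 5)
Your argument is correct and is essentially the paper's own proof unrolled: the paper simply observes that $\mathcal{A}^{(2)}_q\tilde f$ is the first order operator applied to $f^i=\sum_j g^{ij}\,\partial\tilde f/\partial\theta^j$ and invokes Theorem~\ref{thm:stein1_global_coord}, whose proof is exactly your exact-form/Stokes construction with $W^i=qJf^i$. The two delicate points you flag (recombination of the derivatives of $J$ and $g^{ij}$, and vanishing of the boundary term given only that $\tilde f$, rather than $\mathrm{grad}_g\tilde f$, vanishes on $\partial\M$) are equally implicit in the paper's one-line reduction.
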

% \begin{proof}
% The proof is by substituting Eq.\eqref{eq:derivative_form} into Eq.\eqref{eq:stein2} and the result follows from Theorem \ref{thm:stein1_global_coord}.
% \end{proof}
% \matsuda{Maybe \eqref{eq:ksd} is first order?}
% \wk{Indeed, what I mean is stacking scalar functions into a vector in $\R^d$, this vector always lives in the tangent space.}
%In goodness-of-fit test setting, the test function for Stein operator is not restricted to scalar function as discussed above, e.g. \cite{chwialkowski2016kernel,liu2016kernelized} used a stack of $d$ different scalar functions living in the same RKHS $\H$, which correspond to using a test function in product RKHS $\H^d$. 

Theorem~\ref{thm:stein2} follows from Theorem~\ref{thm:stein1_global_coord}, because the second order Stein operator in Eq.\eqref{eq:stein2} can be viewed as a special case of the first order Stein operator in Eq.\eqref{eq:stein1_ae_coord} with
\begin{equation}\label{eq:derivative_form}
f^i = \sum_j g^{ij} \frac{\partial \tilde{f}}{\partial \theta^j}.
\end{equation}
%and $J = \sqrt{\operatorname{det}(G)}$ is used as volume measure.
%\paragraph{Remarks on previous works}
Similar form of the second order Stein operator in Eq.\eqref{eq:stein2} has been studied in \cite{liu2018riemannian} for Bayesian inference. %, where the flow function is scalar output $f_{(\cdot)}(\cdot):\R \times \M \to \R$. 
%As such, for the task costructions in \cite{barp2018riemannian} and \cite{liu2018riemannian}, the first order operator such as \cite{mijoule2018stein,oates2019convergence} cannot be generalized to the manifold context.
%The essential idea of using second order differential operators for scalar-valued test function, is that the derivatives taken in Eq.\eqref{eq:derivative_form} is a good way to correspond the vector of functions $f_i$ with the basis in the tangent space w.r.t. the point on the manifold. 
% \wk{Check the equation above Assumption 4 in \cite{barp2018riemannian}, for $g(\nabla \log q, \nabla f)$, input arguments needs to be defined on $T_x M$.}
%See \cite{liu2018riemannian} for requirements and usefulness of second order operator when scalar valued test functions are considered.
% \cite{liu2018riemannian} discussed the failure of generalization first order operator in to Riemannian manifold, in the context of Stein variational gradient descent (SVGD). It mentions 2 requirements: the optimized test function, instead of just an RKHS function, need to be a valid vector field on $M$ and is coordinate invariant. Using $\nabla f$ that incorporate the notion of geometric measure $G^{-1}$ fulfill the requirements.
On the other hand, 
\cite{le2020diffusion} arrives at a similar 
second order Stein operator on Riemannian manifolds via Feller diffusion process in the context of density approximation. 

\subsection{Zeroth Order Stein Operator}
% Recall the Stein equation,
% \begin{equation}
%     \mathcal{A}_q f_h(x) = h(x) - \mathbb{E}_q[h(X)].
%     \label{eq:stein_eq}
% \end{equation}
% As taking $\E_q$ yield $0$ on both sides of Eq.\eqref{eq:stein_eq} by Stein identity, we consider a particular type of Stein operator in the form of R.H.S. of Eq. \eqref{eq:stein_eq}, defined as,
For a smooth probability density $q$ on $\M$ and a function $h: \M \to \R$, define a function ${\mathcal{A}}^{(0)}_{q} h:\M \to \mathbb{R}$ by
\[
{\mathcal{A}}^{(0)}_q h = h - \mathbb{E}_q[h],
\]
which clearly satisfies $\E_q [\A^{(0)}_q h]= 0$. 
Since $\A^{(0)}_q$ does not involve any differentials, we call it the zeroth order Stein operator.  
%The random variable $X\in\M$ is invariant to the coordinate choices when taking expectation. 
Compared to the first and second order Stein operators, this operator requires the normalization constant of $q$, which is often computationally intractable for Riemannian manifolds.
We will {show later} that this operator corresponds to the maximum mean discrepancy \cite{gretton2007kernel}.
%However, it can be difficult to evaluate $\E_q[h(X)]$ for density $q$ on $\M$. To do so, the normalization constant is required, which violates the advantage for using Stein method for unnormalized models. 
%We introduce the zeroth order Stein operator to compare with other Stein operators and discuss the test performances via corresponding manifold Kernel Stein Discrepancy (mKSD) that we now introduce.

\section{Goodness-of-fit Tests on $\M$} 
% \section{Interpretable Tests on $\M$} 
\label{sec:gof}
In this section, we propose goodness-of-fit testing procedures for distributions on Riemannian manifolds based on Stein operators in the previous section. 
%In this section, we propose goodness-of-fit testing and model criticism procedures for distributions on Riemannian manifolds based on Stein operators in the previous section. 
%We also theoretically analyze the test performances and compare the asymptotic relative efficiencies of different tests.

\subsection{Manifold Kernel Stein Discrepancies}
By using Stein operators introduced in the previous section, we extend kernel Stein discrepancy to distributions on Riemannian manifolds.

Let $\H$ be a RKHS on $\M$ with reproducing kernel $k$ and $\H^d$ be its product.
%Let $ \A^{(c)}_q$, $c=0,1,2$ be the Stein operators of different orders w.r.t. density $q$, and $f \in \mathcal F$ be the corresponding RKHS test functions, equipped with kernel $k$. 
%$\mathcal{F} = \H$ for $c=0,2$, and $\mathcal{F} = \H^d\}$ for $c=1$. 
%Then the corresponding mKSDs are defined as:
We define the manifold kernel Stein discrepancies (mKSD) of the first, second and zeroth order by
\begin{align*}%\label{eq:mksd2}
    \operatorname{mKSD}^{(1)}(p\|q) &= \sup_{\|\mathbf{f} \|_{\H^d} \leq 1} \mathbb{E}_{p}[\mathcal{A}^{(1)}_q \mathbf{f}],\\
% \end{align}
% \begin{align}\label{eq:mksd1}
    \operatorname{mKSD}^{(2)}(p\|q) &= \sup_{\|\tilde{f} \|_{\H} \leq 1} \mathbb{E}_{p}[\mathcal{A}^{(2)}_q \tilde{f}],\\
% \end{align}
% \begin{align}\label{eq:mksd0}
    \operatorname{mKSD}^{(0)}(p\|q) &= \sup_{\|h \|_{\H} \leq 1} \mathbb{E}_{p}[\mathcal{A}^{(0)}_q h],
\end{align*}
respectively.
% {Note that different mKSD uses different RKHS as the space of test functions. 
% With the $d$-dimensional vector valued RKHS $\H^d$, $\operatorname{mKSD^{(1)}}$ takes the supremum over a larger class of functions than $\operatorname{mKSD^{(2)}}$, capturing richer distribution features. 
% Theoretical analysis in testing context will be presented in Section \ref{sec:efficiency}.}
We also define the Stein kernels of first, second and zeroth order by
\begin{align*}
h^{(1)}_q(x, \tilde{x})=\left\langle \A^{(1)}_q k(x,\cdot), \A^{(1)}_q k(\tilde{x},\cdot)\right\rangle_{\H^d}, \\
h^{(2)}_q(x, \tilde{x})=\left\langle \A^{(2)}_q k(x,\cdot), \A^{(2)}_q k(\tilde{x},\cdot)\right\rangle_{\H}, \\
h^{(0)}_q(x, \tilde{x})=\left\langle \A^{(0)}_q k(x,\cdot), \A^{(0)}_q k(\tilde{x},\cdot)\right\rangle_{\H},
\end{align*}
respectively.
%We call $h^{(c)}_q(x, \tilde{x})$ the Stein kernel.
%Notably, it is independent of density $p$. 
Then, by algebraic manipulation, we obtain the following.

\begin{Theorem}\label{thm:quadratic_form}
If $p$ and $q$ are smooth densities on $\M$ and the reproducing kernel $k$ of $\H$ is smooth, then
\begin{equation}
    \operatorname{mKSD}^{(c)}(p\|q)^2 = \E_{x,\tilde{x}}[h_q^{(c)}(x,\tilde{x})]
\end{equation}
for $c=0,1,2$, where $x,\tilde{x} \sim p$ are independent.
\end{Theorem}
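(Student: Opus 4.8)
The plan is to prove the identity for each order $c \in \{0,1,2\}$ by the standard RKHS ``kernel trick'' argument: rewrite the supremum defining $\operatorname{mKSD}^{(c)}$ as the norm of a mean embedding in the appropriate Hilbert space, then expand that squared norm as a double expectation. The only genuinely novel point over the $\mathbb{R}^d$ case is that $\mathcal{A}^{(c)}_q$ is a differential (or, for $c=0$, a centering) operator acting on functions on the \emph{manifold} $\M$ expressed in the global-almost-everywhere coordinates $\theta^1,\dots,\theta^d$, so I must check that the relevant operations commute with the Bochner/Hilbert-space inner product.

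First I would treat the first order case $c=1$ in detail, since $c=2$ is a special case of it via Eq.~\eqref{eq:derivative_form} and $c=0$ is simpler still. Fix $\mathbf f = (f^1,\dots,f^d) \in \H^d$ with $\|\mathbf f\|_{\H^d}\le 1$. By the reproducing property, $f^i(\theta) = \langle f^i, k(\theta,\cdot)\rangle_{\H}$, and since $k$ is smooth, differentiation under the inner product is justified, giving $\partial f^i/\partial\theta^i = \langle f^i, \partial_{\theta^i} k(\theta,\cdot)\rangle_{\H}$ (this is where smoothness of $k$ and of the densities is used, together with the coordinate assumption from Section~\ref{subsec:stokes} so that the partial derivatives are well-defined a.e.). Collecting terms,
\begin{align*}
\mathcal{A}^{(1)}_q \mathbf f(\theta) = \sum_{i=1}^d \left\langle f^i,\; \frac{\partial}{\partial\theta^i}k(\theta,\cdot) + \frac{\partial \log(qJ)}{\partial\theta^i}\, k(\theta,\cdot) \right\rangle_{\H} = \left\langle \mathbf f,\; \mathcal{A}^{(1)}_q k(\theta,\cdot)\right\rangle_{\H^d},
\end{align*}
where $\mathcal{A}^{(1)}_q k(\theta,\cdot)$ denotes the element of $\H^d$ whose $i$-th component is $\partial_{\theta^i}k(\theta,\cdot) + \partial_{\theta^i}\log(qJ)\, k(\theta,\cdot)$. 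Taking $\mathbb{E}_p$ and pulling it inside the inner product (Bochner integrability, which follows from smoothness and the usual moment conditions, exactly as in \cite{chwialkowski2016kernel}), we get $\mathbb{E}_p[\mathcal{A}^{(1)}_q\mathbf f] = \langle \mathbf f, \mu_q\rangle_{\H^d}$ with $\mu_q := \mathbb{E}_{x\sim p}[\mathcal{A}^{(1)}_q k(x,\cdot)] \in \H^d$. Then
\begin{align*}
\operatorname{mKSD}^{(1)}(p\|q)^2 = \sup_{\|\mathbf f\|\le 1}\langle \mathbf f,\mu_q\rangle_{\H^d}^2 = \|\mu_q\|_{\H^d}^2 = \langle \mathbb{E}_x \mathcal{A}^{(1)}_q k(x,\cdot),\, \mathbb{E}_{\tilde x}\mathcal{A}^{(1)}_q k(\tilde x,\cdot)\rangle_{\H^d} = \mathbb{E}_{x,\tilde x}[h^{(1)}_q(x,\tilde x)],
\end{align*}
using independence of $x,\tilde x$ to split the expectations and the definition of $h^{(1)}_q$. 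The same computation with $\mathcal{A}^{(2)}_q$ (scalar-valued test functions, $\H$ in place of $\H^d$, second partials passing through the inner product) and with $\mathcal{A}^{(0)}_q h = h - \mathbb{E}_q[h]$, for which $\mathcal{A}^{(0)}_q k(\theta,\cdot) = k(\theta,\cdot) - \mathbb{E}_{q}[k(\cdot,\cdot)]$ and $\mathbb{E}_p[\mathcal{A}^{(0)}_q h] = \langle h, \mathbb{E}_p[\mathcal{A}^{(0)}_q k(x,\cdot)]\rangle_{\H}$, yields the cases $c=2$ and $c=0$.

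The main obstacle is not conceptual but a matter of carefully discharging the analytic side conditions: namely that (i) one may differentiate $k(\theta,\cdot)$ with respect to the manifold coordinates inside the RKHS inner product, (ii) the resulting functions $\mathcal{A}^{(c)}_q k(x,\cdot)$ are Bochner-integrable against $p$, so that the mean embedding $\mu_q$ genuinely lives in the Hilbert space and $\mathbb{E}_p$ commutes with $\langle\cdot,\cdot\rangle$, and (iii) the coordinates $\theta^1,\dots,\theta^d$ cover $\M$ up to a $p$-null set, so integrals written in coordinates agree with integrals over $\M$. All three follow from the smoothness hypotheses on $p$, $q$, $k$ together with the coordinate assumption already in force in Section~\ref{sec:stein_operator}, in complete parallel with the Euclidean argument of \cite{chwialkowski2016kernel,liu2016kernelized}; I would state them as the regularity conditions under which the theorem holds and refer to those works for the routine verifications.
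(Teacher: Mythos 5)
Your proposal is correct and follows essentially the same route as the paper's proof: identify the Riesz representer of the linear functional $\mathbf f \mapsto \E_p[\mathcal{A}^{(c)}_q \mathbf f]$ as the mean embedding $\E_{x\sim p}[\mathcal{A}^{(c)}_q k(x,\cdot)]$ via the reproducing property, take the supremum over the unit ball to get its squared norm, and expand that norm as the double expectation of the Stein kernel. The only difference is that you make the interchange-of-differentiation/integration and Bochner-integrability conditions explicit, which the paper leaves implicit.
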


From Theorem~\ref{thm:quadratic_form}, we can estimate mKSD by using samples from $p$. 
This is an important property in goodness-of-fit testing.

% To simplify notation, we denote $x$ as the variable of the coordinate choices.
% We denote $\operatorname{mKSD}(p\|q)$ from $\A^{(1)}_q$, $ \widetilde{\A}^{(1)}_q$, $\A^{(2)}_q$, $\A^{(0)}_q$, as $\operatorname{mKSD}^{(1)}(p\|q)$, $\operatorname{mKSD}^{(1,a.e.)}(p\|q)$, $\operatorname{mKSD}^{(2)}(p\|q)$, $\operatorname{mKSD}^{(0)}(p\|q)$ respectively. Similar superscripts apply for $h_q(x,\tilde{x})$.

% \footnote{When $c=2$, the coordinate system $x^i$ can be used.}.#
% For operator $\A^{(1)}_q$, the functions take the form,
% $$    L_i(x)=\sum_j D_{ij} \frac{\partial}{\partial \theta^i} \log \frac{q(x)}{p(x)}, \quad i=1,\dots,d.$$

The following theorem shows that mKSD is a proper discrepancy measure between distributions on Riemannian manifolds.
The proof is given in supplementary material.
%Let $L^{(1)}(x) = (L^{(1)}_1(x),\dots,L^{(1)}_d)^{\top}\in \R^d$
Let $L(x) = (L_1(x),\dots,L_d)^{\top}\in \R^d$ with
$$    L_i(x)=\frac{\partial}{\partial \theta^i} \log \frac{q(x)}{p(x)}.
$$

\begin{Theorem}\label{thm:characteristic}
Let $p$ and $q$ be smooth densities on $\M$. 
Assume:  1) The kernel $k$ vanishes at $\partial \M$ and is compact universal in the sense of \textnormal{\cite[Definition 2 (ii)]{carmeli2010vector}}; 2)
$\E_{x,\tilde{x} \sim p} [h^{(c)}_q(x,\tilde{x})^2] < \infty$, for $c=0,1,2$; 
3) $\E_{p} \| L(x) \|^2 < \infty$. 
Then, $\operatorname{mKSD}^{(c)}(p\|q)\geq 0$ and $\operatorname{mKSD}^{(c)}(p\|q)=0$ if and only if $p=q$.
\end{Theorem}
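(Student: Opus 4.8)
The plan is to reduce all three cases to the characteristic (compact universal) property of $k$, treating nonnegativity and the two implications separately. Nonnegativity and the ``if'' direction are immediate: by Theorem~\ref{thm:quadratic_form}, $\operatorname{mKSD}^{(c)}(p\|q)^2 = \E_{x,\tilde x \sim p}[h_q^{(c)}(x,\tilde x)] = \|\mu^{(c)}\|^2 \ge 0$, where $\mu^{(c)} := \E_{x\sim p}[\A^{(c)}_q k(x,\cdot)]$ is the Bochner mean embedding in $\H$ (or $\H^d$), well defined by assumption~2, and $\E_p[\A^{(c)}_q\mathbf{f}] = \langle\mathbf{f},\mu^{(c)}\rangle$, so that $\operatorname{mKSD}^{(c)}(p\|q) = \|\mu^{(c)}\|$. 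If $p = q$, then because $k$ vanishes on $\partial\M$ and evaluation functionals are continuous on an RKHS, every element of $\H$ vanishes on $\partial\M$, so Theorem~\ref{thm:stein1_global_coord}, Proposition~\ref{thm:stein2} and the trivial identity for $\A^{(0)}$ give $\mu^{(c)} = 0$.

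For the ``only if'' direction, $\operatorname{mKSD}^{(c)}(p\|q) = 0$ is equivalent to $\E_p[\A^{(c)}_q\mathbf{f}] = 0$ for all admissible $\mathbf{f}$. When $c = 0$ this reads $\E_p[h] = \E_q[h]$ for every $h\in\H$, i.e. the kernel mean embeddings of $p$ and $q$ coincide, and compact universality of $k$ --- which by \cite{carmeli2010vector} makes the embedding injective on finite signed measures --- forces $p = q$. When $c = 1$ I would integrate the divergence term $\sum_i \partial f^i/\partial\theta^i$ of $\A^{(1)}_q\mathbf{f}$ by parts against $pJ\,\d\theta$; the boundary contribution vanishes since $f^i$ vanishes on $\partial\M$, leaving $\E_p[\A^{(1)}_q\mathbf{f}] = \E_p[\langle\mathbf{f},L\rangle_{\R^d}]$, finite by assumption~3. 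Taking $\mathbf{f}$ with a single nonzero component $h\in\H$ then gives $\int_\M h\,L_i\,p\,\d\mathrm{vol} = 0$ for all $h\in\H$; since $L_i\,p\,\d\mathrm{vol}$ is a finite signed measure (Cauchy--Schwarz and assumption~3), compact universality gives $L_i\,p = 0$ almost everywhere, hence $L_i \equiv 0$ (the function $L$ being defined only where $p > 0$ and continuous). Thus $\nabla_\theta\log(q/p) \equiv 0$ on the connected manifold $\M$, so $q = cp$, and normalization yields $c = 1$, i.e. $p = q$.

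For $c = 2$, Eq.~\eqref{eq:derivative_form} writes $\A^{(2)}_q\tilde f = \A^{(1)}_q\mathbf{f}$ with $f^i = \sum_j g^{ij}\,\partial\tilde f/\partial\theta^j$, so the $c=1$ computation gives $\E_p[\A^{(2)}_q\tilde f] = \E_p[\langle\nabla\tilde f, \nabla\phi\rangle_g]$ with $\phi := \log(q/p)$. Integrating by parts once more --- again the boundary term vanishes since $\tilde f$ vanishes on $\partial\M$ --- turns this into $-\int_\M \tilde f\,\mathrm{div}(p\nabla\phi)\,\d\mathrm{vol}$, and its vanishing for all $\tilde f\in\H$ together with compact universality gives $\mathrm{div}(p\nabla\phi) = 0$. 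I would then pair this identity with $\phi$ itself and integrate by parts a final time to obtain $\int_\M p\,\|\nabla\phi\|_g^2\,\d\mathrm{vol} = 0$ (here using $\partial\M = \emptyset$, which covers the torus and the rotation group), whence $\nabla\phi \equiv 0$ and $p = q$ as before.

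The hard part will be the second-order case: I need to check that $\mathrm{div}(p\nabla\phi)$ defines a finite signed measure and that the repeated integrations by parts are legitimate under only the stated moment conditions --- this is exactly where assumptions~2 and~3 enter --- and I need to control the boundary term in the final pairing with $\phi$, which need not vanish on $\partial\M$ because $\phi = \log(q/p)$ has no prescribed boundary behaviour; hence I would either restrict to $\partial\M = \emptyset$ or impose an additional boundary hypothesis. A minor point common to all three orders is that passing from ``$L_i\,p = 0$ a.e.'' to ``$p = q$'' uses positivity of $p$ and connectedness of $\M$.
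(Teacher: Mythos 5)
Your proposal is correct in outline and, for the cases $c=0$ and $c=1$, is essentially the paper's argument in different clothing: the paper writes $\A^{(1)}_q k(\tilde x,\cdot)-\A^{(1)}_p k(\tilde x,\cdot)=L(\tilde x)\,k(\tilde x,\cdot)$ and kills the second term by Stein's identity for $p$, which is exactly your integration by parts of the divergence term against $pJ\,\d\theta$; both then invoke compact universality (Theorem 4(b) of \cite{carmeli2010vector}) to get $L_i\,p\equiv 0$ and conclude $p=q$ from constancy of $\log(q/p)$, with the same implicit reliance on positivity of $p$ and connectedness of $\M$ that you flag explicitly. Where you genuinely diverge is $c=2$: the paper reuses the same identity ``$\E_{\tilde x\sim p}[L(\tilde x)k(\tilde x,x)]=0$'' for the second-order operator, which is loose, since the difference $\A^{(2)}_q-\A^{(2)}_p$ applied to $k(\tilde x,\cdot)$ is $\sum_{ij}g^{ij}L_i\,\partial k/\partial\tilde\theta^j$, involving derivatives of the kernel rather than $k$ itself, so plain injectivity of the mean embedding does not apply verbatim. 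Your Dirichlet-energy route --- integrate by parts twice to get $\mathrm{div}(p\nabla\phi)=0$ and then $\int_\M p\,\|\nabla\phi\|_g^2\,\d\mathrm{vol}=0$ --- repairs this at the cost of the extra regularity checks you list (that $\mathrm{div}(p\nabla\phi)\,\d\mathrm{vol}$ is a finite signed measure, and the boundary term in the final pairing, which genuinely requires $\partial\M=\emptyset$ or an additional hypothesis, since $\phi$ has no prescribed boundary behaviour). A cheaper fix closer to the paper's spirit is to use Eq.~\eqref{eq:derivative_form} to view $\A^{(2)}_q$ as $\A^{(1)}_q$ acting on the (non-RKHS) vector field $g^{-1}\nabla\tilde f$ and argue injectivity for the derivative features of a compact-universal kernel directly; either way, your version is the more carefully justified of the two for the second-order case.
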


{Note that different mKSD uses different RKHS as the space of test functions. 
With the $d$-dimensional vector valued RKHS $\H^d$, $\operatorname{mKSD^{(1)}}$ takes the supremum over a larger class of functions than $\operatorname{mKSD^{(2)}}$, capturing richer distribution features. 
Theoretical analysis in testing context will be presented in Section \ref{sec:efficiency}.}

\paragraph{Equivalence of \textbf{$\operatorname{{mKSD^{(0)}}}$} and MMD}
For a RKHS $\H$, the maximum mean discrepancy (MMD) \cite{gretton2007kernel} between $p$ and $q$ is defined by
\[
\operatorname{MMD}(p \| q)^2  = \|\mu_p - \mu_q\|^2_{\H},
\]
where $\mu_p$, $\mu_q$ are the kernel mean embeddings \cite{muandet2017kernel} of $p$ and $q$, respectively.
The following theorem shows that \textbf{$\operatorname{{mKSD^{(0)}}}$} is equivalent to MMD.

\begin{Theorem}\label{thm:mmd_equiv}
%Consider ${\mathcal{A}}^{(0)}_q f(x) = f(x) - \mathbb{E}_q[f(X)]$ as the Stein operator of $q$. 
% The kernel Stein discrepancy is defined as:
$$
\operatorname{mKSD}^{(0)}(p \| q) = \operatorname{MMD}(p \| q).
$$
%recovers the discrepancy measure of maximum-mean-discrepancy $\operatorname{(MMD)}$  $\operatorname{\cite{gretton2007kernel}}$.
\end{Theorem}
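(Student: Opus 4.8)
The plan is to show that the zeroth order Stein operator reduces the witness-function supremum defining $\operatorname{mKSD}^{(0)}$ to the one defining $\operatorname{MMD}$. First I would use linearity of expectation: for any $h \in \H$,
\[
\E_p[\A^{(0)}_q h] = \E_p\big[h - \E_q[h]\big] = \E_p[h] - \E_q[h].
\]
Next I would invoke the kernel mean embeddings $\mu_p = \E_{x\sim p}[k(x,\cdot)]$ and $\mu_q = \E_{y\sim q}[k(y,\cdot)]$, which exist as Bochner integrals in $\H$ under a boundedness (or integrability) condition on $k$; the reproducing property then gives $\E_p[h] = \langle h,\mu_p\rangle_\H$ and $\E_q[h] = \langle h,\mu_q\rangle_\H$, so that $\E_p[\A^{(0)}_q h] = \langle h, \mu_p - \mu_q\rangle_\H$.

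Then I would take the supremum over the unit ball of $\H$. By Cauchy--Schwarz, $\sup_{\|h\|_\H \le 1}\langle h, \mu_p - \mu_q\rangle_\H = \|\mu_p - \mu_q\|_\H$, attained at $h = (\mu_p - \mu_q)/\|\mu_p - \mu_q\|_\H$ when $\mu_p \neq \mu_q$ and trivially otherwise. This yields $\operatorname{mKSD}^{(0)}(p\|q) = \|\mu_p - \mu_q\|_\H = \operatorname{MMD}(p\|q)$, which is the claim.

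As a consistency check with Theorem~\ref{thm:quadratic_form}, one can instead argue through the zeroth order Stein kernel: writing $\A^{(0)}_q k(x,\cdot) = k(x,\cdot) - \mu_q \in \H$ gives
\[
h^{(0)}_q(x,\tilde x) = k(x,\tilde x) - \E_{y\sim q}[k(x,y)] - \E_{y\sim q}[k(\tilde x,y)] + \E_{y,y'\sim q}[k(y,y')],
\]
and taking $\E_{x,\tilde x\sim p}$ recovers the familiar expansion $\E_{x,\tilde x\sim p}[k(x,\tilde x)] - 2\E_{x\sim p,\,y\sim q}[k(x,y)] + \E_{y,y'\sim q}[k(y,y')] = \|\mu_p - \mu_q\|_\H^2$. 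I do not expect any genuine obstacle here: the only step requiring care is the interchange of expectation and inner product, i.e. Bochner integrability of $x\mapsto k(x,\cdot)$ under $p$ and $q$, which is exactly where the (mild) boundedness/integrability assumption on the kernel enters.
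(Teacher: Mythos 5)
Your proposal is correct and follows essentially the same route as the paper's proof: unfold $\A^{(0)}_q$ to get $\E_p[h]-\E_q[h]$, pass to the mean embeddings via the reproducing property, and take the supremum over the unit ball in closed form. The extra care you take with Bochner integrability and the consistency check against the Stein-kernel expansion are fine elaborations of the same argument, not a different approach.
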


\begin{proof}
By definition, we have 
% $\operatorname{mKSD}^{(0)}(p \| q) = \sup_{\|h \|_{\mathcal{H}} \leq 1} \mathbb{E}_{p}[{\mathcal{A}}^{(0)}_q h] = \sup_{\|h \|_{\mathcal{H}} \leq 1} (\mathbb{E}_{p}[h] - \mathbb{E}_{q}[h])$
\begin{align*}
    \operatorname{mKSD}^{(0)}(p \| q) &= \sup_{\|h \|_{\mathcal{H}} \leq 1} \mathbb{E}_{p}[{\mathcal{A}}^{(0)}_q h] \\
    &= \sup_{\|h \|_{\mathcal{H}} \leq 1} (\mathbb{E}_{p}[h] - \mathbb{E}_{q}[h]).
\end{align*}
Hence, taking the supreme in closed form via reproducing property, we obtain 
$$
\operatorname{mKSD}^{(0)}(p \| q)^2 = \|\mu_p - \mu_q\|^2_{\H} = \operatorname{MMD}(p \| q)^2.
$$
\end{proof}

\subsection{Goodness-of-fit Testing with mKSDs}\label{sec:gof_procedure}
Here, we present procedures for testing $H_0:p=q$ with significance level $\alpha$ based on samples $x_1,\dots,x_n \sim p$.

From Theorem~\ref{thm:quadratic_form}, an unbiased estimate of mKSD can be obtained in the form of U-statistics \cite{lee90}:
\begin{equation}\label{eq:u-stats}
{{\operatorname{mKSD}}}_u^{(c)}(p\|q)^2=\frac{1}{n(n-1)}\sum_{i\neq j}h^{(c)}_{q}(x_{i},x_{j}).
\end{equation}
Its asymptotic distribution is obtained via U-statistics theory \cite{lee90,van2000asymptotic} as follows.  
We denote the convergence in distribution by $\overset{d}{\to}$.

\begin{Theorem}\label{thm:u-stat-asymptotic}
%Let $k(x,\tilde{x})$ be a positive definite kernel in the Stein class of $p$ and $q$. 
%Assume that the reproducing kernel $k$ is bounded and $\E_{x,\tilde{x} \sim p} [h^{(c)}_q(x,\tilde{x})^2] < \infty$ for $c=0,1,2$.
For $c=0,1,2$, the following statements hold.

% \begin{enumerate}
1. Under $H_0: p = q$, %the asymptotic distribution of ${\operatorname{mKSD}}_u^{(c)}(p\|q)^2$ is
\begin{equation}\label{eq:null-dist}
n \cdot {\operatorname{mKSD}}_u^{(c)}(p\|q)^2  \overset{d}{\to} \sum_{j=1}^{\infty} w^{(c)}_{j}(Z_{j}^{2}-1),
\end{equation}
where $Z_{j}$ are i.i.d. standard Gaussian random variables and $w^{(c)}_{j}$ are the eigenvalues of the Stein kernel $h^{(c)}_{q}(x,\tilde{x})$ under $p(\tilde{x})$:
\begin{equation}\label{eq:eigen_fun}
\int h^{(c)}_{q}(x,\tilde{x})\phi_{j}(\tilde{x})p(\tilde{x}){\rm d}\tilde{x} = w^{(c)}_{j}\phi_{j}(x),
\end{equation}
where 
% $\phi_{j}(x) \not\equiv 0$
$\phi_{j}(x) \neq 0$ is the non-trivial eigen-function.

2. Under $H_1: p\neq q$, %the asymptotic distribution of ${\operatorname{mKSD}}_u^{(c)}(p\|q)^2$ is 
\[
\sqrt{n}\cdot \left({\operatorname{mKSD}}_u^{(c)}(p\|q)^2 - \operatorname{mKSD}^{(c)}(p\|q)^2\right)\overset{d}{\to}\mathcal{N}(0,{\sigma_c}^{2}),
\]
where ${\sigma_c}^{2}=\mathrm{Var}_{x\sim p}[\E_{\tilde{x}\sim p}[h^{(c)}_{q}(x,\tilde{x})]]>0$. 
% \end{enumerate}
\end{Theorem}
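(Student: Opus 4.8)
The plan is to treat this as a standard application of the asymptotic theory of (degenerate and non-degenerate) $U$-statistics of order two, as developed in \cite{lee90,van2000asymptotic}, applied separately to each kernel $h_q^{(c)}$ for $c=0,1,2$. The key structural facts I would invoke are: (i) by Theorem~\ref{thm:quadratic_form}, ${\operatorname{mKSD}}^{(c)}(p\|q)^2 = \E_{x,\tilde x \sim p}[h_q^{(c)}(x,\tilde x)]$, so the $U$-statistic in Eq.~\eqref{eq:u-stats} is an unbiased estimator of ${\operatorname{mKSD}}^{(c)}(p\|q)^2$; (ii) each $h_q^{(c)}$ is symmetric in its two arguments by construction (it is an inner product of the same Stein-transformed feature map evaluated at $x$ and $\tilde x$); and (iii) $h_q^{(c)}$ is positive semidefinite as a kernel, again because it is an inner product $\langle \A_q^{(c)} k(x,\cdot), \A_q^{(c)} k(\tilde x,\cdot)\rangle$ in a Hilbert space, which forces the eigenvalues $w_j^{(c)}$ in Eq.~\eqref{eq:eigen_fun} to be nonnegative and guarantees the spectral decomposition used below. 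Throughout I would assume the integrability conditions needed for the two regimes — finiteness of $\E_{x,\tilde x\sim p}[h_q^{(c)}(x,\tilde x)^2]$ for the null limit, as already assumed in Theorem~\ref{thm:characteristic}, and additionally $\E_x[(\E_{\tilde x} h_q^{(c)}(x,\tilde x))^2]<\infty$ for the alternative — and I would state these explicitly at the start of the proof or note they follow from the standing assumptions.

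For part 1 (the null $H_0: p=q$), I would first argue that the $U$-statistic is \emph{degenerate}: under $p=q$, Stein's identity (Theorem~\ref{thm:stein1_global_coord}, Proposition~\ref{thm:stein2}, and the definition of $\A_q^{(0)}$) gives $\E_{\tilde x\sim q}[\A_q^{(c)}k(x,\cdot)](\tilde x)$-type expressions that vanish, so the conditional expectation $h_{q,1}^{(c)}(x) := \E_{\tilde x \sim p}[h_q^{(c)}(x,\tilde x)] = \langle \A_q^{(c)}k(x,\cdot),\, \E_{\tilde x}\A_q^{(c)}k(\tilde x,\cdot)\rangle = 0$ for $p$-almost every $x$. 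With first-order degeneracy established and $\E[h_q^{(c)}(x,\tilde x)^2]<\infty$, the classical theorem on degenerate $U$-statistics of order two applies: via the Hilbert--Schmidt expansion of the integral operator with kernel $h_q^{(c)}$ on $L^2(p)$, write $h_q^{(c)}(x,\tilde x) = \sum_j w_j^{(c)} \phi_j(x)\phi_j(\tilde x)$ with $\{\phi_j\}$ orthonormal eigenfunctions and eigenvalues $w_j^{(c)}$ as in Eq.~\eqref{eq:eigen_fun}; then $n\cdot{\operatorname{mKSD}}_u^{(c)}(p\|q)^2 = \sum_j w_j^{(c)}\big(\tfrac{1}{n}(\sum_{i}\phi_j(x_i))^2 - \tfrac{1}{n}\sum_i \phi_j(x_i)^2\big)$, and each bracketed term converges jointly in distribution to $Z_j^2 - 1$ with $Z_j$ i.i.d.\ standard Gaussian (by the multivariate CLT applied to finitely many coordinates plus a tail-truncation argument controlled by $\sum_j (w_j^{(c)})^2 = \E[h_q^{(c)}(x,\tilde x)^2] - \|h_{q,1}^{(c)}\|^2 < \infty$). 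This yields Eq.~\eqref{eq:null-dist}.

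For part 2 (the alternative $H_1: p\neq q$), the $U$-statistic is non-degenerate: $h_{q,1}^{(c)}(x) = \E_{\tilde x\sim p}[h_q^{(c)}(x,\tilde x)]$ is no longer identically zero, since otherwise $\sum_j w_j^{(c)}\phi_j(x)^2 = 0$ would force $h_q^{(c)}\equiv 0$ and hence ${\operatorname{mKSD}}^{(c)}(p\|q)=0$, contradicting Theorem~\ref{thm:characteristic} under $p\neq q$; this same observation gives $\sigma_c^2 = \mathrm{Var}_{x\sim p}[h_{q,1}^{(c)}(x)] > 0$. Then Hoeffding's classical CLT for non-degenerate $U$-statistics (or equivalently the Hájek projection $\widehat{U}_n - \theta \approx \tfrac{2}{n}\sum_i (h_{q,1}^{(c)}(x_i) - \theta)$, with the remainder $o_p(n^{-1/2})$ controlled by $\E[h_q^{(c)}(x,\tilde x)^2]<\infty$) gives $\sqrt n\big({\operatorname{mKSD}}_u^{(c)}(p\|q)^2 - {\operatorname{mKSD}}^{(c)}(p\|q)^2\big) \overset{d}{\to} \mathcal N(0, 4\,\mathrm{Var}_{x\sim p}[h_{q,1}^{(c)}(x)])$; here the stated $\sigma_c^2$ absorbs the constant (I would check the normalization convention $\sigma_c^2 = \mathrm{Var}_x[\E_{\tilde x} h_q^{(c)}(x,\tilde x)]$ matches the paper's display, adjusting the factor of $4$ into $\sigma_c^2$ or noting the kernel is scaled accordingly).

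The routine parts are the two invocations of the $U$-statistics limit theorems; the one step deserving genuine care — the main obstacle — is verifying the degeneracy under $H_0$ rigorously, i.e.\ that $\E_{\tilde x\sim q}[\A_q^{(c)}k(x,\cdot)(\tilde x)] = 0$ in the RKHS-valued (Bochner) sense so that $h_{q,1}^{(c)}\equiv 0$. This requires applying Stein's identity not to a fixed scalar test function but to the kernel section $k(x,\cdot)$ viewed as an $\H$-valued (or $\H^d$-valued) object, and justifying the interchange of the expectation $\E_{\tilde x\sim q}$ with the Stein operator and with the Hilbert-space inner product; this is exactly where the smoothness of $k$ and the boundary-vanishing assumption from Theorem~\ref{thm:characteristic} enter, and where one must confirm the relevant Bochner integrals are well-defined (guaranteed by the integrability hypotheses and smoothness/compactness of $\M$ in the examples). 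Once degeneracy is in hand, the rest follows by citing \cite{lee90,van2000asymptotic}.
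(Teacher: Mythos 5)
Your proposal takes essentially the same route as the paper: establish degeneracy under $H_0$ by applying Stein's identity to the kernel section $k(x,\cdot)$ so that $\E_{\tilde{x}\sim q}[h_q^{(c)}(x,\tilde{x})]=0$, then invoke the standard limit theorems for degenerate and non-degenerate $U$-statistics (the paper cites Serfling, Sections 5.5.1--5.5.2), with positivity of $\sigma_c^2$ under $H_1$ deduced from Theorem~\ref{thm:characteristic}. Your flagging of the factor-of-$4$ normalization in the non-degenerate CLT and of the Bochner-integral justification for the degeneracy step are reasonable points of care that the paper's own proof passes over silently, but they do not change the argument.
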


Based on Theorem~\ref{thm:u-stat-asymptotic}, we propose two procedures for goodness-of-fit testing.
%\subsection{Goodness-of-fit Test with V-statistics}

{\paragraph{Spectrum Test}
%The asymptotic null distribution in Theorem~\ref{thm:u-stat-asymptotic} can be used to determine rejection thresholds for goodness-of-fit testing. 
We can also directly approximate the null distribution in Eq.\eqref{eq:null-dist} by using the eigenvalues of the Stein kernel matrix \cite[Theorem 1]{gretton2009fast}.}
Specifically, let $M^{(c)}$ be the $n\times n$ Stein kernel matrix defined by $(M^{(c)})_{ij} = h^{(c)}_q(x_i,x_j)$ and $\widetilde{w}_1^{(c)},\dots,\widetilde{w}_n^{(c)}$ be its eigenvalues. 
% {Denote $(M^{(c)})_{ij} = h^{(c)}_q(x_i,x_j)$ as the $n\times n$ Stein kernel matrix. 
% The eigenvalues of $M^{(c)}$, $\{\widetilde{w}_i^{(c)}\}_{i=1}^{n}$, are used to approximate the null in Eq.\eqref{eq:null-dist}. 
Then, we define the bootstrap samples by
\begin{equation}\label{eq:spectrum-null}
S_t = \frac{1}{n} \sum_{j=1}^{n} \widetilde{w}_j^{(c)}(Z_{j,t}^2 - 1), %\quad t=1,\dots,B,
\end{equation}
where each $Z_{j,t}$ is the standard Gaussian variable.

\paragraph{Wild-bootstrap Test}
We employ the wild-bootstrap test with the V-statistics \cite{chwialkowski2014wild}.
The test statisic is given by
\begin{equation}\label{eq:v-stats}
{{\operatorname{mKSD}}}_v^{(c)}(p\|q)^2=\frac{1}{n^2}\sum_{i, j}h_{q}^{(c)}(x_{i},x_{j}).
\end{equation}
%Then, {we sample i.i.d. Rademacher variables $\{W_{i, t}\}_{i\in[n]}$ of zero mean and unit variance.}
% \iffalse
% we sample uniform i.i.d. variables $U_1,\dots,U_n \sim \mathrm{U}[0,1]$, let $W_{0,t}=1$ and define
% \begin{equation}\label{eq:bootstrap-weights}
% W_{i,t} = \mathbbm{1}_{\{U_i > a_t\}}W_{i-1, t} - \mathbbm{1}_{\{U_i < a_t\}}W_{i-1, t},
% \end{equation}
% for $i=1,\dots,n$, where $\mathbbm{1}_{\{\cdot \}}$ denotes the indicator function and $a_t$ is the probability of sign change which is set to $0.5$ when $x_1,\dots,x_n$ are independent. 
% \fi
%\matsuda{What if dependent?}\wk{We choose the actual sign flip?, independent mean half chance flip so set a=0.5}
To approximate its null distribution, we define the wild-bootstrap samples by
\begin{equation}\label{eq:bootstrap-null}
S_t = \frac{1}{n^2}\sum_{i,j} W_{i,t}W_{j,t}h_q^{(c)}(x_i,x_j), %\quad t=1,\dots,B,
\end{equation}
where each $W_{i, t} \in \{-1,1\}$ is the Rademacher variable of zero mean and unit variance.
%as the estimated empirical distribution for the null, $i,j,t \in[n]$. 
%\iffalse
% We reject the null if the test statistic ${\textnormal{mKSD}}_u^{(c)}(p\|q)^2$ in Eq.\eqref{eq:u-stats} exceeds the $(1-\alpha)$ quantile of $\{S_1,\dots,S_B\}$.

The testing procedure is outlined in Algorithm \ref{alg:wild}.
We adopt this algorithm in the following experiments due to its computational efficiency.

\begin{algorithm}[ht]
   \caption{mKSD test via wild-bootstrap}
   \label{alg:wild}
\begin{algorithmic}[1]
\renewcommand{\algorithmicrequire}{\textbf{Input:}}
\renewcommand{\algorithmicensure}{\textbf{Objective:}}
% \INPUT~~\\
\REQUIRE~~\\
    samples $x_1,\dots,x_n \sim p$, 
    null density $q$ \\%(metric tensor $G$ if necessary)\\
    kernel function $k$,
    test size $\alpha$\\
    bootstrap sample size $B$\\
\ENSURE~~\\
Test $H_0: p=q$ versus $H_1: p\neq q$.
\renewcommand{\algorithmicensure}{\textbf{Test procedure:}}
\ENSURE~~\\
\STATE Compute the statistic ${{\textnormal{mKSD}}_v}^{(c)}(p\|q)^2$, Eq.\eqref{eq:u-stats}.
\FOR{$t=1:B$}
\STATE Sample Rademacher variables $W_{1,t},\dots,W_{n,t}$. %via Eq.\eqref{eq:bootstrap-weights}.
\STATE Compute $S_t$ by Eq.\eqref{eq:bootstrap-null}.
\ENDFOR
\renewcommand{\algorithmicrequire}{\textbf{Output:}}
\STATE Determine the $(1-\alpha)$-quantile $\gamma_{1-\alpha}$ of $S_1,\dots,S_B$.
\REQUIRE~~\\
Reject $H_0$ if ${{\textnormal{mKSD}}_v}^{(c)}(p\|q)^2 > \gamma_{1-\alpha}$; otherwise do not reject $H_0$.
\end{algorithmic}
\vspace{-0.08cm}
\end{algorithm}
%\fi

% {
% \textbf{The testing procedure} 
% \textbf{1}) Set a type I error $\alpha\in (0,1)$ and null simulation trials $B$. \textbf{2}) Compute test statistic $n\cdot{{\operatorname{mKSD}}}_u^{(c)}(p\|q)^2$ (or $n\cdot{{\operatorname{mKSD}}}_v^{(c)}(p\|q)^2$)
% % using observed $n$ data points 
% from Eq.\eqref{eq:u-stats} (or Eq.\eqref{eq:v-stats}).
% % and corresponding Stein matrix $M^{(c)}$. 
% \textbf{3}) Compute independent copies of estimator of the null $\{S_t\}_{t=1}^{B}$ in   Eq.\eqref{eq:spectrum-null} (or Eq.\eqref{eq:bootstrap-null}). \textbf{4)} Compute the 
% % the p-value:
% the proportion of simulated null samples 
% that are larger than 
% % exceeding the 
% the test statistic. If 
% such a proportion 
% % the p-value 
% is smaller than $\alpha$ we reject the null hypothesis, otherwise do not reject.
% }

% \matsuda{We can present U-test and V-test like the previous paper.}
% \wk{yes, and the asymptotics for i.i.d are also standard extensions.}

%The Stein kenrel of $\operatorname{mKSD}^{(0)}$ has the form $h_q^{(0)}(x,\tilde{x}) = K(x, \tilde{x}) - \E_{X\sim q}[K(x,X) + k(X,\tilde{X})] +\E_{X,\tilde{X}\sim q}[K(X,\tilde{X})]$, which is assumed to be square integrable w.r.t. $p$. 

\paragraph{Kernel choice}
The performance of kernel-based testing is sensitive to the choice of kernel parameters.
We choose the kernel parameters by maximizing an approximation of the test power following \cite{gretton2012optimal,jitkrittum2016interpretable,sutherland2016generative}.
From Theorem \ref{thm:u-stat-asymptotic},
\[
D := \sqrt{n}\frac{ {\textnormal{mKSD}_u^2(p\|q)}-\textnormal{mKSD}^2(p\|q)}{\sigma_{c}}\overset{d}{\to}\mathcal{N}(0,1)
\]
under the alternative $H_{1}:p \neq q$.
Thus, for sufficiently large $n$, the test power is approximated as
$\mathrm{Pr}_{H_1}(n \cdot {\textnormal{mKSD}_u^{(c)}(p\|q)}^2>r)\approx \Phi \left(\sqrt{n}\frac{\textnormal{mKSD}^{(c)}(p\|q)^2}{\sigma_{c}} \right)$ \cite{sutherland2016generative}.
\iffalse
\begin{align*}
&\mathrm{Pr}_{H_1}(n \cdot {\textnormal{mKSD}_u^{(c)}(p\|q)}^2>r) \\
=&\mathrm{Pr}_{H_1}\left( D>\frac{r}{\sqrt{n}\sigma_{c}}-\sqrt{n}\frac{\textnormal{mKSD}^{(c)}(p\|q)^2}{\sigma_{c}} \right)\\
\approx& 1 - \Phi \left( \frac{r}{\sqrt{n}\sigma_{c}}-\sqrt{n}\frac{\textnormal{mKSD}^2(p\|q)}{\sigma_{c}} \right)\\
\approx& \Phi \left(\sqrt{n}\frac{\textnormal{mKSD}^2(p\|q)}{\sigma_{c}} \right), \label{eq:power}
\end{align*}
%$$\mathbb{P}_{H_{1}}(n \cdot {\mathrm{mKSD_u^{2}}}>r)\approx1-\Phi\left(\frac{r}{\sqrt{n}\sigma_{H_{1}}}-\sqrt{n}\frac{\mathrm{mKSD^{2}}}{\sigma_{H_{1}}}\right),$$ 
where $\Phi$ denotes the cumulative distribution function of the standard normal distribution and we used the approximation \cite{sutherland2016generative}
\[
\frac{r}{\sqrt{n}\sigma_{c}}-\sqrt{n}\frac{\textnormal{mKSD}^2(p\|q)}{\sigma_{c}} \approx -\sqrt{n} \frac{\textnormal{mKSD}^2(p\|q)}{\sigma_{c}}
\]
for sufficiently large $n$.
\fi
Thus, we choose the kernel parameters by maximizing an estimate of $\textnormal{mKSD}^2(p\|q)/{\sigma}_{c}$ \cite{jitkrittum2017linear}.
% \begin{align*}
%   &\mathbb{P}_{H_{1}}(n\cdot {\mathrm{mKSD^{2}}}>r)=\mathbb{P}_{H_{1}}({\mathrm{mKSD^{2}}}>r/n) \\
%  & =\mathbb{P}_{H_{1}}\left(\sqrt{n}\frac{{\mathrm{mKSD^{2}}}-\mathrm{mKSD}^{2}}{\sigma_{H_{1}}}>\sqrt{n}\frac{r/n-\mathrm{mKSD^{2}}}{\sigma_{H_{1}}}\right).  
% \end{align*}
% For sufficiently large $n$, the alternative distribution is approximately normal as given in Theorem \ref{thm:u-stat-asymptotic}. It follows that\footnote{The derivation is valid for all $c=0,1,2$.}
% $$\mathbb{P}_{H_{1}}(n\cdot\widehat{\mathrm{mKSD^{2}}}>r)\approx1-\Phi\left(\frac{r}{\sqrt{n}\sigma_{H_{1}}}-\sqrt{n}\frac{\mathrm{mKSD^{2}}}{\sigma_{H_{1}}}\right)
% $$
%For large $n$, the term inside $\Phi$ is dominated by $\frac{\mathrm{mKSD^{2}}}{\sigma_{H_{1}}}$, where a larger value correspond to larger test power.

\section{Model Criticism  on $\M$}\label{sec:model_criticism}
Now, we propose mKSD-based model criticism procedures for distributions on Riemannian manifolds. 
When the proposed model does not fit the observed data well, understanding which part of the model misfit the data is of practical interest. 
{The model criticism study can be helpful to better understand the representative prototype \cite{kim2016examples}, to criticize prior assumptions in Bayesian settings \cite{lloyd2015statistical} or to help better training of generative models \cite{sutherland2016generative}. 
With kernel-based non-parametric testing, distributional features can be extracted in the form of \emph{test locations} to represent areas that \emph{``best distinguish''} distributions. 
The locations where two sample distributions differ the most via MMD are studied in \cite{jitkrittum2016interpretable} and the most ``mis-specified'' locations between samples and models via KSD are studied in \cite{jitkrittum2017linear}. 
Recently, \cite{seth2019model} studied the model criticism via latent space, which may intrinsically correspond to Riemannian manifold structures. 
Such setting can be an interesting application of our development. 
}
%{We also provide a systematic approach to perform model criticism for a class of models in Section \ref{sec:wind_direction}.}
%Here, we propose the model criticism criteria based on mKSD. 

Let
$\textbf{s}_p(\cdot) = \E_{\tilde{x}\sim p}[\A_q^{(1)} k(\tilde{x},\cdot)] \in \mathcal H^d$.
% and $\operatorname{mKSD}^{(1)}(p\|q)^2=\|\textbf{s}_p(\cdot)\|^2_{\mathcal H^d} \geq 0$. 
We define the manifold Finite Set Stein Discrepancy (mFSSD) adpated from \cite{jitkrittum2017linear} by
\begin{equation}\label{eq:mFSSD}
\operatorname{mFSSD}^2 = \frac{1}{dJ}\sum_{i=1}^d\sum_{j=1}^J ({\textbf{s}}_p(\mathbf{v}_j))_i^2,
\end{equation}
which can be computed in linear time of sample size $n$.
%For any choice of test locations $\{\mathbf{v}_{j}\}_{j=1}^{J} \in \M$,
Stein identity of $\textbf{s}_p(\cdot)$
% with test functions $f=K(\mathbf{v}_{j},\cdot) \in \H$ 
ensures  $\operatorname{mFSSD}^2 = 0$ under $H_0$ with probability one \cite[Theorem 1]{jitkrittum2017linear}.
To perform model criticism, we extract some test locations that give a higher detection rate (i.e., test power) than others. 
We choose the test locations $V=\left\{ \mathbf{v}_{j}\right\}_{j=1}^{J}$ by maximizing the approximate test power:
\begin{equation}\label{eq:mFFSD_obj}
    V = \argmax_{\mathbf{v}} \frac{\mathrm{mFSSD^{2}}}{\tilde{\sigma}_{H_{1}}},
\end{equation}
where $\tilde{\sigma}_{H_{1}}$ is the variance of $\operatorname{mFSSD}^2$ under $H_1$. More details are shown in Proposition \ref{prop:fssd_asym} and \ref{prop:mfssd-power} in the supplementary material.

\section{Comparison between mKSD Tests}\label{sec:efficiency}
\paragraph{Bahadur efficiency}
From Theorem \ref{thm:u-stat-asymptotic}, mKSD tests are consistent against all alternatives.
Thus, to understand which mKSD test is more powerful than others, we investigated their \emph{Bahadur efficiency} \cite{bahadur1960stochastic}, which quantify how fast the p-value goes to zero under alternatives.
%More details regarding Bahadur Efficiency is provided in the supplementary material.
Here, to focus on the effect of the choice of Stein operator on test performance, we briefly present results for testing of uniformity on the circle $\Sp^1$ under the von-Mises distribution.
See supplementary material for more details.
The technique of the proof is adapted from \cite{jitkrittum2017linear}.
% In this section, we discuss the ARE between different $\A_q$s. We then discuss the computational aspect of different tests.

% Namely, the null distribution $q$ is set to the uniform distribution and the data model is set to the von-Mises distribution $p \propto \exp(\kappa x^{\top}u)$. 
% Note that the uniform density can be seen as the von-Mises density with $\kappa=0$. 
%Now we present the result for test efficiency when scaling $\kappa$ changes. 

% Intuition: by assuming test functions $f$ in unit ball RKHS, one higher order differential operator requires $\nabla f$ to be the equivalent test function class?!
\begin{Theorem}\label{thm:efficienty_12}(Scaling shift in von-Mises distribution)
Let $x\ \in \Sp^1$, $q(x) \propto 1$ and  $p(x)\propto \exp{(\kappa u^{\top} x)}$. Choose the von-Mises kernel of the form $k(x,x') = \exp{(x^{\top}x')}$. 
Denote the approximate Bahadur efficiency between $\operatorname{mKSD}$ with first and second order  Stein operators as
$${\rm E}_{1,2}(\kappa):=\frac{c^{(\operatorname{mKSD}^{(1)})}(\kappa)}{c^{(\operatorname{mKSD}^{(2)})}(\kappa)},$$ 
where $\kappa>0$. 
Then ${\rm E}_{1,2}(\kappa) > 1$.
\end{Theorem}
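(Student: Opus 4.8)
First I would recall the definition of approximate Bahadur efficiency in this context. For a test statistic $T_n$, the Bahadur slope measures the exponential rate at which the $p$-value decays under a fixed alternative. In the kernel-testing setting of \cite{jitkrittum2017linear}, the relevant quantity is the ratio of the (asymptotic mean of the) statistic under $H_1$ to the leading eigenvalue of the Stein kernel under $H_0$; concretely, for each operator $c$, the approximate Bahadur slope is $c^{(\operatorname{mKSD}^{(c)})}(\kappa) = \operatorname{mKSD}^{(c)}(p\|q)^2 / \lambda_{\max}^{(c)}$, where $\lambda_{\max}^{(c)}$ is the largest eigenvalue of $h_q^{(c)}$ under $q$ (the null). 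So the plan is: (i) compute $\operatorname{mKSD}^{(1)}$ and $\operatorname{mKSD}^{(2)}$ explicitly for $q$ uniform, $p \propto \exp(\kappa u^\top x)$ on $\Sp^1$ with the exponential kernel $k(x,x')=\exp(x^\top x')$; (ii) compute the corresponding largest eigenvalues of the two Stein kernels under $q$; (iii) form the ratio $\mathrm{E}_{1,2}(\kappa)$ and show it exceeds $1$ for all $\kappa>0$.

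**Carrying out the computations.**

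On $\Sp^1$ I would parametrize by the angle $\theta$, so $x = (\cos\theta,\sin\theta)$, the metric is trivial ($g=1$, $J=1$), and $\log q$ is constant, so $\partial_\theta \log(qJ) = 0$. The first order operator reduces to $\A^{(1)}_q f = f' + f\,\partial_\theta \log(q J) = f'$ in this one-dimensional case — wait, more carefully, for testing $p$ against $q$ the Stein kernel $h_q^{(1)}$ is built from $\A_q^{(1)} k(x,\cdot)$, and the score entering is $\partial_\theta \log q = 0$. Since the paper's $\operatorname{mKSD}^{(2)}$ arises from the choice $f^i = \sum_j g^{ij}\partial_j \tilde f$, in dimension one this is $f = \tilde f'$; thus $\operatorname{mKSD}^{(2)}$ optimizes over the subclass $\{\tilde f' : \|\tilde f\|_\H \le 1\}$ while $\operatorname{mKSD}^{(1)}$ optimizes over all $\{f : \|f\|_\H \le 1\}$. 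The key structural point is that both statistics are quadratic forms in the expected features, and with the von-Mises kernel everything diagonalizes in the Fourier basis $\{e^{in\theta}\}$: the kernel $\exp(x^\top x') = \exp(\cos(\theta-\theta'))$ has Fourier coefficients given by modified Bessel functions $I_n(1)$, and the alternative $p$ contributes Fourier mass concentrated (essentially) at the first mode with weight controlled by $I_1(\kappa)/I_0(\kappa)$. I would compute each $\operatorname{mKSD}^{(c)}$ as a sum over Fourier modes $n$ of (mode weight of $p$)$^2 \times$ (operator-dependent factor $\mathrm{a}_n^{(c)}$) $\times$ (kernel coefficient), and similarly express $\lambda_{\max}^{(c)}$ as the maximum over $n$ of the corresponding diagonal entries.

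**The main obstacle and how to finish.**

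The heart of the matter — and the step I expect to be most delicate — is establishing $\mathrm{E}_{1,2}(\kappa) > 1$ uniformly in $\kappa>0$ after all four quantities are in closed form. The extra derivative in $\operatorname{mKSD}^{(2)}$ multiplies the $n$-th Fourier mode by an extra factor $n^2$ in both numerator and the eigenvalues, so the inequality will come down to comparing two ratios of Bessel-function-weighted series where the second has an $n^2$ tilt. I would try to reduce it to a single-mode comparison: argue that for the specified alternative the numerator sums are dominated by the $n=\pm1$ mode (exactly so if the alternative is a pure first-harmonic shift, which is the case here since $\nabla\log(p/q) \propto u^\top \nabla x$ lives in the first mode), so that $\operatorname{mKSD}^{(c)}(p\|q)^2$ is proportional to $\mathrm{a}_1^{(c)} \, I_1(1) \,(I_1(\kappa)/I_0(\kappa))^2$ with $\mathrm{a}_1^{(1)} = 1$, $\mathrm{a}_1^{(2)}=1$ as well up to the derivative bookkeeping — so the $\kappa$-dependent factor cancels in the ratio and $\mathrm{E}_{1,2}$ becomes a pure constant determined by the eigenvalue denominators. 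Then the inequality $\mathrm{E}_{1,2}>1$ follows because $\operatorname{mKSD}^{(1)}$ is a supremum over a strictly larger RKHS ball than $\operatorname{mKSD}^{(2)}$ (the latter being restricted to gradient fields $f=\tilde f'$), so its largest eigenvalue $\lambda_{\max}^{(1)}$ cannot outgrow the signal by as much as $\lambda_{\max}^{(2)}$ does — more precisely, I would show $\lambda_{\max}^{(2)}/\operatorname{mKSD}^{(2)2} > \lambda_{\max}^{(1)}/\operatorname{mKSD}^{(1)2}$ by checking that the $n^2$ tilt pushes the eigenvalue maximum of $h_q^{(2)}$ to a higher mode (or inflates it more) than it does the signal, whereas for $h_q^{(1)}$ the eigenvalue maximum already coincides with the signal mode $n=1$ because $I_n(1)$ is decreasing in $|n|$. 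If the domination-by-first-mode claim turns out to be only approximate rather than exact, the fallback is to bound the tail modes $n\ge 2$ crudely using monotonicity of $I_n(1)$ and $I_n(\kappa)/I_0(\kappa)$ in $n$, which suffices since those terms are strictly positive and only help the inequality in the direction claimed. I would close by noting equality is never attained for $\kappa>0$ since the first-mode signal is then strictly positive, giving the strict inequality $\mathrm{E}_{1,2}(\kappa)>1$.
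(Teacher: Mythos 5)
Your overall architecture---factor $\mathrm{E}_{1,2}$ into a ``signal'' ratio involving the alternative and a ``noise'' ratio involving only the null, then evaluate both via Fourier--Bessel expansions on $\Sp^1$---is essentially the paper's route: the paper writes $\mathrm{E}_{1,2}(\kappa) = \frac{\E_p[h^{(1)}_q]}{\E_p[h^{(2)}_q]}\cdot\frac{\E_q[(h^{(2)}_q)^2]^{1/2}}{\E_q[(h^{(1)}_q)^2]^{1/2}}$, observes that the second factor is $\kappa$-free and evaluates it numerically as $1.692$, and lower-bounds the first factor. However, two steps in your plan would fail. First, you normalize the slope by the largest eigenvalue $\lambda^{(c)}_{\max}$ of the Stein kernel under $q$, whereas the slope $c^{(\operatorname{mKSD}^{(c)})}$ appearing in the theorem is defined (Proposition~\ref{thm:abs}) with denominator $\E_q[(h^{(c)}_q)^2]^{1/2}$, i.e.\ the Hilbert--Schmidt norm $\bigl(\sum_j (w^{(c)}_j)^2\bigr)^{1/2}$ rather than the operator norm; these differ whenever the Stein kernel is not rank one, so you would be bounding a different ratio, and the numerical constant on which the whole argument turns would change.

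Second, and more seriously, the claim that the alternative contributes only to the first Fourier mode---so that the $\kappa$-dependence cancels and $\mathrm{E}_{1,2}$ is a constant---is false. Although the score $\partial_\theta\log(q/p)$ is a pure first harmonic, the signal $\E_{x,\tilde{x}\sim p}[h^{(c)}_q(x,\tilde{x})]$ is a quadratic form in the density $p$ itself, and the von-Mises density has nonvanishing Fourier coefficients $I_n(\kappa)/I_0(\kappa)$ at \emph{every} harmonic $n$; each mode enters weighted by roughly $n^2$ for $c=1$ and $n^4$ for $c=2$. Hence the signal ratio $\E_p[h^{(1)}_q]/\E_p[h^{(2)}_q]$ is genuinely $\kappa$-dependent (the paper asserts it is monotone decreasing in $\kappa$), and your fallback---that the tail modes ``only help the inequality''---points the wrong way: the second-order Stein kernel up-weights high harmonics more strongly, so as $\kappa$ grows and spectral mass shifts to higher harmonics, $\E_p[h^{(2)}_q]$ inflates faster than $\E_p[h^{(1)}_q]$, pushing the signal ratio \emph{down}, toward violating the claim. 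The missing ingredient is precisely a uniform-in-$\kappa$ lower bound on this signal ratio large enough to exceed the reciprocal of the noise constant (the paper supplies monotonicity plus a lower bound of $2$ against the constant $1.692$); without such a bound the proof does not close.
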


Adapting  \cite[Theorem 5]{jitkrittum2017linear}, it suffices to show $\operatorname{mKSD}^{(1)}(p\|q)$ $\geq$ $ \operatorname{mKSD}^{(2)}(p\|q)$ and 
$$
\E_{x,\tilde{x} \sim q}[h_q^{(2)}(x,\tilde{x})^2] > \E_{x,\tilde{x} \sim q}[h_q^{(1)}(x,\tilde{x})^2] >0.
$$ 
See supplementary material for details.
%Detailed derivations are shown in the supplementary material.

We provide additional discussion on the relative test efficiencies with $\operatorname{mKSD}^{(0)}$ in the supplementary material. 
In general, since we cannot compute $\E_p$ in closed form, especially with unnormlized density, we need to perform the test with samples drawn from the null, where sampling error makes the $\operatorname{mKSD}^{(0)}$ test less 
% asymptotically 
efficient in overall \cite{jitkrittum2017linear,yang2019stein, xu2020stein}. %, which we also demonstrate in our experiment section.

\paragraph{Computational efficiency}
Since the Stein kernels $h_q^{(1)}$ and $h_q^{(2)}$ depend on $q$ only through the derivative of $\log q$, mKSD tests with the first and second order Stein operators do not require computation of the normalization constant of $q$.
This is a major computational advantage over existing goodness-of-fit tests on Riemannian manifolds.
While the computational cost of $\operatorname{mKSD}_u^{(1)}$ is $O(n^2d)$, that of $\operatorname{mKSD}^{(2)}$ is $O(n^2d^3)$ due to the computation of the metric tensor.

On the other hand, mKSD test of zeroth order is equivalent to testing whether two sets of samples are from the same distribution by using MMD \cite{gretton2007kernel}.
% \footnote{This procedure is sometimes referred to as the pseudo goodness-of-fit test.}. %, which is the MMD is originally designed for.
Namely, to test whether $x_1,\dots,x_n$ is from density $q$, we draw samples $y_1,\dots,y_m$ from $q$ and determine whether $x_1,\dots,x_n$ and $y_1,\dots,y_m$ are from the same distribution. 
This procudre requires to sample from the null distribution $q$ on Riemannian manifolds, which is computationally intensive in general. 
%In addition, such MMD-based tests suffer from sampling error and perform worse than accessing density $q$ directly \cite{jitkrittum2017linear,yang2019stein}. 
Note that the results in Theorem \ref{thm:u-stat-asymptotic} with $c=0$ replicate the asymptotic results for MMD \cite{gretton2007kernel}.

\paragraph{Choosing mKSD tests}
Overall, $\operatorname{mKSD}^{(1)}$ has its advantage in terms of having a large space of test functions with both asymptotic test efficiency and computational efficiency so that it is recommended to use when available. 
$\operatorname{mKSD}^{(2)}$ can be slightly easier to compute and parameterize in particular scenarios, although it may sacrifice test power and computational efficiency. 
$\operatorname{mKSD}^{(0)}$, or namely MMD test, is also applicable when it is possible to sample from the given unnormalized density model on Riemannian manifolds.

%\section{Experimental Results}\label{sec:exp}
%We can describe a set of data on Riemannian manifold such as directional distribution, Stiefel manifold, Grassmann manifold or even functional data as manifold.
%Existing goodness-of-fit test for Riemannian manifold focus on the test for uniformity, such as utilizing Sobolev type of uniformity test \cite{chikuse2004test,jupp2005sobolev,jupp2008data}. 
% In this section, we show the validity and usefulness of the proposed mKSD tests by simulation and real data applications.
%are capable of dealing with more general distribution such as Matrix-Langevin distribuiton (or Bingham-type distribution) known up to an normalization constant.

\section{Simulation Results}\label{sec:simulation}
%\paragraph{Uniformity Test on Rotation Group}
In this section, we show the validity of the proposed mKSD tests by simulation on the rotation group $\operatorname{SO(3)}$.
% The volume measure for SO(3) is $J=\operatorname{Vol}(SO(3)) = 2\cdot 2\pi \cdot 2\pi= 8\pi^2$.
%We then perform the mKSD tests with different Stein operators introduced. 
We use the Euler angle \cite{chikuse2012statistics} as the coordinate system. %, paramterized for $\theta$ in Eq.\eqref{eq:stein1_ae_coord}. 
The bootstrap sample size is set to $B=1000$.
The significance level is set to $\alpha=0.01$.
For the mKSD$^{(0)}$ test (MMD two-sample test), the number of samples from the null is set to be equal to the sample size $n$. 
We used the kernel $k(X,Y)=\exp (\eta \cdot \mathrm{tr}(X^{\top} Y))$, where the parameter $\eta$ was chosen by optimizing the approximate test power. 
The exponential-trace kernel $k(X,Y) = \exp(\eta\cdot \mathrm{tr}(X^{\top} Y))$ for the rotation group is compact universal.
To see this, we rewrite the kernel in the form analogous to the Gaussian kernel:
$k(X,Y) = \exp(\eta\cdot \mathrm{tr}(X^{\top} Y)) = C \cdot \exp(-\frac{1}{2}\eta\cdot \|X - Y\|_F^2)$, where $C$ is a constant that only depends on $d$, the dimension of the matrices $X,Y \in \operatorname{SO(d)}$ and we know that $\mathrm{tr}(X^{\top} X)=\mathrm{tr}(I_d)=d$ for all $X \in \operatorname{SO(d)}$. 
Then, since the Gaussian kernel is universal \cite{sriperumbudur2011universality} and the rotation group $\operatorname{SO(d)}$ is a compact subset of the space of $d \times d$ matrices, the exponential-trace kernel is also compact-universal from Corollary 3 of \cite{carmeli2010vector}.

%We compare our tests with Sobolev type of tests in \cite{jupp2005sobolev}. 

\subsection{Uniform distribution} 
First, we consider testing of uniformity on $\operatorname{SO(3)}$ and compare the performance of the mKSD tests with the Sobolev test \cite{jupp2005sobolev}.
We generated samples from the exponential trace distribution $p(X \mid \kappa) \propto \exp (\kappa \cdot \mathrm{tr}(X))$ by the rejection sampling \cite{hoff2009simulation}. 
The uniform distribution corresponds to $\kappa=0$. 
%We repeated 600 trials to calculate rejection rates. 

% Table \ref{tab:uni_size} presents the rejection rate at the null.
% The type-I errors of all tests are well controlled to the significance level $\alpha = 0.01$.
% Tables \ref{tab:uni_power} and \ref{tab:uni_power2} present the rejection rate under the von Mises distribution with concentration $\kappa=0.5$ and $\kappa=1$, respectively.
% The power of all tests increases with increasing $n$ or $\kappa$ and converges to one.
% The dKSDv has the largest power.

Figure \ref{fig:synthetic-problems} (a) plots the rejection rates with respect to $\kappa$ for $n=100$.
When $\kappa=0$, the type-I errors of all tests are well controlled to the significance level $\alpha = 0.01$.
The power of all tests increases with increasing $\kappa$ and converges to one.
Figure~\ref{fig:synthetic-problems} (b) plots the rejection rates with respect to $n$ for $\kappa=0.35$.
The power of all tests increases with $n$ and converges to one.
When the model becomes increasingly different from the null, the mKSD1 is more sensitive to distinguish the difference, with higher power than others. 
\subsection{Fisher distribution}
Next, we consider the {Fisher distribution} (or matrix-Langevin distribution) $p(X \mid F) \propto \exp (\mathrm{tr}(F^{\top} X))$ \cite{chikuse2003concentrated,sei2013properties}.
% \begin{equation}\label{eq:fisher_dist}
%     p(X \mid F) \propto \etr(F^{\top}X) , \quad X \in SO(3).
% \end{equation}
%We test the null model $H_0: F = F_0$, against $H_1: F \neq F_0$.  
We generated data from $p(X \mid F_0)$ and applied mKSD tests on the null $p(X \mid F_b)$, where
$$
F_b = \begin{pmatrix}
1 & b & 0\\
b & 1 & 0\\
0 & 0 & 1
\end{pmatrix}.
$$
We compare the mKSD tests with the extended Sobolev test \cite{jupp2005sobolev}, in which we compute the normalization constant by Monte Carlo. 

Figure~\ref{fig:synthetic-problems} (c) plots the rejection rates with respect to $b$ for $n=100$.
Figure~\ref{fig:synthetic-problems} (d) plots the rejection rates with respect to $n$ for $b=0.2$.
From the plot, we see that all tests achieves the correct test level under the null. 
When the model becomes increasingly different from the null, the mKSD1 is more sensitive to distinguish the difference, with higher power than others. 
MMD test has lower power than mKSD1 and mKSD2 due to inefficiency from sampling. 
%In addition, mKSD1 achieves maximal test power $1$ asymptotically faster.  
While the Sobolev test is useful when the null and the alternative are very different, %, e.g. with higher power than MMD in Fig.~\ref{fig:uniform-perturb} and \ref{fig:vmf-perturb}.
it is not powerful for harder problems where the alternative perturbed very little from the null.

\begin{figure*}[t!]
	\begin{center}
        \subfigure[$n=100$]%, $\alpha=0.05$]
		{\includegraphics[width=0.245\textwidth]{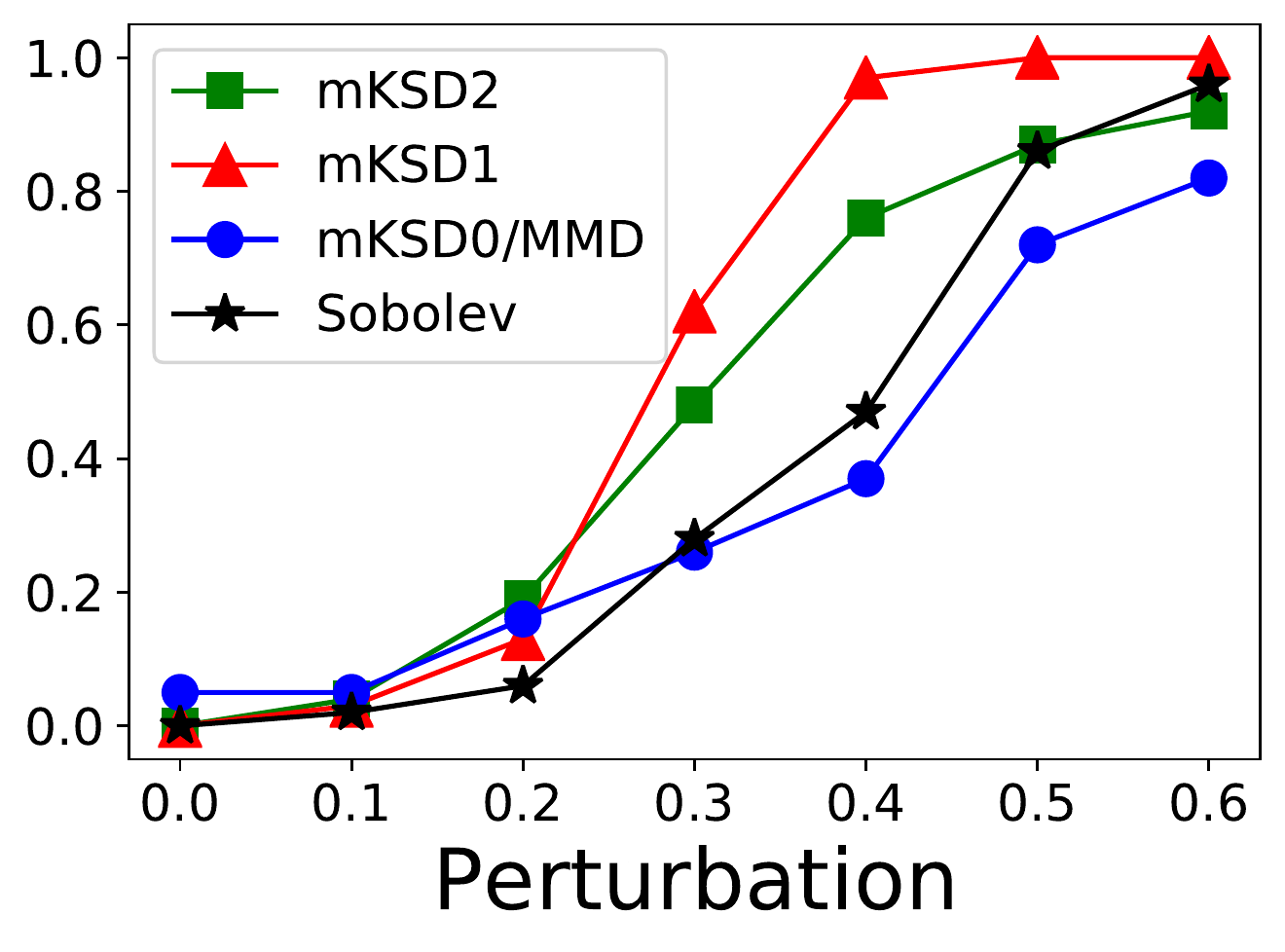}\label{fig:uniform-perturb}}\subfigure[$\kappa=0.35$]%, $\alpha=0.05$]
		{\includegraphics[width=0.245\textwidth]{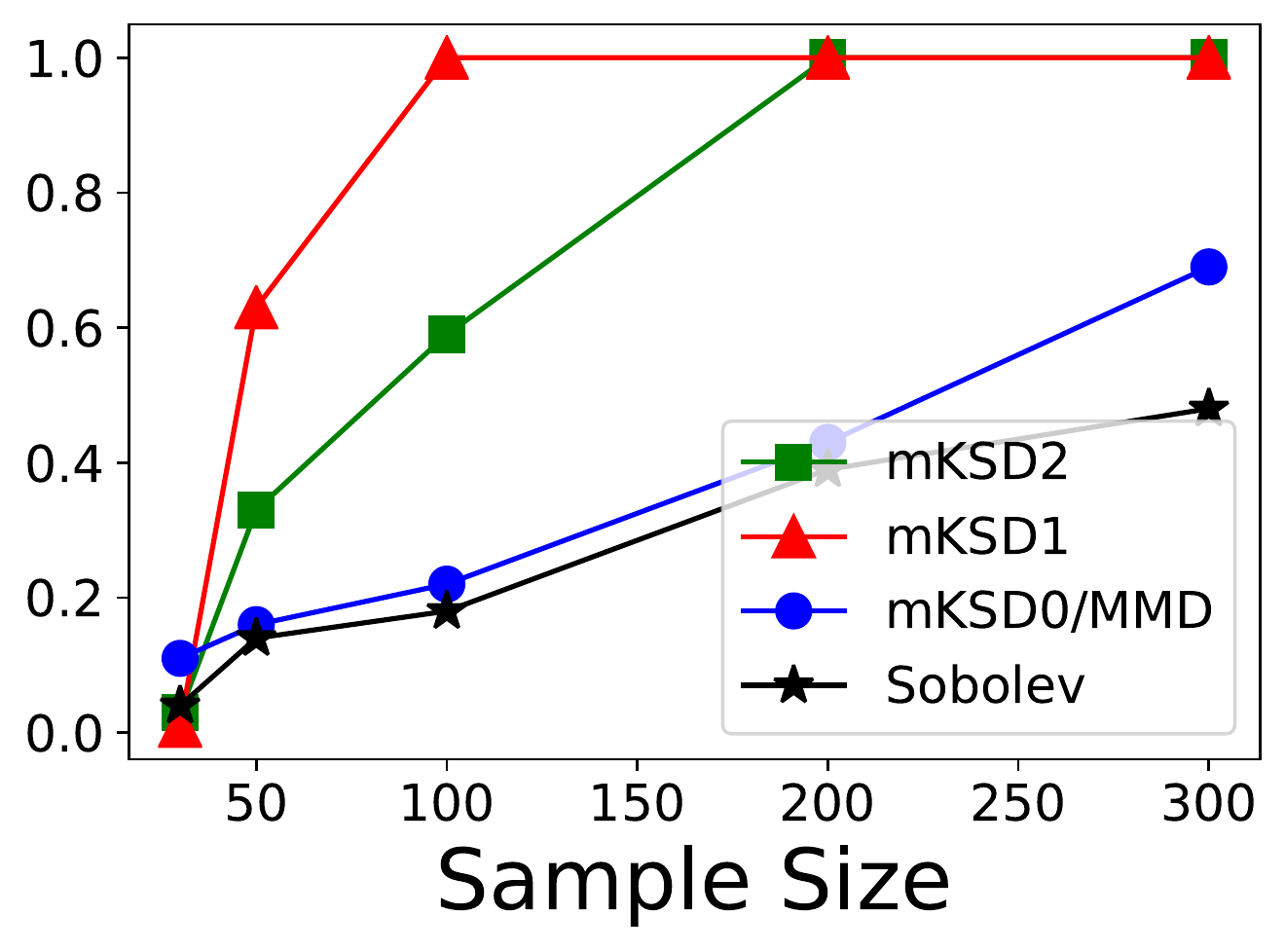}\label{fig:uniform-size}}
        \subfigure[$n=100$]%, $\alpha=0.05$]
		{\includegraphics[width=0.245\textwidth]{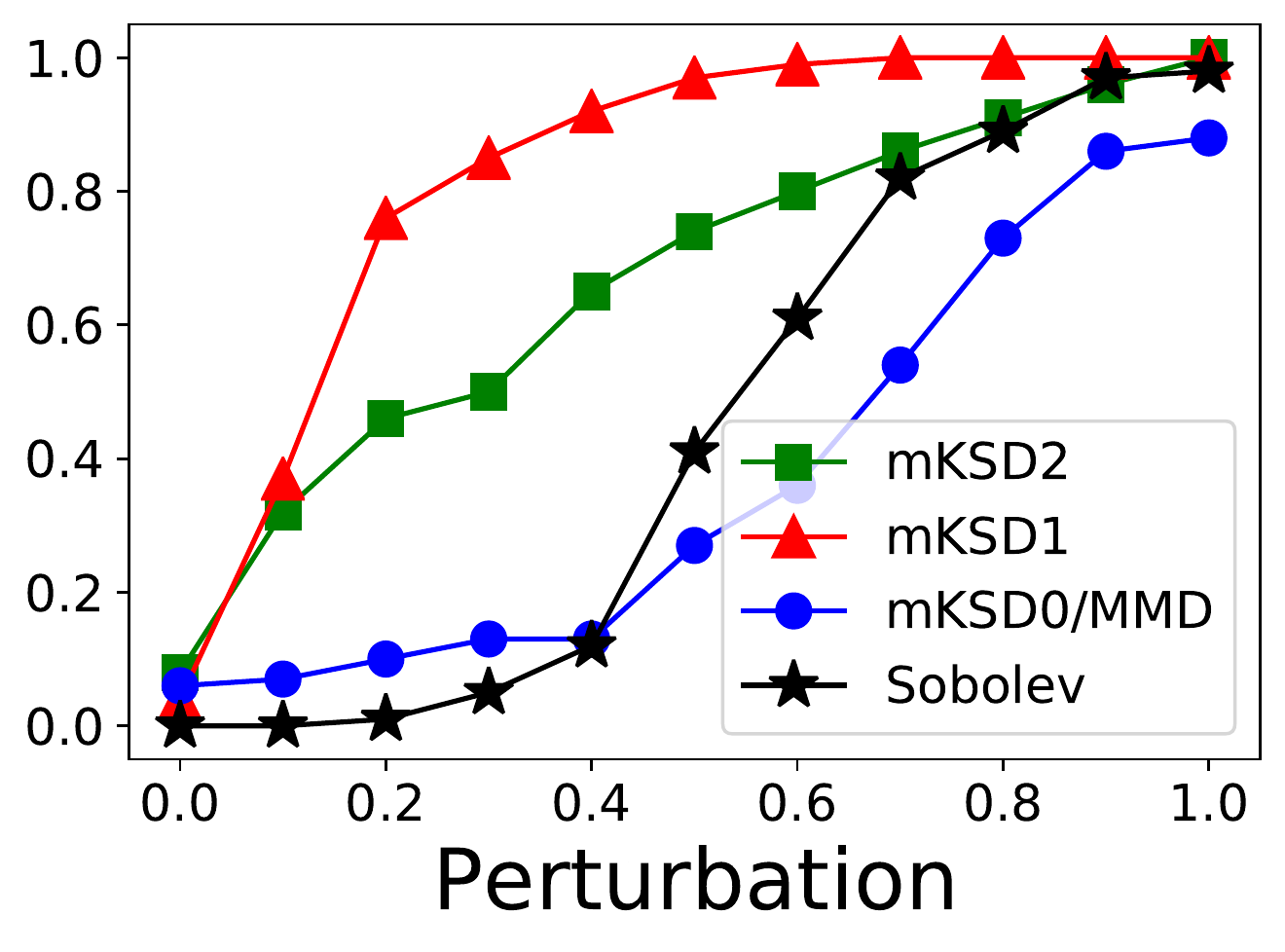}\label{fig:vmf-perturb}}\subfigure[$b=0.20$]%, $\alpha=0.05$]
		{\includegraphics[width=0.245\textwidth]{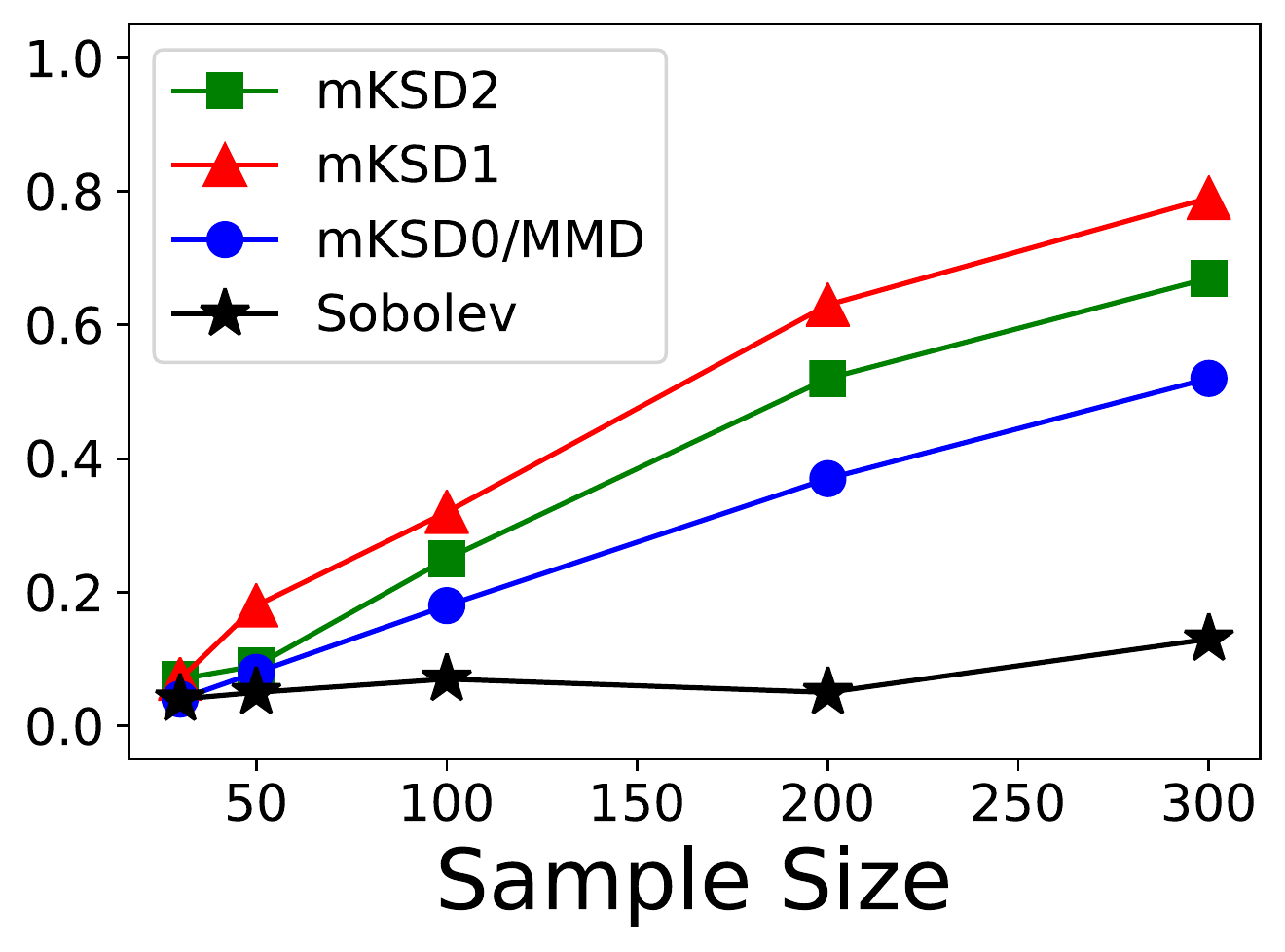}\label{fig:vmf-size}}
% 	\vspace{-0.3cm}
		\caption{Rejection rates at $\alpha=0.01$: a)-b) for uniform density; c)-d) for Fisher distribution on SO(3)}\label{fig:synthetic-problems}
	\end{center}
	\vspace{-0.3cm}
\end{figure*}

% \begin{figure}[t]
% \label{fig:synthetic-problems}
% 	\begin{center}
%         \subfigure[$n=100$, $\alpha=0.05$]
% 		{\includegraphics[width=0.23\textwidth]{uniform_perturb_new.pdf}\label{fig:uniform-perturb}}\subfigure[$\kappa=0.35$, $\alpha=0.05$]
% 		{\includegraphics[width=0.23\textwidth]{uniform_size_new.pdf}\label{fig:uniform-size}}
		
% 		\caption{Rejection rates for uniform distribution SO(3)}		
% 	\end{center}
% 	%\vspace{-0.5cm}
% \end{figure}

% \begin{figure}[t]
% \label{fig:synthetic-problems}
% 	\begin{center}
%         \subfigure[$n=100$, $\alpha=0.05$]
% 		{\includegraphics[width=0.23\textwidth]{vmf_perturb_new.pdf}\label{fig:vmf-perturb}}\subfigure[$b=0.2$, $\alpha=0.05$]
% 		{\includegraphics[width=0.23\textwidth]{vmf_size_new.pdf}\label{fig:vmf-size}}
		
% 		\caption{Rejection rates for Fisher distribution SO(3)}		
% 	\end{center}
% 	%\vspace{-0.5cm}
% \end{figure}

\section{Real Data Applications}\label{sec:exp}
Finally, we apply the mKSD tests to two real data problems.

\subsection{Vectorcardiogram data}
As a real dataset on the rotation group $\operatorname{SO(3)}$, we use the vectorcardiogram data studied by \cite{jupp2008data}. 
The data summarizes vectorcardiogram from normal children where each data point records 3 perpendicular vectors of directions QRS, PRS and T from Frank system for electrical lead placement. 
Details of this dataset can be found in \cite{downs1972orientation}. 
\cite{jupp2005sobolev} fitted the Fisher distribution $p(X \mid F) \propto \exp (\mathrm{tr}(F^{\top} X))$ to 28 data points of children aged between 2 to 10 and obtained the estimate
\[
\hat{F} = 5.63\times \begin{pmatrix}
0.583 & 0.629 & 0.514\\
0.660 & -0.736 & 0.151\\
0.473 & 0.252 & -0.844
\end{pmatrix}.
\]
We use this value as the null model to be tested. 
Table~\ref{tab:vcg} presents the p-values of each test\footnote{$k(X,Y)=\exp (\eta \cdot \mathrm{tr}(X^{\top} Y))$ is used as in Section \ref{sec:simulation}.
% , where the parameter $\eta$ was chosen by optimizing the approximate test power.
}.
All mKSD tests show strong evidence to reject the fitted model at 
$\alpha=0.05$; however, Sobolev test, with p-value=0.126, is not powerful enough to reject the null at the same test level.
%\matsuda{thanks}

\begin{table}[h]
    \centering\caption{p-values for vectorcardiogram data.}
    \label{tab:vcg}
\begin{tabular}{|c|c|c|c|}
\hline
mKSD1 &  mKSD2 &  mKSD0/MMD  &  Sobolev \\
\hline
0.004 & 0.000 &  0.010   &   0.126 \\
\hline
\end{tabular}
\end{table}

\begin{figure}[t!]
    \centering
    \includegraphics[width=0.3\textwidth,height=0.256\textwidth]{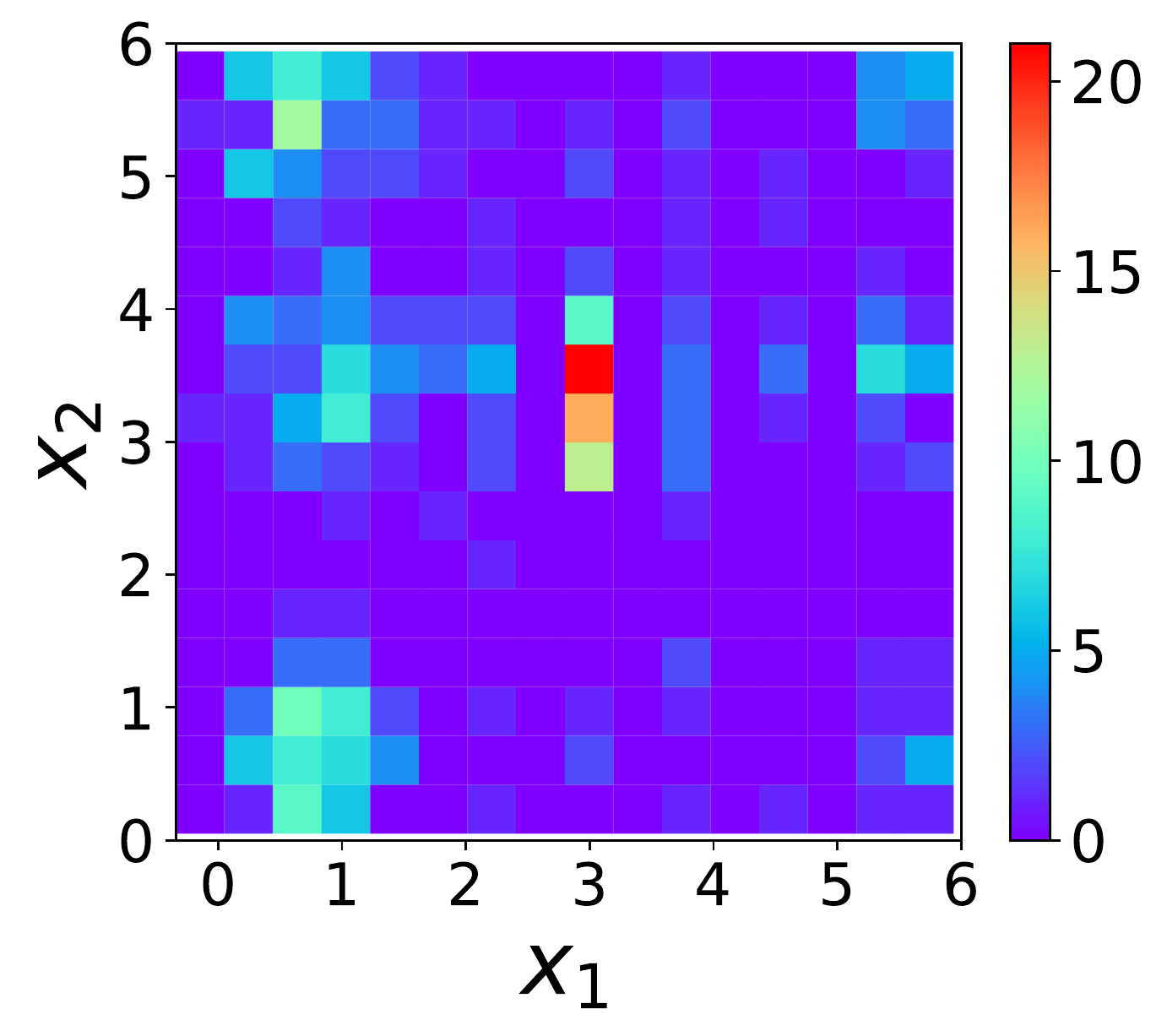}
        \includegraphics[width=0.25\textwidth,height=0.25\textwidth]{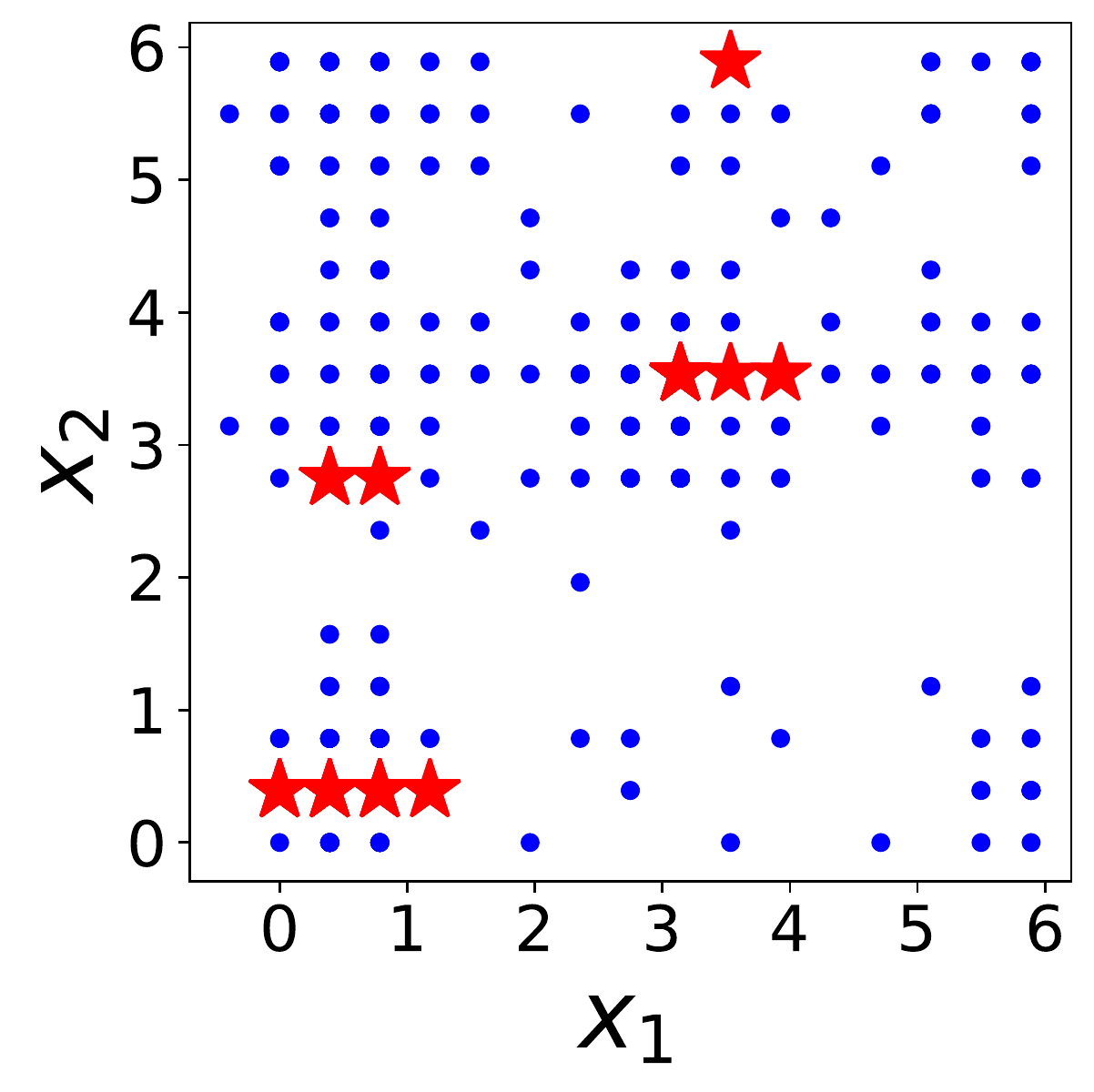}
            \includegraphics[width=0.3\textwidth,height=0.256\textwidth]{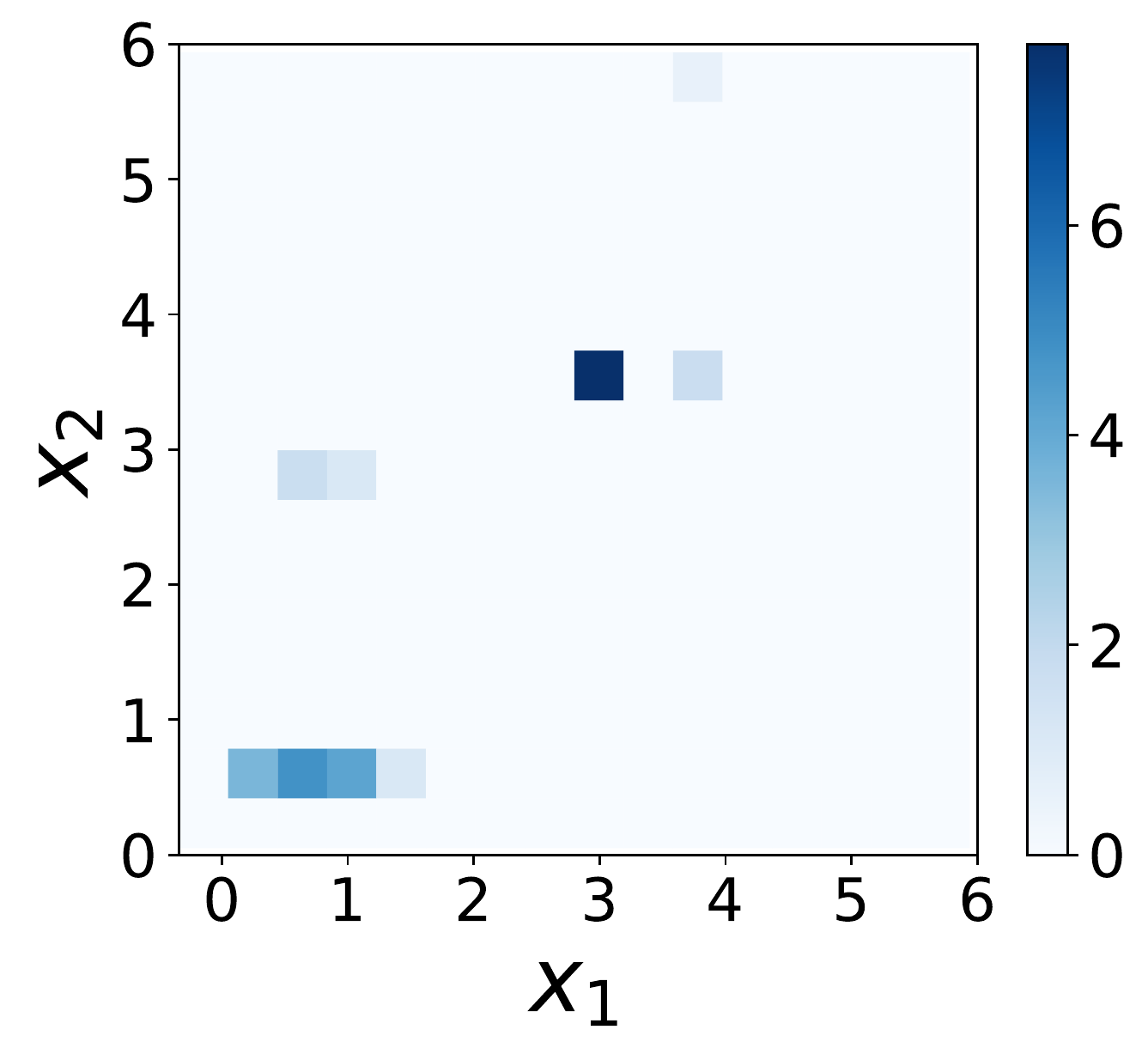}
    % \vspace{-0.3cm}
    \caption{Wind direction data. Left: 2D histogram for wind directions; colorbar shows the counts of data points in each square. Mid: the 10 optimized locations (in red star), without repetition. Right: the objective value in Eq.\eqref{eq:mFFSD_obj}, $\frac{\mathrm{mFSSD^{2}}}{\tilde{\sigma}_{H_{1}}},$ on the specified data location of test (i.e. setting J=1); the higher the darker. 
    % \wk{we may only need one of left or mid for visualization.}
    }
    % \vspace{-0.3cm}
    \label{fig:wind}
\end{figure}

\subsection{Wind direction data}\label{sec:wind_direction}
As a real data on torus, we consider wind direction in Tokyo on 00:00 ($x_1$) and 12:00 ($x_2$) for each day in 2018\footnote{available on Japan Meteorological Agency website}.
Thus, the sample size is $n=365$.
The data were discretized into 16 directions, such as north-northeast.
Figure~\ref{fig:wind} presents a $16 \times 16$ histogram of raw data.% in gray scale.

% \begin{figure}
%     \centering
%     \includegraphics[width=50mm]{} 
%     \caption{Wind direction data}
%     \label{fig:wind}
% \vspace{-0.3cm}
% \end{figure}

We consider gooodness-of-fit testing of the bivariate von Mises distribution in Eq.\eqref{bvM} via mKSD\footnote{We used the product kernel of the von Mises kernels $k((x_1,x_2),(y_1,y_2))=\exp (\eta_1 \cos(x_1-y_1) + \eta_2 \cos(x_2-y_2))$, where the parameters $\eta_1$ and $\eta_2$ were chosen by optimizing the approximate test power}.
By using noise contrastive estimation \cite{gutmann}, \cite{uehara2020imputation} fitted the bivariate von Mises distribution to the wind direction data and obtained the estimate
\[
\widehat{\xi}=(0.7170, 0.3954,1.1499, 1.1499,-1.1274).
\]
By setting this fitted model to the null model, the p-value by mKSD1 is 0.434, which indicates that the model fits data well.

% kappa = np.array([0.7170, 0.3954])
% mu = np.array([1.1499, 1.1499])
% lamb = -1.1274
In addition, we fitted a simpler model with no interactions between $x_1$ and $x_2$, i.e. $\lambda_{12}$ is set to zero in Eq.\eqref{bvM} so that the model reduces to the product of two von-Mises distribution on each direction.
% \[
% \tilde{q}(x_1, x_2;\theta) \propto &\exp{\{\kappa_1\cos(x_1 - \mu_1) +\{\kappa_2\cos(x_2 - \mu_2)\}}
% \].
The p-value by mKSD1 is 0.002, which is a strong evidence to reject the null model. 
In other words, there is a significant interaction between wind direction on 00:00 and 12:00.
We then carried out model criticism by mFFSD statistic in Eq.\eqref{eq:mFSSD} with optimized test location via maximizing approximate test power. 
Choosing the number of test locations $J=10$, we plot the optimized locations in Figure \ref{fig:wind}. 
It provides information about dependence between wind direction at midnight and noon.

% \section{Conclusion\matsuda{may be removed}}
% In this study, we developed goodness-of-fit and model criticism methods for general distributions on Riemannian manifolds.
% The proposed methods are based on extensions of kernel Stein discrepancy and do not require computation of the normalization constant.
% Experimental results demonstrated the validity of the proposed methods.

% Since functional data can be embedded into Riemannian manifolds, it is an interesting future work to apply the current framework to functional data analysis settings such as functional PCA and functional regression \cite{lin2019intrinsic}. 
% Also, it is of theoretical interest to investigate the effects of kernel and coordinate system on the performance of the proposed methods.

% \clearpage
\section*{Acknowledgement}
TM was supported by JSPS KAKENHI Grant Number 19K20220. WX was supported by Gatsby Charitable Foundation.

\bibliographystyle{apalike}
\bibliography{main}

\clearpage
\onecolumn
\appendix

\begin{center}
    \Large
        Supplementary Material for Interpretable Stein Goodness-of-fit Tests on Riemannian Manifolds \\
\end{center}
\section{Proofs and Derivations}

\subsection*{Stein's Identity}
Proof of Theorem \ref{thm:stein1_global_coord}
\begin{proof}
Let
$
\omega = \sum_{i=1}^d f^i \d \theta^{(-i)},
$
where 
$$\d \theta^{(-i)} =\d \theta^{i+1} \wedge \cdots \wedge \d \theta^{d} \wedge \d \theta^1 \cdots\ \wedge \d \theta^{i-1}$$ for $i=1,\dots,d$.
Then, 
\begin{align*}
\d(qJ\omega ) &= \sum_{i=1}^d \left( \frac{\partial f^i}{\partial {\theta}^i} + f^i \frac{\partial}{\partial {\theta}^i} \log (qJ) \right)\d \theta^{1} \wedge \cdots \wedge \d \theta^{d} =(qJ\mathcal{A}^{(1)}_{q} \mathbf f ) \d \theta^{1} \wedge \cdots \wedge \d \theta^{d}.
\end{align*}
Therefore, from Theorem~\ref{thm:stoke's} and Corollary~\ref{cor:stokes},
$$
\E_q[\mathcal{A}^{(1)}_{q} \mathbf f ] = \int_{\M}d(qJ\omega )=0.
$$
\end{proof}
\subsection*{Quadratic form of mKSD}
Proof of Theorem \ref{thm:quadratic_form}
\begin{proof} We show that, the mKSD admits the form of taking expectation over $p$ for bivariate functions $h^{(c)}_q$ which is independent of $p$. $h^{(c)}_q$ is also referred as the Stein kernel. The proof utilize the reproducing property of relevant RKHS and the fact that $\A_q^{(c)}$ is a linear functional of relevant test function $f$.

For $c=1$, the test function is a stack of $d$-dimensional RKHS functions $\mathbf{f}\in \H^{
d}$.
${\E}_p[\mathcal{A}^{(1)}_q \mathbf{f}]$ is a linear functional of $\mathbf{f} \in \mathcal{H}^{d}$.
Then, from the Riesz representation theorem, there uniquely exists $\mathbf{r}=(r_1,\dots,r_{d}) \in \mathcal{H}^{d}$ such that ${\E}_p[\mathcal{A}^{(1)}_q \mathbf{f}] = \langle \textbf{f}, \textbf{r}\rangle_{\mathcal{H}^{d}}$.
By using the reproducing property of $\mathcal{H}$ associate with kernel $k$, we obtain
\begin{align}
    r_i(x) = {\E}_{\tilde{x} \sim p} \left[ k(x,\tilde{x}) \frac{\partial}{\partial \tilde{\theta}^i} \log (qJ) + \frac{\partial}{\partial \tilde{\theta}^i} k(x,\tilde{x}) \right], \label{eq:def_g}
\end{align}
for $i=1,\dots,d$.
Thus, the maximization in $\operatorname{mKSD}^{(1)}(p\|q)$ is attained by $\mathbf{f}=\mathbf{r}/\|\mathbf{r}\|_{\mathcal{H}^{d}}$ and $\operatorname{mKSD}^{(1)}(p\|q)^2 =\| \mathbf{r} \|^2_{\mathcal{H}^{d}}$.
Therefore, the quadratic form is obtained after straightforward calculations: 
\begin{align*}
\operatorname{mKSD}^{(1)}(p\|q)^2 
&= \left\langle 
\E_{x\sim p}[\A_q^{(1)}k(x,\cdot)], \E_{\tilde{x}\sim p}[\A_q^{(1)}k(\tilde{x},\cdot)]\right\rangle_{\H^d} \\ 
& =\E_{x,\tilde{x}\sim p}
 \underset{h_q^{(1)}(x,\tilde{x})}{\left[\underbrace{\left\langle 
\A_q^{(1)}k(x,\cdot), \A_q^{(1)}k(\tilde{x},\cdot)\right\rangle_{\H^d}}\right]},
\end{align*}
and the assertion follows.

For $c=2$, similar argument applies where the test function is a scalar-valued RKHS $\tilde{f} \in \H$. Instead of Eq.\eqref{eq:def_g}, we have $\tilde{r} \in \H$, s.t. ${\E}_p[\mathcal{A}^{(2)}_q \mathbf{f}] = \langle \tilde{f}, \tilde{r}\rangle_{\mathcal{H}}$ and 
\begin{align}
    \tilde{r}(x) = {\E}_{\tilde{x} \sim p} \left[ \sum_{ij} g^{ij}\left( \frac{\partial}{\partial \tilde{\theta}^j}k(x,\tilde{x}) \frac{\partial}{\partial \tilde{\theta}^i} \log (qJ) + \frac{\partial^2}{\partial \tilde{\theta}^i\partial \tilde{\theta}^j} k(x,\tilde{x}) \right) \right], \label{eq:def_g2}
\end{align}
and the maximization in $\operatorname{mKSD}^{(2)}(p\|q)$ is attained by $\tilde{f}=\tilde{r}/\|\tilde{r}\|_{\mathcal{H}}$; thus $\operatorname{mKSD}^{(2)}(p\|q)^2 =\| \tilde{r} \|^2_{\mathcal{H}}$. The assertion then follows from the similar calculations as above.

For $c=0$, the quadratic form is readily obtained from derivation of maximum-mean-discrepancy (MMD) \cite{gretton2007kernel} form as shown in Theorem \ref{thm:mmd_equiv}. Alternatively, for scalar test function $h\in \H$, we can write,
$$
\operatorname{mKSD}^{(0)}(p\|q) =  \sup_{\|h \|_{\H} \leq 1} \mathbb{E}_{p}[\mathcal{A}^{(0)}_q h] =  \sup_{\|h \|_{\H} \leq 1} \left|\mathbb{E}_{p}[h] - \mathbb{E}_{q}[h]\right|,
$$
where taking the supreme we get,
\begin{align*}
\operatorname{mKSD}^{(0)}(p\|q)^2 &=  \Big\langle {{\mathbb{E}_{p}\big[k(x,\cdot) - \mathbb{E}_{q}[k(x,\cdot)]\big]}},\mathbb{E}_{p}\big[k(\tilde{x},\cdot) - \mathbb{E}_{q}[k(\tilde{x},\cdot)]\big]\Big\rangle_{\H}\\
&=\mathbb{E}_{x,\tilde{x}\sim p}\Big\langle \underset{\A_q^{(0)}k(x,\cdot)}{\underbrace{k(x,\cdot) - \mathbb{E}_{q}[k(x,\cdot)]}}, k(\tilde{x},\cdot) - \mathbb{E}_{q}[k(\tilde{x},\cdot)]\Big\rangle_{\H}.
\end{align*}
The assertion follows.
\end{proof}
The quadratic form is useful when computing the empirical estimate for the expectation where only samples from unknown distribution $p$ is observed. We also note that $ \mathbb{E}_{q}[k(\tilde{x},\cdot)]$, in general, is not possible to obtain in analytical form, especially when the density $q$ is only given up to normalization. Samples from $q$, if possible to obtain from unnormalized density, can be useful to estimate $\A_q^{(0)}k(x,\cdot)$, where we denote as $\widehat{\A_q^{(0)}}k(x,\cdot)$.

\subsection*{Characterisation of mKSD}
Proof of Theorem \ref{thm:characteristic}
% \begin{Theorem}
% Let $p$ and $q$ be smooth densities on $\M$. 
% Assume:  1) The kernel $K$ is $C_0$-universal \textnormal{\cite[Definition 4.1]{carmeli2010vector}}; 2) 
% $\E_{x,\tilde{x} \sim p} [h^{(1)}_q(x,\tilde{x})^2] < \infty$; 3) $\E_{p} \| L^{(1)}(x) \|^2 < \infty$. 
% Then, $\operatorname{mKSD}^2(p\|q)\geq 0$ and $\operatorname{mKSD}^{(1)}(p\|q)^2=0$ if and only if $p=q$.
% \end{Theorem}
\begin{proof}
Denote $\textbf{s}^{(c)}_p(\cdot) = \E_{\tilde{x}\sim p}[\A_q^{(c)} k(\tilde{x},\cdot)]\in \mathcal{F}$ and we can write 
$$\operatorname{mKSD}^{(c)}(p\|q)^2=\|\textbf{s}_p(\cdot)\|^2_{\mathcal F} \geq 0,$$ 
where $\mathcal{F}$ can be $\H$ for $c=0,2$ or $\H^d$ for $c=1$. 
% Note, $\mathcal H^d$ is also applicable for the second order operator with scalar RKHS function as $h_q$ form in Eq.\eqref{eq:mksd_quad} can also be represented as inner product of $d$ dimensional vector via Eq.\eqref{eq:derivative_form}. 
If $p=q$, then $\operatorname{mKSD}^{(c)}(p\|q)^2=0$ from the Stein identity. 

Conversely, if $\operatorname{mKSD}^{(c)}(p\|q)^2=0$, then $\textbf{s}^{(c)}_p(x)=\textbf{0}$, a zero vector in $\R^d$ for $c=1$ and a scalar zero in $\R$ for $c=0,2$, $\forall x$, s.t. $p(x)>0$.
%We prove by contradiction by assuming $p\neq q$.
%Then, we have $\E_p[\mathcal{A}_qk(x,\cdot)]=0$. 
%be the mean embedding of function $s_p$ under $p$.
Then, from $\log (q/p) = \log (qJ) - \log (pJ)$, we obtain,
\begin{align*}
    &{\E}_{\tilde{x} \sim p} \left[ L_i(\tilde{x}) k(\tilde{x},x) \right] = (\textbf{s}^{(1)}_p)_i(x) - {\E}_{\tilde{x} \sim p} \left[ \mathcal{A}^{(1)}_p k(\tilde{x},x)  \right] = 0,
\end{align*}
and 
\begin{align*}
    {\E}_{\tilde{x} \sim p} \left[ L(\tilde{x}) k(\tilde{x},x) \right] = (\textbf{s}^{(c)}_p)(x) - {\E}_{\tilde{x} \sim p} \left[ \mathcal{A}^{(c)}_p k(\tilde{x},x)  \right] = 0,
\end{align*}
for $c=0,2$,
for every $x$ with positive densities.
Since $k$ is compact-universal, vanishes at $\partial \M$ and $\M$ is smooth and compact, the injectivity result in \cite[Theorem 4(b)]{carmeli2010vector} implies that $L^{(1)}_i=0, \forall i$  (for $c=1$, $i \in \{1,\dots,d\}$; for $c=0,2$, $i=1$).
Therefore, $\log (q/p)$ is constant on $\M$.
Since both $p$ and $q$ are both densities on $\M$ that integrate to one, we conclude $p=q$.
\end{proof}

\subsection*{Asymptotics of mKSD}
Proof of Theorem \ref{thm:u-stat-asymptotic}
\begin{proof}
To show part 1, it is enough to check the mKSD statistics is degenerate U-statistics under $H_0: p=q$. By considering test function $f = k(x,\cdot)$ (or its relevant vector-valued form for $c=1$), Stein identity shows that, 
\begin{align*}\label{eq:mksd_one_x}
    \mathbb{E}_{\tilde{x} \sim p} [\A^{(c)}_q k(x,\tilde{x})] = 0, \forall x \in \M,
\end{align*}
so that the variance $\sigma^2_c = 0$ for $c=0,1,2$. Then the standard results for degenerate U-statistics in \cite[Section 5.5.2]{serfling2009approximation} apply and the assertions follow. 

In addition, it is interesting to note link the result for $c=0$ with the asymptotic result in  as 
$$h^{(0)}_q(x, \tilde{x}) = k(x,\tilde{x}) -  \xi(x) - \xi(\tilde{x}) + C,$$ 
where $C = \E_{x,\tilde{x}\sim q}k(x,\tilde{x})$ is a constant, $\xi(x) = \E_{\tilde{x}\sim q}k(x,\tilde{x})$ is only a function of $x$ and $\xi(\tilde{x}) = \E_{{x}\sim q}k(x,\tilde{x})$ is only a function of $\tilde{x}$. The formulation is analogous to the asymptotic results for MMD, as shown in \cite[Theorem 8]{gretton2007kernel}: $h_q^{(0)}(x,\tilde{x})$ is equivalent to the notion of $\tilde{k}(x,\tilde{x})$ in \cite{gretton2007kernel}.

Part 2 follows as $\sigma_c^2 >0$ under $H_1: p\neq q$ by Theorem \ref{thm:characteristic}. Apply asymptotic distribution of non-degenerate U-statistics \cite[Section 5.5.1]{serfling2009approximation} and the assertions follow.
\end{proof}

\paragraph{Asymptotics for mFSSD} To compute the empirical version of mFSSD, we consider the empirical version $\mathbf{s}_p(\cdot)$ in Eq.\eqref{eq:mFSSD} from samples $x_1, \dots, x_n \sim p$: 
$$\widehat{\mathbf{s}}_p(\cdot) = \frac{1}{n}\sum_i [\A^{(1)}_q k(x_i,\cdot)].$$ 
Then the empirical mFSSD has the form
\begin{equation}\label{eq:mfssd_hat}
    \widehat{\operatorname{mFSSD}^2} = \frac{1}{dJ}\sum_{i=1}^d\sum_{j=1}^J (\widehat{\textbf{s}}_p(\mathbf{v}_j))_i^2,
\end{equation}
for any set of test locations $\{v_j\}_{j=1}^J$.
\begin{Proposition}\label{prop:fssd_asym}
Assume the conditions in Theorem \ref{thm:characteristic} hold, and $\E_{x\sim p}[\|\mathbf{s}_p(x)\|^2]<\infty$. Under $H_1: p\neq q$, $$\sqrt{n}\cdot\left(\widehat{\operatorname{mFSSD}^2}-\operatorname{mFSSD}^2\right) \overset{d}{\to} \mathcal{N}(0,\tilde{\sigma}^2_{H_1}),$$
where $\tilde{\sigma}^2_{H_1}$ denotes the variance for $\widehat{\operatorname{ mFSSD}^2}$. 
\end{Proposition}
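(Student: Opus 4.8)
The plan is to recognize $\widehat{\operatorname{mFSSD}^2}$ as a one-sample $V$-statistic of non-degenerate type under $H_1$ and then invoke the classical central limit theorem for $V$-statistics, exactly in the spirit of part~2 of Theorem~\ref{thm:u-stat-asymptotic}. Concretely, I would introduce the finite-dimensional feature map $\tau:\M\to\R^{dJ}$ whose $(i,j)$ coordinate is $\tau_{ij}(x) = \big(\A^{(1)}_q k(x,\cdot)\big)_i(\mathbf v_j)$, i.e. the $i$-th component of the $\H^d$-element $\A^{(1)}_q k(x,\cdot)$ evaluated at the test location $\mathbf v_j$. With $\widehat{\mathbf s}_p(\mathbf v_j)_i = \frac1n\sum_{k=1}^n \tau_{ij}(x_k)$, the empirical statistic in Eq.~\eqref{eq:mfssd_hat} becomes $\widehat{\operatorname{mFSSD}^2} = \frac{1}{dJ}\big\|\bar\tau_n\big\|^2$ with $\bar\tau_n = \frac1n\sum_{k=1}^n\tau(x_k)$, while $\operatorname{mFSSD}^2 = \frac{1}{dJ}\|\mu\|^2$ with $\mu = \E_{x\sim p}[\tau(x)] = \big((\mathbf s_p(\mathbf v_j))_i\big)_{i,j}$. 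Equivalently, $\widehat{\operatorname{mFSSD}^2}$ is the $V$-statistic with symmetric kernel $h_v(x,x') = \frac{1}{dJ}\tau(x)^\top\tau(x')$.

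First I would verify the moment conditions required to run the limit theorem: smoothness of $k$ on the compact manifold $\M$, together with the reproducing property for $k$ and its coordinate derivatives, makes each $\tau_{ij}$ a bounded continuous function of $x$, and the hypothesis $\E_{x\sim p}[\|\mathbf s_p(x)\|^2]<\infty$ guarantees that $\Sigma_\tau := \operatorname{Cov}_{x\sim p}[\tau(x)]\in\R^{dJ\times dJ}$ is finite. The multivariate CLT then gives $\sqrt n\,(\bar\tau_n-\mu)\overset{d}{\to}\mathcal N(\mathbf 0,\Sigma_\tau)$. Applying the delta method to the smooth map $g(v) = \frac{1}{dJ}\|v\|^2$, with gradient $\nabla g(v) = \frac{2}{dJ}v$, yields $\sqrt n\,(\widehat{\operatorname{mFSSD}^2}-\operatorname{mFSSD}^2)\overset{d}{\to}\mathcal N\big(0,\ \tilde\sigma^2_{H_1}\big)$ where $\tilde\sigma^2_{H_1} = \nabla g(\mu)^\top\Sigma_\tau\,\nabla g(\mu) = \tfrac{4}{(dJ)^2}\,\mu^\top\Sigma_\tau\,\mu$, which matches the standard non-degenerate $V$-statistic variance for $h_v$.

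The one genuinely non-routine point — and the main obstacle — is establishing non-degeneracy, namely $\mu\neq\mathbf 0$ (equivalently $\operatorname{mFSSD}^2>0$) under $H_1$, for otherwise $\nabla g(\mu)=0$, the first-order term vanishes, and the limit is a degenerate quadratic-form law at rate $n$ rather than the stated Gaussian at rate $\sqrt n$. To handle this I would use the characteristic property from Theorem~\ref{thm:characteristic}: when $p\neq q$ the $\H^d$-witness $\mathbf s_p(\cdot)$ is not identically zero, so, arguing as in \cite[Proposition~2 and Theorem~2]{jitkrittum2017linear}, the set of test-location tuples $(\mathbf v_1,\dots,\mathbf v_J)$ for which $\mu=\mathbf 0$ has Lebesgue measure zero; hence for the optimized (or generically drawn) locations used in the procedure $\operatorname{mFSSD}^2>0$ almost surely, $\Sigma_\tau$ is positive along the direction $\mu$, and $\tilde\sigma^2_{H_1}>0$, which closes the argument.
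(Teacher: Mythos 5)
Your proposal is correct and follows essentially the same route as the paper: the paper's proof simply observes that $\widehat{\operatorname{mFSSD}^2}$ is a non-degenerate $U$/$V$-statistic under $H_1$ and invokes the standard asymptotic normality result of \cite[Section 5.5.1]{serfling2009approximation} and \cite[Proposition 2]{jitkrittum2017linear}, which is exactly the multivariate-CLT-plus-delta-method argument you write out explicitly. Your additional care in establishing $\mu\neq\mathbf{0}$ (so that the first-order term does not vanish and the $\sqrt{n}$ rate is the right one) fills in the non-degeneracy claim that the paper asserts without detail.
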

\begin{proof}
With the assumed regularity conditions, Eq.\eqref{eq:mfssd_hat} is in the form of the non-degenerate U-statistics with $\tilde{\sigma}^2_{H_1} >0$. The asymptotic normality follows from \cite[Section 5.5.1]{serfling2009approximation}, similarly described in \cite[Proposition 2]{jitkrittum2017linear}.
\end{proof}

The asymptotic normality for $\widehat{\operatorname{mFSSD}^2}$ in Proposition \ref{prop:fssd_asym} enables derivation of the approximate test power, similarly as described in Section \ref{sec:gof_procedure} for kernel choice. 
\begin{Proposition}\label{prop:mfssd-power}
\textnormal{[Approximate test power of $n\cdot\operatorname{\widehat{{mFSSD^{2}}}}$]} Under $H_{1}$, for large $n$ and fixed $r$, the test power is
$$\mathbb{P}_{H_{1}}(n\cdot\widehat{\mathrm{mFSSD^{2}}}>r)\approx1-\Phi\left(\frac{r}{\sqrt{n}\tilde{\sigma}^2_{H_1}}-\sqrt{n}\frac{\mathrm{mFSSD^{2}}}{\tilde{\sigma}^2_{H_1}}\right),$$ 
where $\Phi$ denotes the cumulative distribution function of the standard normal distribution, and $\tilde{\sigma}^2_{H_1}$ is defined in Proposition \ref{prop:fssd_asym}.
\end{Proposition}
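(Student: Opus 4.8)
The plan is to mimic the standard derivation of approximate test power for a U-statistic-based test, exactly as done for KSD in \cite{sutherland2016generative} and for FSSD in \cite[Proposition 3]{jitkrittum2017linear}, using the asymptotic normality of $\widehat{\operatorname{mFSSD}^2}$ established in Proposition~\ref{prop:fssd_asym}. First I would introduce the standardized statistic
\[
Z_n := \sqrt{n}\,\frac{\widehat{\operatorname{mFSSD}^2} - \operatorname{mFSSD}^2}{\tilde{\sigma}^2_{H_1}},
\]
which by Proposition~\ref{prop:fssd_asym} converges in distribution to $\mathcal{N}(0,1)$ under $H_1$ (note that here $\tilde{\sigma}^2_{H_1}$ plays the role of the standard deviation in the normalization, following the notational convention already set in Proposition~\ref{prop:fssd_asym}). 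Then I would rewrite the rejection event $\{n\cdot\widehat{\operatorname{mFSSD}^2} > r\}$ in terms of $Z_n$ by subtracting $n\cdot\operatorname{mFSSD}^2$ and dividing by $\sqrt{n}\,\tilde{\sigma}^2_{H_1}$, giving
\[
\{n\cdot\widehat{\operatorname{mFSSD}^2} > r\} = \left\{ Z_n > \frac{r}{\sqrt{n}\,\tilde{\sigma}^2_{H_1}} - \sqrt{n}\,\frac{\operatorname{mFSSD}^2}{\tilde{\sigma}^2_{H_1}} \right\}.
\]

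Next I would apply the asymptotic normality: for large $n$, $\mathbb{P}_{H_1}(Z_n > t) \approx 1 - \Phi(t)$, so substituting $t = \frac{r}{\sqrt{n}\,\tilde{\sigma}^2_{H_1}} - \sqrt{n}\,\frac{\operatorname{mFSSD}^2}{\tilde{\sigma}^2_{H_1}}$ yields the claimed expression
\[
\mathbb{P}_{H_1}(n\cdot\widehat{\operatorname{mFSSD}^2} > r) \approx 1 - \Phi\!\left(\frac{r}{\sqrt{n}\,\tilde{\sigma}^2_{H_1}} - \sqrt{n}\,\frac{\operatorname{mFSSD}^2}{\tilde{\sigma}^2_{H_1}}\right).
\]
The only subtlety worth flagging is that the approximation step replaces a CDF of $Z_n$ (finite-$n$) by the limiting standard normal CDF, which is justified for large $n$ and a fixed threshold $r$; under $H_1$ we have $\operatorname{mFSSD}^2 > 0$ by the characterization result (the analogue of Theorem~\ref{thm:characteristic} / \cite[Theorem 1]{jitkrittum2017linear}), so the second term grows like $\sqrt{n}$ and the power tends to one, consistent with the claim being a meaningful large-$n$ statement.

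I do not anticipate a real obstacle here; this is a routine manipulation rather than a deep argument. The one place requiring mild care is keeping track of whether $\tilde{\sigma}^2_{H_1}$ denotes the variance or the standard deviation in the normalization — the statement as written uses $\tilde{\sigma}^2_{H_1}$ in the denominator of the standardized quantity, so I would simply match that convention throughout and not re-derive the asymptotic normality (which is already Proposition~\ref{prop:fssd_asym}). In short, the proof is: (i) invoke Proposition~\ref{prop:fssd_asym} for the CLT of $\widehat{\operatorname{mFSSD}^2}$; (ii) algebraically rewrite the rejection event in terms of the standardized statistic; (iii) pass to the normal limit for large $n$ to read off the stated approximation.

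\begin{proof}
By Proposition~\ref{prop:fssd_asym}, under $H_1$ the standardized statistic
\[
Z_n := \sqrt{n}\,\frac{\widehat{\operatorname{mFSSD}^2} - \operatorname{mFSSD}^2}{\tilde{\sigma}^2_{H_1}}
\]
converges in distribution to $\mathcal{N}(0,1)$. For any fixed threshold $r$, the rejection event can be rewritten as
\[
\{n\cdot\widehat{\operatorname{mFSSD}^2} > r\} = \left\{ Z_n > \frac{r}{\sqrt{n}\,\tilde{\sigma}^2_{H_1}} - \sqrt{n}\,\frac{\operatorname{mFSSD}^2}{\tilde{\sigma}^2_{H_1}} \right\}.
\]
Hence, using the asymptotic normality of $Z_n$, for sufficiently large $n$,
\[
\mathbb{P}_{H_1}(n\cdot\widehat{\operatorname{mFSSD}^2} > r) \approx 1 - \Phi\!\left(\frac{r}{\sqrt{n}\,\tilde{\sigma}^2_{H_1}} - \sqrt{n}\,\frac{\operatorname{mFSSD}^2}{\tilde{\sigma}^2_{H_1}}\right),
\]
where $\Phi$ is the standard normal cumulative distribution function. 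Since $\operatorname{mFSSD}^2 > 0$ under $H_1$ by the characterization of mFSSD, the second term inside $\Phi$ diverges to $-\infty$ as $n \to \infty$, so the approximate power tends to one.
\end{proof}
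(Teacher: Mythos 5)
Your proposal is correct and matches the paper's intended argument: the paper states this proposition without a separate proof, pointing instead to the asymptotic normality in Proposition~\ref{prop:fssd_asym} and the same standardize-and-pass-to-the-normal-limit manipulation sketched for the kernel-choice power approximation in Section~\ref{sec:gof_procedure}. Your handling of the $\tilde{\sigma}^2_{H_1}$ notation (matching the proposition's convention of writing the squared symbol in the denominator) and the remark that the power tends to one since $\operatorname{mFSSD}^2>0$ under $H_1$ are both consistent with what the paper does.
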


Due to $\sqrt{n}$ scaling in Proposition \ref{prop:fssd_asym}, maximizing the approximate test power for $n\cdot\operatorname{\widehat{{mFSSD^{2}}}}$ can be approximated by maximizing $\frac{\operatorname{mFSSD}^2}{\tilde{\sigma}^2_{H_1}}$ to obtain optimal test locations under the alternative $H_1: p \neq q$, which  
is described in Section \ref{sec:model_criticism}.
$$
V = \argmax_{\mathbf{v}} \frac{\mathrm{mFSSD^{2}}}{\tilde{\sigma}_{H_{1}}},
$$
for $V=\left\{ \mathbf{v}_{j}\right\}_{j=1}^{J}$ as the set of test locations to be optimised.

\section{More on Bahadur Efficiency}
In this section, we introduce the relevant concepts to study Approximate Relative Efficiency (ARE) between two tests, characterised by \emph{Bahadur slope} \cite{bahadur1960stochastic} and corresponding \emph{Bahadur efficiency}. 

% From Theorem \ref{thm:u-stat-asymptotic}, mKSD tests are consistent against all alternatives. 
% It is thus of theoretical interest to understand which test is more powerful than others.
% Here, we study this problem based on the \emph{Bahadur slope} \cite{bahadur1960stochastic} and the Asymptotic Relative Efficiency (ARE) between two tests can be computed via \emph{Bahadur Efficiency}. %: the ratio of the slopes for the two tests. 
% The \emph{Bahadur Slope} \cite{bahadur1960stochastic} is useful to  quantified the notion of ``how fast p-value $\to 0$ under $H_1$'' and the Asymptotic Relative Efficiency (ARE) between two tests can be computed via \emph{Bahadur Efficiency}: the ratio of the slopes for the two tests. 
% More details regarding Bahadur Efficiency is provided in the supplementary material.
% See supplementary material for more details.
% The techniques of the proof is adapted from \cite{jitkrittum2017linear} and details are given in supplementary material.
% In this section, we discuss the ARE between different $\A_q$s. We then discuss the computational aspect of different tests.

\subsection{Approximate Bahadur Slope}
We first define Bahadur slope for general tests \cite{gleser1966comparison} and its applications in kernel-based tests \cite{jitkrittum2017linear,garreau2017large}.
Consider the test procedure with null hypothesis $H_0:\omega \in \Omega_0$ and the alternative $H_1:\omega \in \Omega\backslash\Omega_0$, where $\Omega$ and $\Omega_0$ are arbitrary sets. Denote $T_n$ as the test statistic computed from a sample of size $n$.
\begin{Definition}
For $\omega_0 \in \Omega_0$, let F be the asymptotic null distribution
$$
F(t) = \lim_{n\to \infty} \P_{\omega_0}(T_n < t)
$$
which is assumed to be continuous and common $\forall \omega_0 \in \Omega_0$.
Assume that there exists a continuous strictly increasing function $\rho: (0,\infty) \to (0,\infty)$ s.t $\lim_{n\to \infty} \rho(n) = \infty$. Denote 
\begin{align}\label{eq:bahadur_slope}
    c(\omega) = -2\operatorname{plim}_{n\to \infty}\frac{\log (1-F(T_n))}{\rho(n)},
\end{align}
for some bounded non-negative function $c$ such that $c(\omega_0)=0$ when $\omega_0\in \Omega_0$. The function $c(\omega)$ is known as \emph{approximate Bahadur slope}.
\end{Definition}
% More detailed interpretations can be found in supplementary material of \cite{jitkrittum2017linear}

\begin{Definition}
Let $\mathcal{D}(a,t)$ be a class of all continuous cumulative distribution functions (CDF) F such that $-2\log(1-F(x))=ax^{t}(1+o(1))$, as $x\to\infty$ for $a>0$ and $t>0$.
\end{Definition}
\begin{Proposition}\label{thm:abs}
The approximate Bahadur slope (ABS) for the tests with $\operatorname{mKSD^{(c)}}$, $c=0,1,2$ is
$$c^{({\operatorname{mKSD^{(c)}}})}:=\frac{\E_p[h^{(c)}_q(x,\tilde{x})]}{\E_q[h^{(c)}_q(x,\tilde{x})^2]^{\frac{1}{2}}},$$ 
where $h^{(c)}_q(x,\tilde{x})$ is the Stein kernel for $\operatorname{mKSD^{(c)}}$, and  $\rho(n)=n$.
\end{Proposition}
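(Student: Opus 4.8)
The plan is to compute the approximate Bahadur slope directly from its definition in Eq.\eqref{eq:bahadur_slope}, using the asymptotic theory already established in Theorem~\ref{thm:u-stat-asymptotic}. First I would take $T_n = n \cdot \operatorname{mKSD}_u^{(c)}(p\|q)^2$ as the test statistic and set $\rho(n) = n$. Under $H_0: p=q$, Theorem~\ref{thm:u-stat-asymptotic} gives that $T_n \overset{d}{\to} \sum_{j=1}^\infty w_j^{(c)}(Z_j^2-1)$, so the asymptotic null CDF $F$ is the CDF of this infinite weighted sum of centered $\chi^2_1$ variables. The key tail fact I would invoke is that such a distribution belongs to the class $\mathcal{D}(a,t)$ with $t=1$: the right tail satisfies $-2\log(1-F(x)) = x/w_{\max}^{(c)} \cdot (1+o(1))$ as $x\to\infty$, where $w_{\max}^{(c)} = \max_j w_j^{(c)}$ is the largest eigenvalue (this is a standard large-deviation estimate for quadratic forms in Gaussians; see e.g.\ the references on Bahadur efficiency for kernel tests, \cite{garreau2017large,jitkrittum2017linear}).

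Next I would evaluate the numerator of Eq.\eqref{eq:bahadur_slope} under the alternative $H_1: p\neq q$. By the law of large numbers for U-statistics, $T_n/n = \operatorname{mKSD}_u^{(c)}(p\|q)^2 \overset{p}{\to} \operatorname{mKSD}^{(c)}(p\|q)^2 = \E_{x,\tilde{x}\sim p}[h_q^{(c)}(x,\tilde{x})]$ by Theorem~\ref{thm:quadratic_form}. Combining with the tail expansion, $-2\log(1-F(T_n)) = T_n / w_{\max}^{(c)} \cdot (1+o_p(1))$, so dividing by $\rho(n)=n$ and taking the limit in probability yields
\[
c^{(\operatorname{mKSD}^{(c)})} = \frac{\E_p[h_q^{(c)}(x,\tilde{x})]}{w_{\max}^{(c)}}.
\]
The remaining step is to identify $w_{\max}^{(c)}$, the largest eigenvalue of the integral operator with kernel $h_q^{(c)}$ under $p$ \emph{evaluated at $p=q$} (the eigenvalues governing the asymptotic null are those of $h_q^{(c)}$ under $q$ since the null is $p=q$). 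I would show $w_{\max}^{(c)} = \E_q[h_q^{(c)}(x,\tilde{x})^2]^{1/2}$, i.e.\ the operator norm coincides with the Hilbert--Schmidt norm. This holds because the Stein kernel $h_q^{(c)}(x,\tilde{x}) = \langle \xi^{(c)}(x), \xi^{(c)}(\tilde{x})\rangle$ with $\xi^{(c)}(x) = \A_q^{(c)}k(x,\cdot)$ is a rank-one-type (degenerate, positive semidefinite) kernel in the sense that the associated operator has a one-dimensional range spanned by $\E_q[\xi^{(c)}(x)\langle \xi^{(c)}(x),\cdot\rangle]$; more precisely, under $q$ the Stein identity forces $\E_q[\xi^{(c)}(x)] = 0$, and a short computation shows the operator $T_h \phi = \int h_q^{(c)}(x,\tilde{x})\phi(\tilde{x})q(\tilde{x})\d\tilde{x}$ has exactly one nonzero eigenvalue, equal to both its trace of $T_h^2$ square-rooted and its spectral radius. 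I would carry this out by noting $\|T_h\|_{\mathrm{op}}^2 = \|T_h\|_{\mathrm{HS}}^2 = \E_{x,\tilde{x}\sim q}[h_q^{(c)}(x,\tilde{x})^2]$ precisely when $T_h$ has a single nonzero eigenvalue, which is the situation here.

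The main obstacle is the last step: justifying that the asymptotic null distribution lies in $\mathcal{D}(a,1)$ and pinning down $a$ in terms of the eigenvalues, and then arguing that under the null ($p=q$) there is effectively a single dominant eigenvalue so that $w_{\max}^{(c)} = \E_q[(h_q^{(c)})^2]^{1/2}$. The tail behavior of $\sum_j w_j(Z_j^2-1)$ requires either a Chernoff/moment-generating-function argument (the mgf blows up at $1/(2w_{\max})$, which pins the exponential decay rate) or a direct appeal to existing results; the identification of the operator norm with the HS norm requires the degeneracy structure of the Stein kernel and should be stated carefully, possibly as a separate lemma. Everything else — the LLN for U-statistics, continuity and monotonicity of $\rho$, and the plug-in of Theorem~\ref{thm:quadratic_form} — is routine.
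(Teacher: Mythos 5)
Your skeleton --- take $T_n = n\cdot\operatorname{mKSD}_u^{(c)}(p\|q)^2$ with $\rho(n)=n$, place the asymptotic null law of Theorem~\ref{thm:u-stat-asymptotic} in the class $\mathcal{D}(a,1)$, apply the law of large numbers for U-statistics under $H_1$ together with Theorem~\ref{thm:quadratic_form}, and read the slope off Eq.\eqref{eq:bahadur_slope} --- is the same route the paper takes. The paper compresses the tail step into a citation of Theorems 9 and 11 of \cite{jitkrittum2017linear}, asserting membership in $\mathcal{D}(a=1/\omega_c,\,t=1)$ with $\omega_c^2$ the variance of the statistic, and then invokes the Stein identity (to make the null mean vanish) and the weak law of large numbers, exactly as you do in your first half.

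The genuine gap is in your last step. Your Chernoff/mgf argument correctly gives $-2\log(1-F(x)) = x/w^{(c)}_{\max}\cdot(1+o(1))$ and hence $c = \E_p[h_q^{(c)}(x,\tilde{x})]/w^{(c)}_{\max}$; but the identification $w^{(c)}_{\max}=\E_q[h_q^{(c)}(x,\tilde{x})^2]^{1/2}$ fails. The integral operator $T_h\phi = \int h_q^{(c)}(\cdot,\tilde x)\phi(\tilde x)q(\tilde x)\,\d\tilde x$ is not rank one: $h_q^{(c)}(x,\tilde x)=\langle \A_q^{(c)}k(x,\cdot),\A_q^{(c)}k(\tilde x,\cdot)\rangle$ is the Gram kernel of the feature map $x\mapsto \A_q^{(c)}k(x,\cdot)$, whose span is infinite-dimensional for a compact-universal $k$, so $T_h$ generically has infinitely many nonzero eigenvalues. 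This is already built into Theorem~\ref{thm:u-stat-asymptotic}: were $T_h$ of rank one, the null law would be a single scaled $(Z^2-1)$ rather than the infinite sum $\sum_j w_j^{(c)}(Z_j^2-1)$. The Stein identity $\E_q[\A_q^{(c)}k(x,\cdot)]=0$ only removes the constant direction; it does not collapse the spectrum. Consequently $w^{(c)}_{\max}=\|T_h\|_{\mathrm{op}}\leq \bigl(\sum_j (w_j^{(c)})^2\bigr)^{1/2}=\E_{q}[h_q^{(c)}(x,\tilde{x})^2]^{1/2}$ with equality only in the rank-one case, and your derivation delivers $\E_p[h_q^{(c)}]/w^{(c)}_{\max}$ rather than the stated expression. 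To arrive at the Proposition as written you would need to do what the paper does --- take the tail constant $a=1/\omega_c$ with $\omega_c$ the Hilbert--Schmidt quantity directly from the cited results --- and your (otherwise more careful) tail analysis in fact shows that this substitution of the Hilbert--Schmidt norm for the operator norm is precisely the delicate point, yielding in general only a lower bound on the slope.
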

\begin{proof}
Using Theorem 9 and Theorem 11 in \cite{jitkrittum2017linear}, we know that $n\cdot \operatorname{mKSD}_u^{(c)}(p\|q)^2$ in Eq.\eqref{eq:u-stats} is in the class of $\mathcal{D}(a=1/\omega_c,t=1)$ for $\omega^2_c$ is the variance of the statistic. By Stein identity, $\E_{x\sim q} \E_{\tilde{x}\sim q}\left[h^{(c)}_q(x,\tilde{x})\right]^2 = 0$. Hence, using second point in Theorem 9 \cite{jitkrittum2017linear} and choosing $\rho = n$, we know that $n\cdot \operatorname{mKSD}_u^{(c)}(p\|q)^2\backslash \rho(n) \to \operatorname{mKSD}^{(c)}(p\|q)^2$ by weak law of large numbers. 
\end{proof}
 
\subsection{Asymptotic Relative Efficiencies Between mKSD Tests with Different $\A_q$s}

Asymptotic Relative Efficiency (ARE) between two statistical testing procedures measures how fast the p-values of one test shrinks to $0$, relatively to the other's. If it is faster, for given problem under the alternative, it is more sensitive to pick up the alternative, where we call the test more "statistically efficient". With 
% approximate Bahadur slope
ABS, we are ready to define approximate Bahadur efficiency.
% as our ARE measure.

\begin{Definition}
Given two sequences of test statistics, $T_{n}^{(1)}$ and $T_{n}^{(2)}$ 
% and the same $\rho(n)$ 
% in Eq.\eqref{eq:bahadur_slope} 
and their ABS $c^{(1)}$ and $c^{(2)}$,  the  
{approximate Bahadur efficiency} of $T_{n}^{(1)}$ relative to $T_{n}^{(2)}$ is 
\begin{align}\label{eq:bahadur_efficiency}
{\rm E}(\omega_{A}):=\frac{c^{(1)}(\omega_{A})}{c^{(2)}(\omega_{A})}
\end{align}
for $\omega_{A}\in \Omega \backslash \Omega_{0}$, in the space of alternative models. 
\end{Definition}
If $\rm E(\omega_{A})>1$, then $T_{n}^{(1)}$ is asymptotically more efficient than $T_{n}^{(2)}$ in the sense of Bahadur, for the particular problem specified by $\omega_{A}\in\Omega\backslash\Omega_{0}$.

\subsection{The Case Study on Circular distribution $\Sp^1$}
{Proof of Theorem \ref{thm:efficienty_12}}
\begin{proof}
To compute $\rm{E}_{1,2}(\kappa)$, we can rewrite the following:
$$
{\rm{E}}_{1,2}(\kappa) = \frac{\E_p[h^{(1)}_q(x,\tilde{x})]}{\E_p[h^{(2)}_q(x,\tilde{x})]}\cdot \frac{\E_q[h^{(2)}_q(x,\tilde{x})^2]^{\frac{1}{2}}}{\E_q[h^{(1)}_q(x,\tilde{x})^2]^{\frac{1}{2}}}
$$
The second term only involves integrals over $q(x) \propto 1$, which is independent of $\kappa$ and we can solve it as $\frac{\E_q[h^{(2)}_q(x,\tilde{x})^2]^{\frac{1}{2}}}{\E_q[h^{(1)}_q(x,\tilde{x})^2]^{\frac{1}{2}}} = 1.692>1$. 
For the first term, the ratio is monotonic decreasing w.r.t. $\kappa>0$ and 
% solving numerically at $\kappa=10$, we get
$\frac{\E_p[h^{(1)}_q(x,\tilde{x})]}{\E_p[h^{(2)}_q(x,\tilde{x})]} $ is lower bounded by $2$ due to exponential-trace kernel and $\Sp^1$ embedded in $\R^2$. 
% $= 2.953 > 1$. 
Hence, for $\kappa > 0$, ${\rm E}_{1,2} > 1$.
% the first order $\operatorname{mKSD^{(1)}}$ is
% \begin{align*}
%     &\E_{u} \E_{v} \frac{\partial^2}{\partial{\theta_u}\partial{\theta_v}} k(u,v) = \\ 
%     &\E_{\theta_u} \E_{\theta_v} [(\cos{(\theta_u - \theta_v)} - \sin{(\theta_u - \theta_v)}^2)\\
%     &\exp{(\cos{(\theta_u - \theta_v)})}] \\
% &=
% \end{align*}
% and the second order $\operatorname{mKSD}^{2}$ is
% \begin{align*}
%     &\E_{u} \E_{v} \frac{\partial^4}{\partial{\theta_u}^2\partial{\theta_v}^2} k(u,v) = 
%     \E_{\theta_u} \E_{\theta_v} [( \\
%     &\sin{(\theta_u - \theta_v)}^3 - 3\sin{(\theta_u - \theta_v)}\cos{(\theta_u - \theta_v)} \\
%     &- \sin{(\theta_u - \theta_v)} 
%     )\exp{(\cos{(\theta_u - \theta_v)})}]\\
%     &=\int_{\theta_u\,\theta_v} \Big( \sin{(\theta_u - \theta_v)}^3 - 3\sin{(\theta_u - \theta_v)}\cos{(\theta_u - \theta_v)} \\
%     &- \sin{(\theta_u - \theta_v)} \Big)\\
%     &\exp{\left(\cos{(\theta_u - \theta_v)} +\kappa(\cos{(\theta_u - \theta_0)}+\cos{(\theta_v - \theta_0)})\right)}
% \end{align*}
\end{proof}
\begin{wrapfigure}{r}{0pt}
{\includegraphics[width=0.4\textwidth,height=0.26\textwidth]{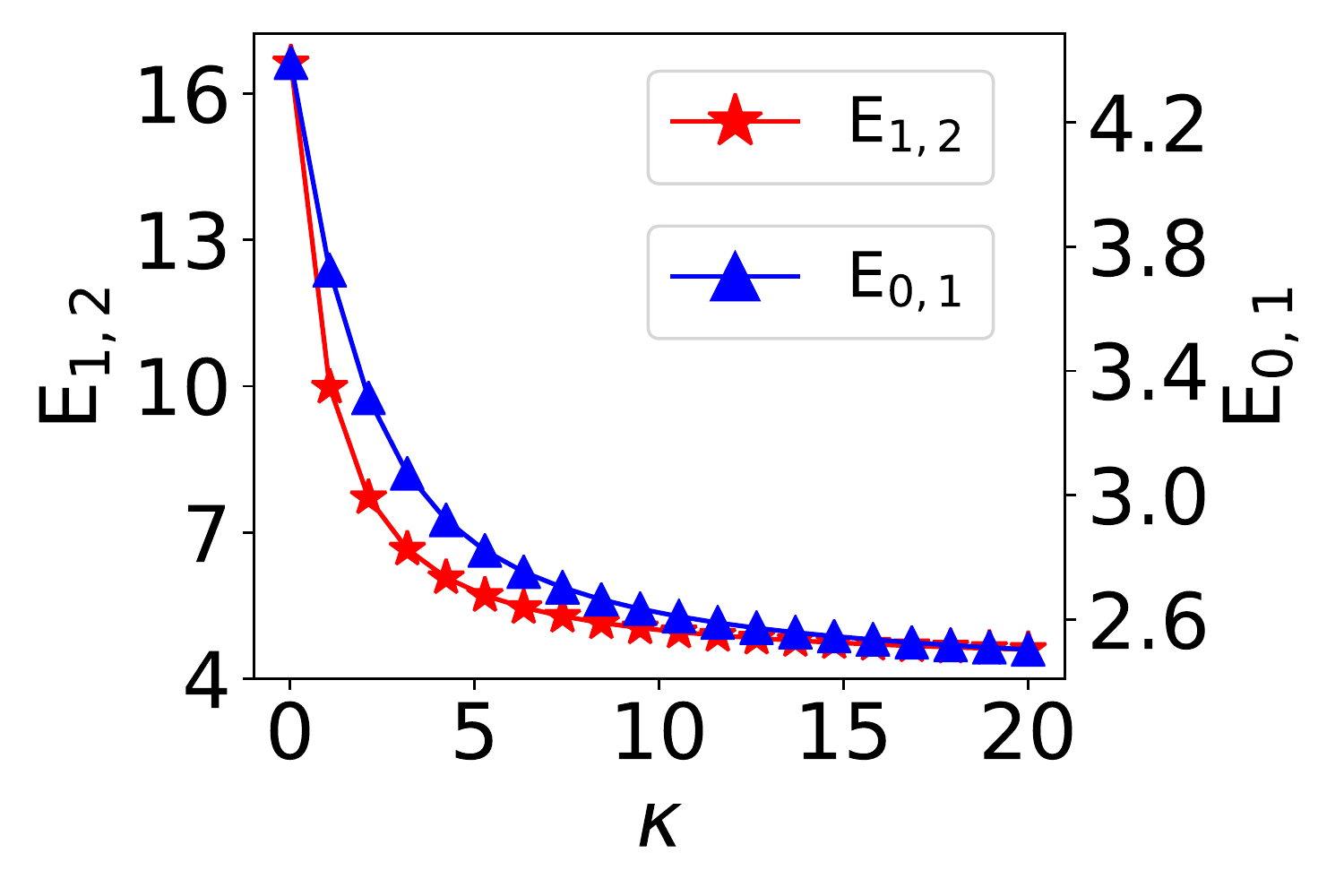}}
\vspace{-2cm}
\caption{Relative Test Efficiency}\label{fig:efficiency_illustration}
\vspace{-1.5cm}
\end{wrapfigure}
We can apply similar approach to compare the relative test efficiency ${\rm E}_{0,1}(\kappa)$ between 
$\operatorname{mKSD}^{(0)}$ and $\operatorname{mKSD}^{(1)}$. We plot numerical solutions in Figure \ref{fig:efficiency_illustration}. 
From Figure \ref{fig:efficiency_illustration}, we see that $\E_{1,2}$ and $\E_{0,1}$ both greater than $1$ for $\kappa \in (0,20)$. 
For further increase of $\kappa$, there is a trend for both relative efficiencies stabilizing at some value greater than $1$. Theoretical analysis for such limiting behaviour is of an interesting future topic.  
Although Figure \ref{fig:efficiency_illustration} shows that $\E_{0,1}(\kappa)>1$ for small perturbation from the null, i.e. $\kappa\in(0,20)$ which suggest the relative efficiency of $\operatorname{mKSD}^{(0)}$ is higher than the first order test $\operatorname{mKSD}^{(1)}$, it is usually not possible to compute MMD analytically and the normalized density is required. 

Intuitively, with sampling error of order $\sqrt{n}$ and $\rho(n)=n$ is chosen to compute Bahadur slope, the MMD computed from samples are less efficient to perform goodness-of-fit test compared to mKSD tests that directly access the unnormalized density, as shown in Figure \ref{fig:synthetic-problems}. Similar findings are also observed in other settings where MMD is considered to perform goodness-of-fit tests
% Instead, an empirical MMD is obtained via sampling. Due to sampling errors, the MMD based tests are less powerful than KSD tests which directly access the densities through derivative of log density 
\cite{liu2016kernelized,jitkrittum2017linear,yang2018goodness,yang2019stein, xu2020stein}.  
In addition, correctly sampling from Riemannian manifold is non-trivial and can be time-consuming for sample-based tests.

\section{More on Model Criticism}
In this section, we provide additional details on model criticism for wind data present in Section \ref{sec:wind_direction}. 
We fitted the model in Eq.\eqref{bvM} by using noise contrastive estimation \cite{uehara2020imputation} and our test does not find evidence to reject the fitted model, suggesting a good fit for the wind direction data. In addition, we consider the model without interaction term between two direcitons:
\begin{align}
\widetilde{q}(x_1,x_2 \mid \widetilde{\xi}) \propto & \exp \{\kappa_1 \cos(x_1-\mu_1) + \kappa_2 \cos(x_2-\mu_2)\}, \label{eq:product_vmf}
\end{align}
which is equivalent to model in Eq.\eqref{bvM} by imposing $\lambda_{12}=0$. This model can be viewed as product of marginal distributions of $x_1$ and $x_2$ and we refer as factorized model. Our test reject the null at test level $\alpha=0.05$ suggesting a poor fit of the factorized model. 

To further visualize the difference between models in Eq.\eqref{bvM} and Eq.\eqref{eq:product_vmf}, we plot histogram of each wind direction in Figure \ref{fig:wind_1d_marginal} and samples from the factorized model $\widetilde{q}$ in Figure \ref{fig:wind_2d_marginal} where 
no interactions are present between $x_1$ and $x_2$. Compare with the wind direction data, shown again in Figure \ref{fig:wind_joint}, we can see that Figure \ref{fig:wind_2d_marginal} differs the most at the regions of  $\tilde{x} = (x_1, x_2)=(2.8, \pi)$ (data model denser) and $\tilde{x}'=(x_1, x_2)=(1,1)$ ($\tilde{q}$ model denser). Such difference is captured by our optimized test locations from mFSSD in Figure \ref{fig:opt_J10}, where $\tilde{x}$ is at the region with 3 stars in a row and $\tilde{x}'$ is around the region with 4-stars in a row. It shows the effectiveness of mFSSD in distinguishing the differences between distributions. 
As $\tilde{q}$ is referred as imposing data model in Eq.\eqref{bvM} to be $0$,
a negative  $\lambda_{12} = -1.1274<0$ in the data model implies that positive $\sin(x_1 - \mu_1)\sin(x_2 - \mu_2)$ is less dense. With $\mu_1=1.1499=\mu_2$, $\sin(x_1 - \mu_1)\sin(x_2 - \mu2)$ is positive around the region the $\tilde{x}'$ making the data model less dense, as shown in Figure \ref{fig:wind_joint} and \ref{fig:wind_2d_marginal}.

\begin{figure*}[t!]
    \centering
    \subfigure[Wind direction model\label{fig:wind_joint}]{\includegraphics[width=0.312\textwidth,height=0.26\textwidth]{wind_direction_hist.pdf}}
    \subfigure[Marginals for each direction\label{fig:wind_1d_marginal}]{\includegraphics[width=0.32\textwidth,height=0.256\textwidth]{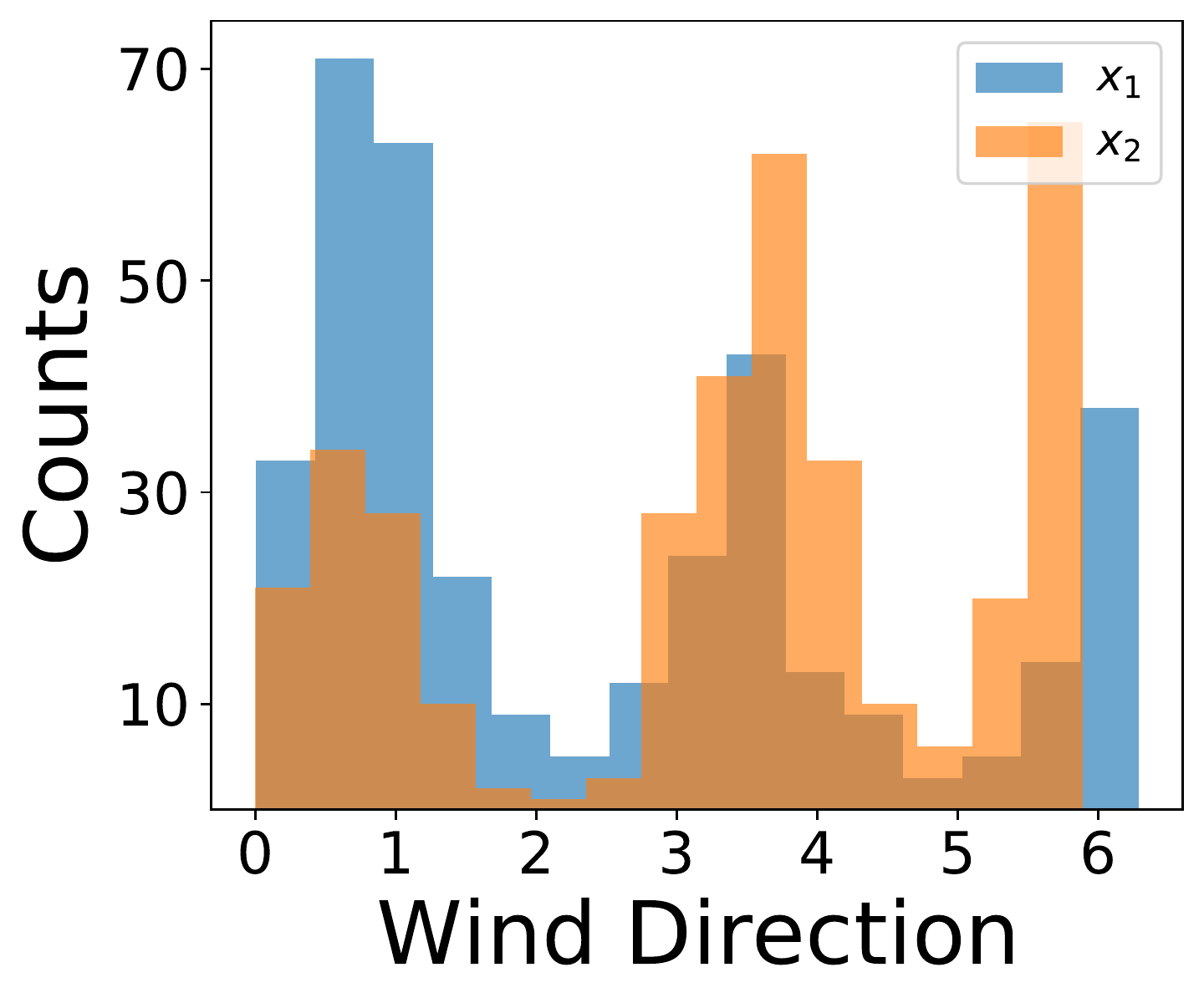}}
        \subfigure[Samples from the factorized model\label{fig:wind_2d_marginal}]{\includegraphics[width=0.312\textwidth,height=0.26\textwidth]{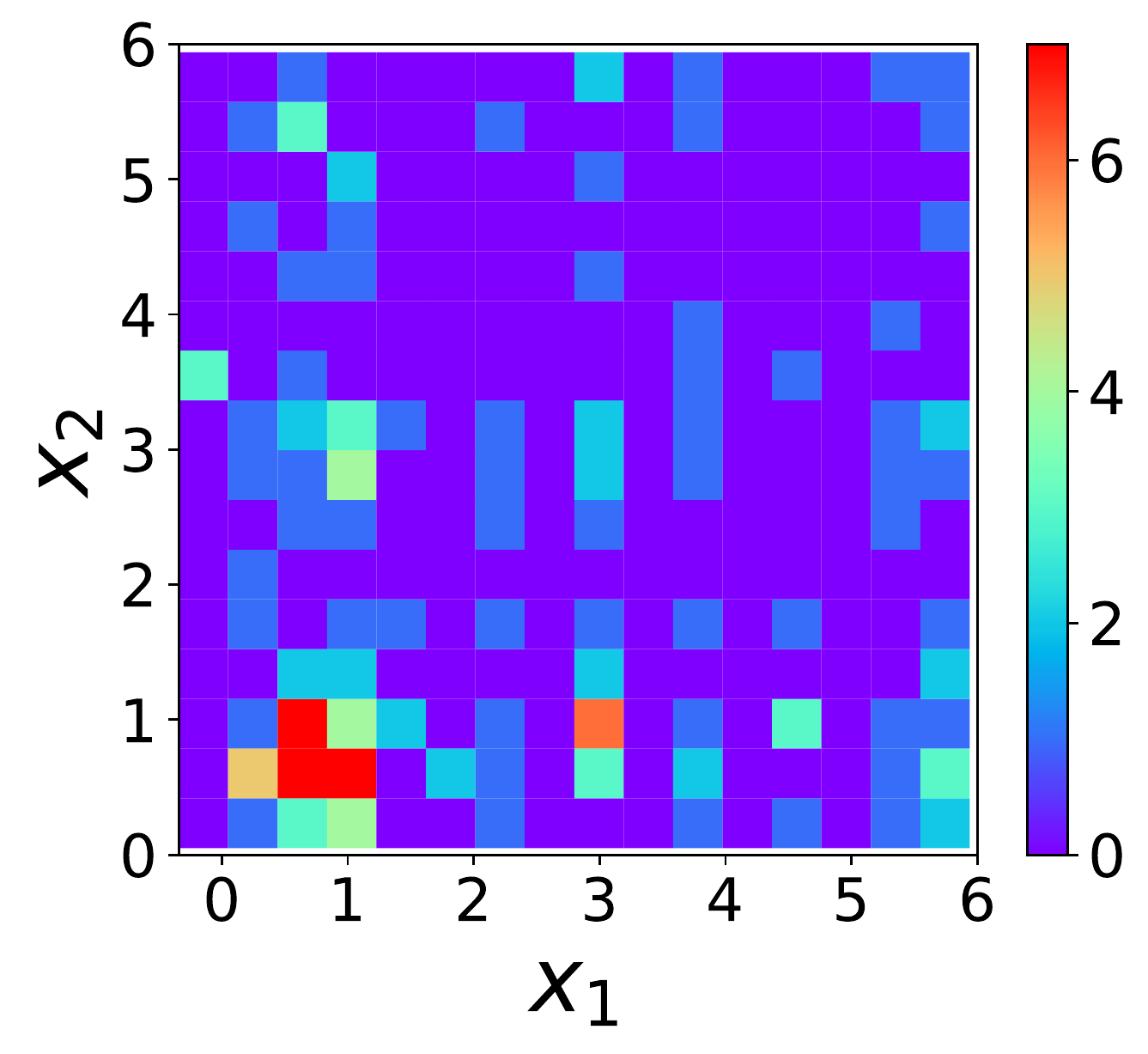}}

        % \hspace{-0.02\textwidth}
        \subfigure[Objective values for $J=10$ \label{fig:opt_loc_hist}]{\includegraphics[width=0.312\textwidth,height=0.26\textwidth]{wind_opt_loc_hist.pdf}}\hspace{0.03\textwidth}
        \subfigure[Optimized locations, for $J=10$\label{fig:opt_J10}]{\includegraphics[width=0.28\textwidth,height=0.256\textwidth]{wind_direction_J10.pdf}}\hspace{0.03\textwidth}
        \subfigure[Objective values for a single test location \label{fig:opt_loc_value}]{\includegraphics[width=0.28\textwidth,height=0.256\textwidth]{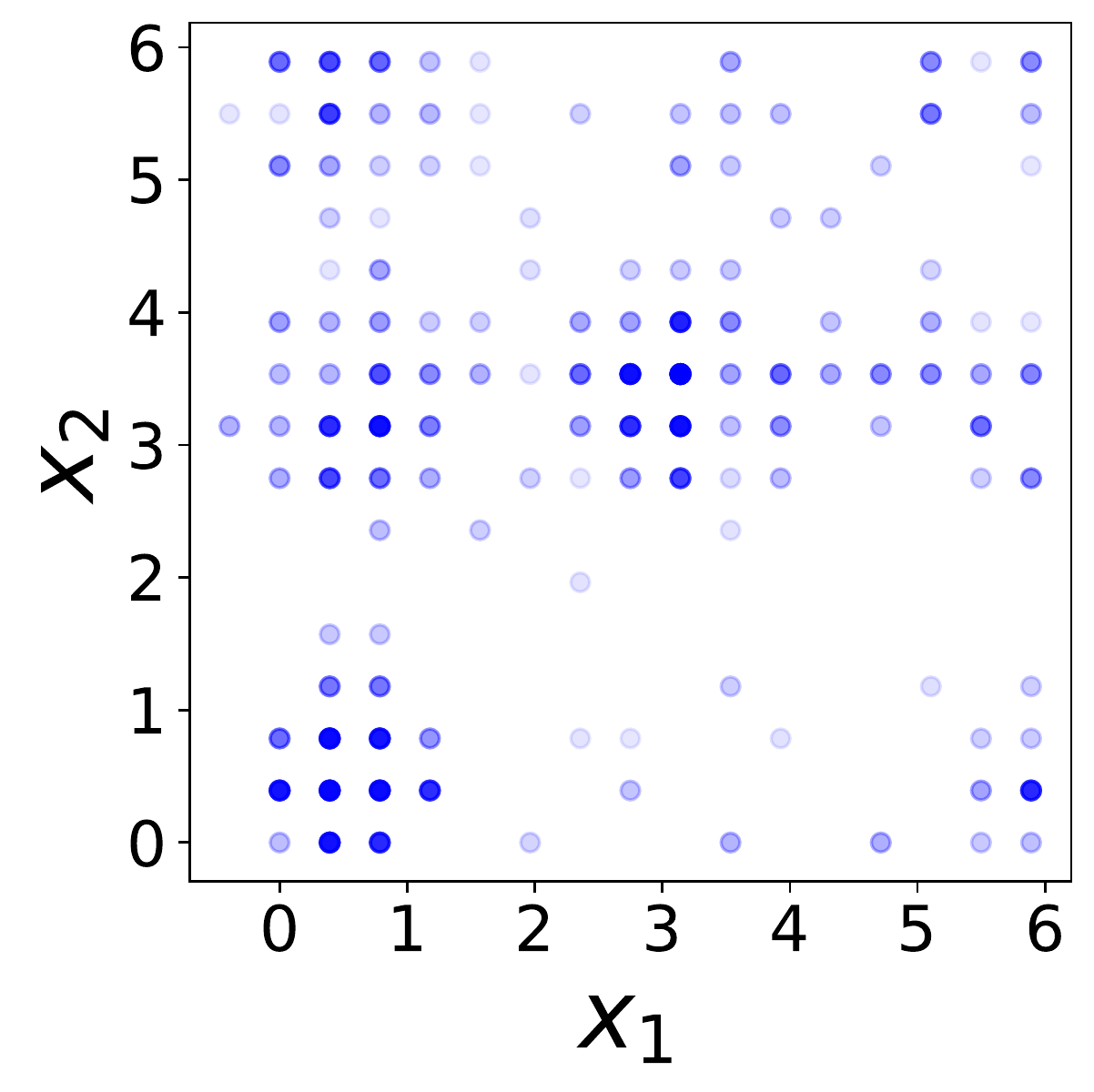}}
    \caption{Visualizing the fitted model and rejected model for wind direction data.
    }
    \label{fig:wind_interp}
\end{figure*}

\end{document}